\newif\ifEXTABSTRACT
\newif\ifFULL
\newcommand{\eav}[1]{#1}
\newcommand{\eav}[1]{}
\newcommand{\fv}[1]{#1}
\newcommand{\fv}[1]{}
\begin{document} 

\fv{
\title{\textbf{Secure Software Leasing \\ Without Assumptions}}
}

\eav{\title{\vspace{-1cm}\textbf{Secure Software Leasing Without Assumptions}}}

\author{
\phantom{XXXXXXXX}Anne Broadbent \thanks{University of Ottawa, Ottawa, Canada. \{\texttt{abroadbe,slord050,spodder\}@uottawa.ca} }
\and Stacey Jeffery \thanks{QuSoft and CWI, Amsterdam, Netherlands. \texttt{jeffery@cwi.nl}}
\and S\'ebastien Lord \footnotemark[1]\phantom{XXXXXXXX}
\and Supartha Podder \footnotemark[1]
\and Aarthi Sundaram \thanks{Microsoft Quantum, Redmond, USA. \texttt{aarthi.sundaram@microsoft.com}}
}


\date{}

\maketitle
\begin{abstract}
\eav{\vspace{-.75cm}}
Quantum cryptography is known for enabling functionalities that are unattainable using classical information alone.
  Recently, \emph{Secure Software Leasing (SSL)} has emerged as
 one of these areas of interest. Given a target circuit~$C$ from a circuit class, SSL produces an encoding of~$C$ that enables a recipient to evaluate $C$, and also enables the originator of the software to \emph{verify}  that the software has been \emph{returned} --- meaning that the recipient has relinquished the possibility of any further use of the software. Clearly,
 such a functionality is unachievable using classical information alone, since it is impossible to prevent a user from keeping a copy of the software. Recent results  have
shown the achievability of SSL using quantum information for a class of functions called \emph{compute-and-compare} (these are a generalization of the well-known \emph{point functions}).
These prior works, however all make use of setup or computational assumptions.
Here, we show that SSL is achievable for compute-and-compare circuits \emph{without any assumptions}.

Our technique involves the study of \emph{quantum copy-protection}, which is a notion related to~SSL, but where the encoding procedure inherently \emph{prevents} a would-be quantum software pirate from \emph{splitting} a single copy of an encoding for~$C$ into two parts, each of which enables a user to evaluate~$C$.
We show that point functions can be copy-protected \emph{without any assumptions}, for a novel security definition involving one  honest and one malicious evaluator; this is achieved  by showing that from any quantum message authentication code, we can derive such an  \emph{honest-malicious} copy-protection scheme.
We then show  that a generic honest-malicious copy-protection scheme 
implies SSL; by prior work, this yields SSL for compute-and-compare functions. 
\end{abstract}


\fv{
\section{Introduction} %
\label{sec:intro}
}

One of the defining features of quantum information is the \emph{no-cloning} principle, according to which it is not possible, in general, to take an arbitrary quantum state and produce two copies of it~\cite{Par70,WZ82,Die82}. This principle is credited for many
of the feats of quantum information in cryptography, including quantum key distribution (QKD)~\cite{BB84} and quantum money~\cite{Wie83}. \fv{(For a survey on quantum cryptography, see \cite{BS16}).}

The quantum no-cloning principle tells us that, in a certain sense, quantum information behaves more like a \emph{physical} object than a digital one: there are situations where  quantum information can be distributed and used, but it cannot be duplicated. One such example is quantum money~\cite{Wie83}, in which
a quantum system is used to encode a very basic type of information --- the ability to verify authenticity.
However, we can envisage quantum encodings that achieve richer levels of applicability. We thus define a hierarchy of ``uncloneable'' objects, where the basic notion provides only authenticity, and the topmost notion provides \emph{functionality}.
The uncloneability hierarchy includes:
\begin{itemize}
\item \textbf{Authenticity.} In the first (most basic) level, the uncloneability property can be used to \emph{verify} authenticity.
\item \textbf{Information.} Next, \emph{information} is made uncloneable, meaning that there is some underlying data that can be decoded, but there are limitations on the possibility of copying this data while it is encoded.
\item \textbf{Functionality.} At the top level of the hierarchy, a \emph{functionality} is made uncloneable, meaning that there are limitations on how many users can simultaneously evaluate the functionality.
\end{itemize}
\fv{
For both, the case of \emph{information} and \emph{functionality}, a type of \emph{verification} is possible (but optional): this verification is a way to confirm that a message or functionality is returned; after such verification is confirmed, further reading/use of the encoded information is impossible.
}

We emphasize that none of the  concepts in the hierarchy are possible in a conventional digital world, since classical  information can be copied. Thus the hierarchy is best understood intuitively at the level of a physical analogy where, for example, authenticity is verified by physical objects and functionalities are distributed in \emph{hardware} devices.

\paragraph{Achieving the hierarchy.}
We summarize below the known results on achievability of the hierarchy.
\begin{enumerate}
\item The \emph{authenticity} level of the hierarchy is the most well-understood, and it includes quantum money \cite{Wie83}, quantum coins \cite{MS10}, and publicly-verifiable quantum money \cite{AC12}.
\item Next, the \emph{information} level  includes  \emph{tamper-evident} encryption \cite{Got03} and \emph{uncloneable encryption} \cite{BL20}.
We comment here on a technique of Gottesman~\cite{Got03} that is relevant to our work. In \cite{Got03}, it is shown that tamper-evident encryption can be achieved  using the primitive of  \emph{Quantum Message Authentication (QMA)}~\cite{BCG+02} --- in other words, the \emph{verification} of quantum authentication not only gives a guarantee that the underlying plaintext is intact, but \emph{also} that no adversary can gain information on the plaintext, \emph{even if the key is revealed}. Uncloneable encryption is a notion that is complementary to tamper-evident encryption, and it focuses on \emph{preventing} duplication of an underlying plaintext. In \cite{BL20}, it is shown to be achievable in the Quantum Random Oracle Model (QROM).

\item  Finally, the \emph{functionality} level of the hierarchy was first discussed in terms of \emph{quantum copy protection} by Aaronson~\cite{Aar09}: here, a quantum encoding allows the evaluation of a function on a chosen input, but in a way that the number of \emph{simultaneous} evaluations is limited. In \cite{Aar09}, copy protection for a class of functions is shown to exist assuming a quantum oracle; this was improved (for a more restricted family of circuits) to a \emph{classical} oracle in \cite{ALLZZ20arxiv}. Further work in~\cite{CMP20} improved the assumption to the QROM.\footnote{This is an improvement, since a QROM does not depend on the circuit to be computed.}
    A related concept, also at the functionality level of the hierarchy, was recently put forward:
    \emph{Secure Software Leasing (SSL)}, where a quantum encoding allows evaluation of a circuit, while also enabling the originator to verify that the software is \emph{returned} (meaning that it can no longer be used to evaluate the function).  SSL was  first studied by
Ananth and La~Placa~\cite{AL20arxiv}, where it was shown that SSL could be achieved for \emph{searchable compute-and-compare circuits}\footnote{A circuit class $\mathcal{C}$ is a  \emph{compute-and-compare} circuit class if for every circuit in $\mathcal{C}$, there is an associated circuit~$C$ and string~$\alpha$ such that on input~$x$, the circuit outputs 1 if and only if $C(x) = \alpha$. \emph{Searchability} refers to the fact that there is an efficient algorithm that, on input $C \in \mathcal{C}$,  outputs an~$x$ such that $C(x)=\alpha$. From this point on, \emph{searchability} is an implicit assumption throughout this work.}; in order to achieve their result (which is with respect to an \emph{honest} evaluation), they make use of strong cryptographic assumptions: quantum-secure subspace obfuscators, a common reference string, and the difficulty of the Learning With Errors (LWE) problem.  Further work~\cite{CMP20}
improved the result on achievability for the same class of circuits, this time against \emph{malicious evaluations}, and in the QROM. Very recently,
\cite{KNY20arxiv} showed the achievability of SSL, based on LWE, against honest evaluators, and for classes of functions beyond \emph{evasive} functions.\footnote{Informally, \emph{evasive}  functions are the class of functions such that it is hard to find an accepting input, given only black-box access to a functions. Note that compute-and-compare functions are evasive.}

\end{enumerate}

\fv{
\subsection{Summary of Contributions}
\label{sec:intro-contributions}}
\eav{

{\vspace{.5cm}
\noindent\textbf{\hspace{-.2cm} Our contribution:} We solve two important open problems related to SSL and quantum copy protection.\looseness=-1}}

\fv{
Due to their foundational role in the study of uncloneability as well as for potential applications, SSL and copy protection are emerging as  important elements of quantum cryptography.  In this work, we solve two important open problems related to SSL and quantum copy protection.
}

\paragraph{Secure Software Leasing.}

We show how to construct an SSL scheme for compute-and-compare circuits, against a malicious evaluator.
  Ours is the first scheme that makes no assumptions \fv{--- there are no setup assumptions, such as the QROM or a common reference string and no computational assumptions, such as one-way functions or the LWE assumption}. We thus show for the first time that SSL is achievable, unconditionally. A compromise we make in order to achieve this is the use of a natural but weaker notion of correctness \emph{with respect to a distribution}.
We note that general SSL was shown to be impossible~\cite{AL20arxiv}, and that~\cite{Aar09} mentions how \emph{learnable} functions cannot be copy protected. It is thus natural that we focus our efforts on achieving SSL for compute-and-compare circuits, which is a family of functions that is not learnable.

In more detail, we follow the security notion of~\cite{CMP20}, which postulates a game between a challenger, and  a pirate Pete. Upon sampling a circuit  from a given distribution, the challenger encodes the circuit and sends it to Pete. Pete then produces a register that he returns to the challenger who performs a \emph{verification}; upon successful verification, we continue the game (otherwise, we abort), by presenting to Pete a challenge input~$x \in \{0,1\}^n$ (chosen according to a given distribution).
The scheme is \emph{$\epsilon$-secure} if we can bound the probability that Pete correctly evaluates the circuit on the challenge input~$x$, to be within $\epsilon$
of his trivial guessing probability. Here, trivially guessing means that Pete answers the challenge by seeing only $x$ i.e., disregarding all other information obtained by interacting with the challenger. Thus, security is defined relative to the distribution on the circuits and on the challenges.
For SSL, $\eta$-correctness  is defined  with respect to an input distribution, and means that, up to some error term~$\eta$,  the honest evaluation on an encoded circuit produces the correct outcome, \emph{in expectation}.\footnote{\label{footnote:worst-case}This notion is weaker than the more common notion of correctness that holds for \emph{all} inputs. However, in \Cref{sec:generic},  we give evidence that achieving this stronger notion of correctness may be possible,  by showing that for the standard notion of copy-protection (against two malicious evaluators), correctness in expectation implies worst-case correctness, which would then imply worst-case correctness for SSL.}

We show how to achieve SSL with respect to the  uniform distribution on point functions, and the challenge distribution which samples uniformly from the distribution where the correct response is~$0$ or~$1$ (with equal probability) ---  denoted $\Dhalf_p$. Our technique is a reduction from SSL to \emph{honest-malicious} copy protection, as well as a new construction for quantum honest-malicious copy protection (with respect to essentially the same distributions as stated above). Prior work noted, informally, that copy protection implies SSL~\cite{AL20arxiv}. Here, we formally show that our new and weaker (and thus easier-to-achieve) notion of copy protection (see below)  implies~SSL. Our work focuses on achieving SSL for point functions; by applying our result with~\cite[Theorem 6]{CMP20} this implies SSL for compute-and-compare circuits.

\paragraph{Honest-Malicious Copy Protection.}
We define a new security model for copy protection: \emph{honest-malicious} copy protection. Here, we consider a game between a challenger, a pirate (Pete), and two evaluators. Importantly, the first evaluator, Bob, is \emph{honest} (meaning that he will execute the legitimate evaluation procedure) and the second evaluator, Charlie, is \emph{malicious}. In copy protection, we want to bound the probability that, after each receiving a quantum register from Pete, who takes as input a single copy protected program, the two evaluators (who cannot communicate), are \emph{both} able to correctly evaluate the encoded circuit. Following \cite{CMP20}, this is formalized by a game, parameterized by a distribution on the input circuits, and a corresponding  challenge distribution on pairs of $n$-bit strings. A challenger samples a circuit, encodes it using the copy protection scheme and sends the encoding to Pete who creates the two registers. Then a challenge pair $(x_1,x_2)$ is sampled from the challenge distribution; Bob receives~$x_1$ while Charlie receives $x_2$. They \emph{win} if they each produce the correct output of the original circuit evaluated on $x_1$ and~$x_2$, respectively. An honest-malicious copy protection scheme is \emph{$\epsilon$-secure} for the given distributions if the probability that the evaluators win the game is within $\epsilon$
 of the success probability of the trivial strategy that is achievable when Bob gets the full encoding and Charlie guesses to the best of his ability without interacting with Pete. As in the case of SSL, $\eta$-correctness for copy protection is defined with respect to an input distribution, and means that, up to some constant $\eta$,  the honest evaluation on an encoded circuit produces the correct outcome, \emph{in expectation}\footnote{See \Cref{footnote:worst-case}.}.

We establish the relevance of honest-malicious copy protection by showing that, for general functions, honest-malicious copy protection implies SSL.

\fv{
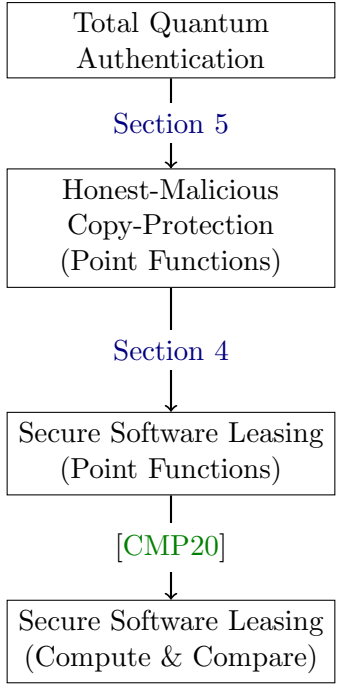
\begin{figure}
\begin{center}
\begin{tikzpicture}

\node[draw,align=center,text width=4.1cm] (HMCPPF)
	at ( 0,  0    )
	{Honest-Malicious Copy-Protection \\ (Point Functions)};

\node[draw,align=center,text width=4.1cm] (QAS)
	at ($ (HMCPPF) + (0,2.5) $)
	{Total Quantum Authentication};

\node[draw,align=center,text width=4.1cm] (SSLPF)
	at ($ (HMCPPF) + (0,-3) $)
	{Secure Software Leasing \\ (Point Functions)};

\node[draw,align=center,text width=4.1cm] (SSLCC)
	at ($ (SSLPF) + (0,-2.5) $)
	{Secure Software Leasing \\ (Compute \& Compare)};

\draw[->,thick]
	(QAS)
	to node[fill=white] {\cref{sec:auth-CP}}
	(HMCPPF);

\draw[->,thick]
	(HMCPPF)
	to node[fill=white] {\cref{sec:generic}}
	(SSLPF);

\draw[->,thick]
	(SSLPF)
	to node[fill=white] {{\cite{CMP20}}}
	(SSLCC);

\end{tikzpicture}
\end{center}
\caption{Relations between various notions considered in this work.}
\label{fig:links}
\end{figure}
}

\eav{
\begin{figure}
\begin{center}
\input{relations_small.tex}
\vspace{-1.5\baselineskip}
\end{center}
\caption{Relations between various notions considered in this work.}
\label{fig:links}
\end{figure}
}

In order to complete our main result, we show how to achieve honest-malicious copy protection for point functions, where the challenge distribution is
$(\Dhalf_p \times \Dhalf_p$), and correctness is also with respect to $\Dhalf_p$. To the best of our knowledge, this is the first  unconditional copy protection scheme; via the above reduction, it yields the first SSL scheme without assumptions.
See \Cref{fig:links} for a pictorial representation of the sequence of results. Our idea is to use a generic \emph{quantum message authentication scheme (QAS)} that satisfies the \emph{total authentication} property~\cite{GYZ17}. Briefly, a QAS is a private-key scheme with an encoding and decoding procedure such that the probability that the decoding accepts \emph{and} the output of the decoding in \emph{not} the original message is small. Security of a \emph{total} QAS is defined in terms of the existence of a \emph{simulator} that
reproduces the auxiliary register that an adversary has after attacking an encoded system, \emph{whenever} the verification accepts. An important feature of a total~QAS is that essentially no information about the key is leaked if the client accepts the authentication.

The main insight for the construction of honest-malicious copy protection for point functions from a total~QAS is to associate the key to the QAS with the point~$p$ in the point function. A copy protected program is thus an encoding of an arbitrary (but fixed) state~$\ket{\psi}$  into a total QAS, using~$p$ as the key. Given $p'$, the evaluation of the point function encoding is the QAS verification~\emph{with the key $p'$}. We thus get correctness in the case $p'=p$ from the correctness of the QAS; correctness in expectation for $p'\neq p$ follows with a bit more work. Importantly, the \emph{total} security property of the QAS gives us a handle on the auxiliary register that an adversary holds, \emph{in the case that the verification accepts}. Since Bob is honest, his evaluation  corresponds to the QAS verification map; in the case that Bob gets the challenge $x_1=p$, we  use the properties of the~total QAS to reason about Charlie's register, and we are able to show that Charlie's register cannot have much of a dependence on~$p$, which is to say that Charlie's outcome is necessarily independent of~$p$. This is sufficient to conclude that Bob and Charlie cannot win the copy protection game for a uniform point with probability much better than the trivial strategy in which Charlie makes an educated guess, given the challenge~$x_2$.
We note that total authentication is known to be satisfied by a scheme based on 2-designs~\cite{AM17}, as well as by the \emph{strong trap code}~\cite{DS18}.
Putting all of the above together, we obtain our main result, which is an explicit SSL scheme for point functions $P_p:\{0,1\}^n \to \{0,1\}$ which is $O(2^{-n})$-correct (on average) and $O(2^{-n})$-secure, under uniform sampling of~$p$ and where the challenge distribution is~$\Dhalf_p$.\footnote{This is achieved by instantiating the copy-protection scheme from \Cref{sec:auth-CP} with a total quantum authentication scheme given by \Cref{th:QAS-existence} and using it in the SSL construction of \Cref{sec:SSL}.} We note  the similarity between our approach for achieving honest-malicious copy protection and the approach in~\cite{Got03} in achieving tamper-evident encryption, based on quantum authentication codes.
We also mention a similarity with the blueprint in~\cite{CMP20}, which also produces a copy protected program starting from a private-key encryption scheme (in this case, the one of \cite{BL20}), associates a point with the key, and uses a type of verification of the integrity of the plaintext after decryption as the evaluation method.\looseness=-1

\fv{
\paragraph{Too good to be true?}
We emphasize that our results require  no assumptions at all, which is to say that the result is in the standard model (as opposed to, say  the QROM), and does not rely on any assumption on the computational power of the adversary. That either copy protection or SSL should be achievable in this model is very counter-intuitive, hence we explain here how we circumvent related impossibility results. In short, our work  strikes a delicate balance between correctness and security, in order to achieve the best of both worlds.

Prior work~\cite{Aar09} defines quantum copy protection assuming the adversary is given \emph{multiple identical} copies of the same copy protected state. Under this model, it is possible to show how an unbounded adversary can distinguish between the copy protected programs for different functions~\cite{Aar09}, which makes unconditionally secure copy protection impossible. In our scenario, we allow only a \emph{single} copy of the program state, hence this reasoning is not applicable.

Next, consider a scheme  (either copy protection or SSL) that is \emph{perfectly correct}, meaning that the outcome of the evaluation procedure is a deterministic bit. Clearly, such a scheme cannot be secure against unbounded adversaries, since \emph{in principle}, there is a sequence of measurements that an unbounded adversary can perform (via purification and rewinding), in order to perfectly obtain the truth table of the function. We conclude that perfectly correct schemes cannot satisfy our notion of unconditional security for copy protection.

We note that our scheme is, by design, not perfectly correct. This can be seen by reasoning about the properties of the~QAS: in any~QAS, it is necessary that, for a fixed encoding with key~$k$, there are a number of keys on which the verification accepts. The reason why this is true is similar to the argument above regarding perfect correctness: if this were not true, then the QAS (which is defined with respect to unbounded adversaries) would not be secure, since an adversary could in principle find~$k$ by trying all keys (coherently, so as to not disturb the quantum state) until one accepts. Somewhat paradoxically, it is this imperfection in the correctness that thus allows the unconditional security. Another way to understand the situation is that the honest evaluation in our copy protection (or SSL) scheme will unavoidably slightly damage the quantum encoding (even if performed coherently). In a brute-force attack, these errors necessarily accumulate to the point of rendering the program useless, and therefore the brute-force attack fails.

}

\fv{
\subsection{Open Problems} %
\label{sec:open-problems}
Our work leaves open a number of interesting avenues. For instance:
\begin{mylist}

\item \label{open:1}Could we show the more standard notion of correctness of our scheme, that is, correctness with respect to \emph{any} distribution?

\item  Is unconditional SSL achievable for a richer class of functions?

\item \label{open:3}Can our results on copy protection be extended to hold against \emph{two} malicious evaluators?

\end{mylist}
In \cref{sec:mm-correct}, we show  that \ref{open:1} and \ref{open:3} are related, by establishing that a point function copy protection scheme that is secure against two malicious evaluators and satisfies average correctness can be turned into a scheme that also satisfies the more standard notion of correctness.

\subsection{Acknowledgements} %
 We would like to thank Christian Majenz and Martti Karvonen for related discussions.
This material is based upon work supported by the Air Force Office of Scientific Research under award number FA9550-17-1-0083, Canada's   NFRF and NSERC, an Ontario ERA, and the University of Ottawa’s Research Chairs program.
SJ is a CIFAR Fellow in the Quantum Information Science program.

\subsection{Outline} %
The remainder of this document is structured as follows. In \Cref{sc:prelims}, we give background information on notation, basic notions and quantum message authentication. In \Cref{sec:definitions}, we define correctness and security for quantum copy protection and SSL. In \Cref{sec:generic}, we show the connection between malicious-malicious security, and standard correctness, as well as the links between honest-malicious copy protection and SSL. Finally, our main technical construction of honest-malicious copy protection from any total QAS is given in \Cref{sec:auth-CP}.
}

\section{Preliminaries} %
\label{sc:prelims}      %

\subsection{Notation} %
\label{sc:notation}   %

All Hilbert spaces are complex and of finite dimensions. We usually
denote a Hilbert space using a sans-serif font such as $\tsf{S}$ or
$\tsf{H}$. We will often omit writing the tensor symbol when taking the
tensor product of two Hilbert spaces, i.e.:
$\tsf{A} \tensor \tsf{B} = \tsf{AB}$. We use the Dirac
notation~\cite{Dir39} throughout, which is to say that
$\ket{\psi} \in \tsf{H}$ denotes a unit vector and
$\bra{\psi} : \tsf{H} \to \C$ denotes the corresponding linear map in
the dual space. Finally, Hilbert spaces may be referred to as
``registers'', acknowledging that they sometimes model physical objects
which may be sent, kept, discarded, etc., by parties participating in
quantum information processing tasks.

The set of linear operators, unitary operators, and density operators on
a Hilbert space $\tsf{H}$ are denoted by $\mc{L}(\tsf{H})$,
$\mc{U}(\tsf{H})$, and $\mc{D}(\tsf{H})$ respectively. A linear operator
will often be accompanied by a subscript indicating the Hilbert space on
which it acts. This will be useful for bookkeeping and to occasionally
omit superfluous identities. For example, if
$L_\tsf{A} \in \mc{L}(\tsf{A})$ and $\ket{\psi}_\tsf{AB} \in \tsf{AB}$,
then
\begin{equation}
	L_\tsf{A} \ket{\psi}_\tsf{AB}
	=
	\left(L_\tsf{A} \tensor I_\tsf{B}\right)
	\ket{\psi}_\tsf{AB}.
\end{equation}

For a function $f : X \to \C$, for some finite set $X$, when no distribution on $x$ is clear from context, we write
\begin{equation}
	\E_x f(x) = \frac{1}{\abs{X}} \sum_{x \in X} f(x).
\end{equation}
In other words, when there is no implicit distribution associated with $x$, we write $\E_x f(x)$ to denote the expectation of $f(x)$
if $x$ is sampled uniformly at random from the domain of $f$.

For a distribution $D$ on a set $S$, we will use the notation $x\leftarrow D$ to denote that variable $x$ is sampled from $D$, and $D(x)$ to denote the probability that a given $x\in S$ is sampled.

Throughout this work, we will denote a family of Boolean circuits on $n$ bits as $\calC$. The circuit families of specific interest in this work are point functions and compute-and-compare-functions. These are defined below.

Let $n \in \N$ and $p \in \bool^n$. A point function $P_p$ takes as input $x \in \bool^n$ and is defined as:
\begin{equation}
	P_p(x) = \begin{cases}
	1 & \text{if } x = p, \\
	0 & \text{otherwise.}
	\end{cases}
\end{equation}

A closely related but more general class of circuits than point functions are compute-and-compare circuits ($\CC$). Formally, for a function $f : \bool^n \rightarrow \bool^m$ and $y \in \bool^m$ in its range, the corresponding compute-and-compare function $\CC_y^f$ takes $x \in \bool^n$ as input and is defined as:
\begin{equation}
	\CC_y^f(x) = \begin{cases}
	1 & \text{if } f(x) = y, \\
	0 & \text{otherwise.}
	\end{cases}
\end{equation}
Clearly, when $f$ is the identity map, we recover point functions under this definition.

\subsection{Pairwise Independent Permutations} %
\label{sc:pairwise}                            %

The notion of pairwise independent hash functions, first defined by
Carter and Wegman \cite{WC81} under the name of strongly universal$_2$
functions, is a commonly used tool in cryptography. Essentially, a
pairwise independent hash function is a family of functions
$\{h_r : A \to B\}_{r\in{\cal R}}$ which behaves like the set of all
functions $\{h : A \to B\}$ if we are limited to only observing two
input-output pairs from these functions.

A closely related notion is the idea of a pairwise independent
permutation (\textit{e.g.}: \cite{NR99}), which we recall below.

\begin{definition}[Pairwise Inependent Permutation]
A pairwise independent permutation on $\{0,1\}^n$ is a
family of functions $\left\{h_r : \{0,1\}^n \to \{0,1\}^n\right\}_{r \in {\cal R}}$ for some finite set ${\cal R}$ such that
\begin{enumerate}
	\item
		every $h_r$ is a permutation and
	\item
		for all distinct $x_0, x_1 \in \{0,1\}^n$ and distinct $y_0,y_1\in\{0,1\}^n$,
		\begin{equation}
		\label{eq:pairwise-TVD}
				\Pr_{r}\left[
					(h_r(x_0), h_r(x_1)) = (y_0,y_1)
				\right]
				=
				\frac{1}{2^n}\frac{1}{2^n-1}
		\end{equation}
		where $r$ is sampled uniformly at random from ${\cal R}$.
\end{enumerate}
\end{definition}

A straightforward construction of a pairwise independent permutation on
bit strings $\{0,1\}^n$, mentioned in \cite{NR99}, is to consider all
functions in the finite field of $2^n$ elements of the form
$x \mapsto m \cdot x + b$ where $r = (m, b)$ and $m$ is not the zero
element.

\subsection{Trace Distance} %
\label{sc:trace-distance}   %

We recall the definition of the trace norm and the trace distance
between linear operators. Our definitions are taken from \cite{Wat18}.

\begin{definition}
\label{df:trace-norm}
Let $\tsf{A}$ be a Hilbert space. For any linear operator
$X \in \mc{L}(\tsf{A})$, we define the \emph{trace norm} of $X$ as
\begin{equation}
\label{eq:trace_norm}
	\norm{X}_1
	=
	\max_{U \in \mc{U}(\tsf{A})}\abs{\ip{U}{X}}
\end{equation}
where $\ip{U}{X} = \Tr\left[U^\dag X\right]$. We include the subscript
$1$ to recall that this is the Schatten-$1$ norm.
\end{definition}

From the trace norm, we can now define the trace distance.

\begin{definition}
Let $\tsf{A}$ be a Hilbert space. The \emph{trace distance} between any
two linear operators on this space $X, Y \in \mc{L}(\tsf{A})$ is given
by
\begin{equation}
	\Delta(X, Y)
	=
	\frac{1}{2}\norm{X - Y}_1.
\end{equation}
If $\Delta(X, Y) \leq \epsilon$, we may write $X \approx_\epsilon Y$.
\end{definition}

Note that our definition of the trace distance differs from the one
offered in \cite{Wat18} by including a factor of $\frac{1}{2}$. This
factor is common in quantum information (e.g.: \cite{NC00}) as it
ensures that the trace distance between two density operators $\rho$ and
$\sigma$, i.e.: states of quantum systems, is in the interval $[0,1]$.

Finally, we give a technical lemma pertaining to the trace distance
between two bipartite states written as mixtures where the state on one
of the subsystems is always given by a pure state taken from some set of
orthogonal states. For completeness, a proof of this lemma is included
in \cref{sc:prelim-proofs}.

\begin{lemma}
\label{th:trace-distance-orthogonal}
Let $\tsf{A}$ and $\tsf{B}$ be Hilbert spaces. Let
$\{\ket{\psi_j}\}_{j \in J} \subseteq \tsf{A}$ be a collection of
orthogonal states. Then, for any collections of linear operators
$\{X_j\}_{j \in J}$ and $\{Y_j\}_{j \in J}$ on $\tsf{B}$, we have that
\begin{equation}
	\Delta\left(
		\sum_{j \in J} \ketbra{\psi_j} \tensor X_j,
		\sum_{j \in J} \ketbra{\psi_j} \tensor Y_j
	\right)
	=
	\sum_{j \in J} \Delta\left(
		X_j,
		Y_j
	\right)
	.
\end{equation}
\end{lemma}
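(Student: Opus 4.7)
The plan is to reduce the problem to showing the single trace-norm identity
\[
	\Bigl\lVert \sum_{j \in J} \ketbra{\psi_j} \otimes Z_j \Bigr\rVert_1
	= \sum_{j \in J} \lVert Z_j \rVert_1,
\]
for operators $Z_j$ on $\tsf{B}$, applied to $Z_j = X_j - Y_j$; then multiplying by $\frac{1}{2}$ gives the statement. The intuition is that the orthogonality of the $\ket{\psi_j}$ makes the combined operator block-diagonal along $\tsf{A}$, so its Schatten-$1$ norm decouples into a sum over the blocks.

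For the upper bound, I would invoke the triangle inequality of the trace norm and the multiplicativity of the trace norm under tensor products, using that $\lVert \ketbra{\psi_j} \rVert_1 = 1$ for a unit vector. This immediately gives $\lVert \sum_j \ketbra{\psi_j} \otimes Z_j\rVert_1 \leq \sum_j \lVert Z_j \rVert_1$, with no use of orthogonality yet.

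The matching lower bound is where orthogonality enters, and this is the main (though mild) obstacle: one must produce a single unitary $U$ on $\tsf{AB}$ that simultaneously witnesses the trace norm of every block. Using the variational formula from \cref{df:trace-norm}, for each $j$ I would pick a unitary $U_j \in \mc{U}(\tsf{B})$ with $\ip{U_j}{Z_j} = \lVert Z_j \rVert_1$ (real and nonnegative, by absorbing a scalar phase into $U_j$). Then, denoting by $Q$ the projector onto the orthogonal complement of $\spn\{\ket{\psi_j}\}_{j\in J}$ in $\tsf{A}$ and fixing any unitary $V$ on $\tsf{B}$, set
\[
	U = \sum_{j \in J} \ketbra{\psi_j} \otimes U_j \;+\; Q \otimes V.
\]
A short computation using $U^\dagger U = \bigl(\sum_j \ketbra{\psi_j} + Q\bigr) \otimes I = I$ confirms $U$ is unitary, and the orthogonality of the $\ket{\psi_j}$ collapses all cross terms in $\ip{U}{\sum_j \ketbra{\psi_j}\otimes Z_j}$ to yield $\sum_j \ip{U_j}{Z_j} = \sum_j \lVert Z_j \rVert_1$. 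Since the trace norm is the maximum of such inner products over all unitaries, the lower bound follows.

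Combining the two bounds gives the identity, and substituting $Z_j = X_j - Y_j$ together with $\Delta(\cdot,\cdot) = \frac{1}{2}\lVert \cdot \rVert_1$ proves the lemma. No continuity or limit arguments are needed since $J$ is finite (as all Hilbert spaces in this paper are finite-dimensional).
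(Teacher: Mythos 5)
Your proof is correct and follows essentially the same route as the paper's: both reduce to the trace-norm identity for $\sum_j \ketbra{\psi_j}\otimes Z_j$, get the upper bound from the triangle inequality plus $\lVert A \otimes B\rVert_1 \leq \lVert A\rVert_1 \lVert B\rVert_1$, and get the lower bound by assembling a block unitary $U$ from per-block witnesses $U_j$ padded by an arbitrary unitary on the orthogonal complement (the paper simply takes $V = I$ there). The only cosmetic difference is that you invoke multiplicativity of the trace norm under tensoring rather than the submultiplicativity-plus-partial-trace argument the paper uses to establish the inequality $\lVert A \otimes B\rVert_1 \leq \lVert A\rVert_1\lVert B\rVert_1$; both are valid.
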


Noting that for any scalar $s$ we have that
$\Delta(s \cdot X, s \cdot Y) = \abs{s} \cdot \Delta(X, Y)$, we obtain
as a direct corollary to the above lemma that
\begin{equation}
	\Delta\left(
		\E_{j \in J} \ketbra{\psi_j} \tensor X_j,
		\E_{j \in J} \ketbra{\psi_j} \tensor Y_j
	\right)
	=
	\E_{j \in J} \Delta\left(
		X_j,
		Y_j
	\right)
	.
\end{equation}

\subsection{Quantum Authentication} %
\label{sc:qas}                      %

We recall the definition of total quantum authentication from
\cite{GYZ17} and highlight a few properties of such schemes.

\begin{definition}
\label{df:QAS}
An authentication scheme $\tsf{QAS}$ for the Hilbert space $\tsf{M}$ is
a pair of keyed CPTP maps
\begin{equation}
\label{eq:QAS-syntax}
	\tsf{QAS.Auth}_k : \mc{L}(\tsf{M}) \to \mc{L}(\tsf{Y})
	\qq{and}
	\tsf{QAS.Ver}_k : \mc{L}(\tsf{Y}) \to \mc{L}(\tsf{MF})
	\qq{for keys}
	k \in \mc{K}
\end{equation}
and where $\tsf{F}$ admits $\{\ket{\text{Acc}}, \ket{\text{Rej}}\}$ as
an orthonormal basis. Moreover, these maps are such that for all
states $\rho \in \mc{D}(\tsf{M})$ and all keys $k \in \mc{K}$ we have
that
\begin{equation}
\label{eq:QAS-correctness}
	\tsf{QAS.Ver}_k \circ \tsf{QAS.Auth}_k(\rho)
	=
	\rho \tensor \ketbra{\text{Acc}}.
\end{equation}
\end{definition}

To facilitate our analysis, we will make the same simplifying
assumptions as in \cite{GYZ17} on any quantum authentication scheme
considered in this work.
\begin{enumerate}
	\item
		We assume that $\tsf{QAS.Auth}_k$ can be modeled by an isometry.
		Specifically, we assume that
		\begin{equation}
		\label{eq:QAS-assumption-iso}
			\tsf{QAS.Auth}_k(\rho) = A_k \rho A_k^\dag
		\end{equation}
		for some isometry $A_k \in \mc{L}(\tsf{M}, \tsf{Y})$.
	\item
		For all keys $k \in \mc{K}$, as $A_k$ is an isometry, $A_kA_k^\dagger$ is the
		projector onto the image of $A_k$. In other words, it projects
		onto valid authenticated states for the key $k$. We then assume
		that $\tsf{QAS.Ver}_k$ is given by the map
		\begin{equation}
		\label{eq:QAS-assumption-ver}
			\rho
			\mapsto
			A_k^\dag  \rho  A_k
			\tensor
			\ketbra{\text{Acc}}
			+
			\Tr\left[\left(I - A_kA_k^\dag\right)\rho\right]
			\cdot
			\frac{I}{\dim(\tsf{M})} \tensor \ketbra{\text{Rej}}.
		\end{equation}
		In other words, $\tsf{QAS.Ver}_k$ verifies if the state is a
		valid encoded state. If it is, then it inverts the
		authentication procedure and adds an ``accept'' flag. If it is
		not, then it outputs the maximally mixed state and adds a
		``reject'' flag.
\end{enumerate}

Finally, we will also define the map $\tsf{QAS.Ver}'_k$ by
\begin{equation}
	\rho
	\mapsto
	\left(I_\reg{M} \tensor \bra{\text{Acc}}_\reg{F}\right)
	\tsf{QAS.Ver}_k(\rho)
	\left(I_\reg{M} \tensor \ket{\text{Acc}}_\reg{F}\right)
	=
	A_k^\dag\rho A_k
	.
\end{equation}
Essentially, this map outputs a subnormalized state corresponding to the
state of the message register $\reg{M}$ conditioned on the verification
procedure accepting the state. In particular, note that the probability
that the verification procedure accepts the state $\rho$ when using the
key $k$ is given by $\Tr\left(\tsf{QAS.Ver}'_k(\rho)\right)$.

\Cref{df:QAS} does not make any type of security guarantee on an
authentication scheme. It only specifies a syntax, \cref{eq:QAS-syntax},
and a correctness guarantee, \cref{eq:QAS-correctness}. The following
definition describes the security notion of $\epsilon$-total
authentication. Note that this security definition differs from some
early notions of security for quantum authentication schemes
\cite{BCG+02,DNS12}.

\begin{definition}
\label{df:QAS-security}
An authentication scheme $\tsf{QAS}$ is an $\epsilon$-total
authentication scheme if for all CPTP maps
$\Phi : \mc{L}(\tsf{YZ}) \to \mc{L}(\tsf{YZ})$ there exists a completely
positive trace non-increasing map
$\Psi : \mc{L}(\tsf{Z}) \to \mc{L}(\tsf{Z})$ such that
\begin{equation}
\label{eq:total-authentication}
	\E_{k \in \mc{K}}
		\ketbra{k}
		\tensor
		\tsf{QAS.Ver}'_k \circ \Phi \circ \tsf{QAS.Auth}_k(\rho)
	\approx_\epsilon
	\E_{k \in \mc{K}}
		\ketbra{k}
		\tensor
		\tsf{QAS.Ver}'_k \circ \Psi \circ \tsf{QAS.Auth}_k(\rho)
\end{equation}
for any state $\rho \in \mc{D}(\tsf{MZ})$.
\end{definition}

A key difference between the \cite{GYZ17} security notion of
authentication and previous notions is the explicit $\ketbra{k}$ state
which appears in \cref{eq:total-authentication}. The existence of this
key register will be used, with the help of
\cref{th:trace-distance-orthogonal}, in some of our technical arguments,
such as the proof of \cref{lem:Bob-correct}.

Note that our discussion, unlike the one in \cite{GYZ17}, omits
adding another register $\tsf{S}$ to model all other information that
a sender and receiver could share as part of a larger protocol but which
is not directly implicated in the authentication scheme. Such a register
is not needed in our analysis.

Next, we give a lemma which upper bounds the probability that any
fixed state is accepted by the verification procedure, when averaged
over all possible keys. For completeness, the proof can be found in
\cref{sc:prelim-proofs}.

\begin{lemma}
\label{th:QAS-wrong-key}
Let $\tsf{QAS}$ be an $\epsilon$-total authentication scheme on the
Hilbert space $\tsf{M}$ of dimension greater or equal to $2$. Then, for
any $\rho \in \mc{D}(\reg{Y})$, we have that
\begin{equation}
	\E_{k \in \mc{K}}
	\Tr\left[
		\tsf{QAS.Ver}'_{k}(\rho)
	\right]
	\leq
	2\epsilon.
\end{equation}
\end{lemma}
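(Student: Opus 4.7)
The plan is to instantiate the total-authentication guarantee of \cref{df:QAS-security} with a single, carefully chosen attack, and then to exploit the fact that the resulting simulator must be valid for every input state on~$\tsf{M}$.

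First I set the adversarial register~$\tsf{Z}$ to be trivial, and let~$\Phi$ be the attack that discards its input and outputs the fixed state~$\rho$ on the authenticated register~$\tsf{Y}$. Because~$\tsf{Z}$ is trivial, the completely positive trace non-increasing simulator~$\Psi$ promised by $\epsilon$-total authentication collapses to multiplication by a single nonnegative scalar $c \in [0,1]$. Instantiating \cref{eq:total-authentication} with these choices and any pure input $\ket{\phi} \in \tsf{M}$, and using $\tsf{QAS.Ver}'_k \circ \tsf{QAS.Auth}_k = \mathrm{id}$, the security condition becomes
\begin{equation*}
	\E_k \ket{k}\bra{k} \otimes A_k^\dag \rho A_k
	\;\approx_\epsilon\;
	\E_k \ket{k}\bra{k} \otimes c\ket{\phi}\bra{\phi}.
\end{equation*}
The critical feature I exploit is that the left-hand side is completely independent of~$\ket{\phi}$, because the attack ignores its input.

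The hypothesis $\dim(\tsf{M}) \geq 2$ lets me pick two orthogonal unit vectors $\ket{\phi_0}, \ket{\phi_1} \in \tsf{M}$. Applying the displayed approximation with each in turn and invoking the triangle inequality, the two simulated right-hand sides are $2\epsilon$-close to one another. By \cref{th:trace-distance-orthogonal} (applied through the orthogonal key states~$\ket{k}$) together with $\Delta(\ket{\phi_0}\bra{\phi_0}, \ket{\phi_1}\bra{\phi_1}) = 1$, this trace distance equals exactly~$c$, so that $c \leq 2\epsilon$.

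Finally I take the trace of both sides of the displayed approximation; since $\abs{\Tr(X) - \Tr(Y)} \leq 2\,\Delta(X,Y)$ in general, this upper-bounds $\E_k \Tr[A_k A_k^\dag \rho] = \E_k \Tr[\tsf{QAS.Ver}'_k(\rho)]$ in terms of~$c$, and the inequality $c \leq 2\epsilon$ then delivers the result. The main obstacle is the constant tracking: both the bound on~$c$ and the trace-level comparison stem from the same $\epsilon$-total authentication approximation, so one must arrange the analysis so as not to pay an independent factor of~$\epsilon$ for each, and careful accounting of the factors of~$\frac{1}{2}$ in the definition of~$\Delta$ is needed to land on exactly the claimed constant~$2\epsilon$.
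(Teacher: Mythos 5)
There is a genuine gap: your argument delivers $4\epsilon$, not $2\epsilon$, and this loss is intrinsic to the attack you chose, not something that ``careful accounting'' can remove. Writing $T = \E_k \Tr[\tsf{QAS.Ver}'_k(\rho)]$, your final step combines $c \leq 2\epsilon$ with $\abs{T - c} \leq 2\Delta \leq 2\epsilon$, yielding only $T \leq c + 2\epsilon \leq 4\epsilon$. This $4\epsilon$ is tight for your setup: take $A_k^\dag \rho A_k = 2\epsilon\bigl(\ketbra{\phi_0} + \ketbra{\phi_1}\bigr)$ for every $k$, so that $T = 4\epsilon$, and take the simulator to be $c = 2\epsilon$; then for every pure $\ket{\phi}$ one checks $\Delta\bigl(A_k^\dag\rho A_k,\; c\ketbra{\phi}\bigr) = \epsilon$, so all the trace-distance constraints you imposed are satisfied with equality. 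In other words, with a trivial $\tsf{Z}$ register there is genuinely nothing left in the security game to push $T$ below $4\epsilon$.

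The paper avoids paying twice by choosing a different attack: rather than discarding the authenticated state, it keeps it in a non-trivial side register $\tsf{Z} \cong \tsf{Y}$, i.e.\ the attack swaps the authenticated state into $\tsf{Z}$ and places $\rho$ on $\tsf{Y}$. After verification, the $\tsf{Z}$ register then carries $\tsf{QAS.Auth}_k(\tau)$, while the accept probability $\Tr[\tsf{QAS.Ver}'_k(\rho)]$ appears as a scalar weight. On the simulator side one gets $\tau \otimes \Psi(\tau')$, which after tracing out $\tsf{M}$ is \emph{identical} for the two orthogonal choices of $\tau$. Comparing those two instances, the simulator terms cancel, and since $\tsf{QAS.Auth}_k(\ketbra{\psi}) \perp \tsf{QAS.Auth}_k(\ketbra{\phi})$ (\cref{th:QAS-orthogonal}), the lemma \cref{th:trace-distance-orthogonal} directly extracts $\E_k \Tr[\tsf{QAS.Ver}'_k(\rho)] \cdot 1 \leq 2\epsilon$ in a single step. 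The side register is the missing idea in your proof: it lets the accept probability appear as a scaling on states that are exactly orthogonal for distinct inputs, so the quantity you want \emph{is} the trace distance, and you never need to pass through an intermediate scalar $c$.
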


Finally, we give an existence lemma. The proof is also given in
\cref{sc:prelim-proofs}. It essentially follows from a theorem
describing how unitary $2$-designs (as introduced in \cite{DCEL09}) can
be used to construct total quantum authentication schemes \cite{AM17}
and then choosing a suitable unitary $2$-design \cite{CLLW16}. A
few additional technical arguments are needed to ensure that the key set is
precisely the bit strings of a given length.

\begin{lemma}
\label{th:QAS-existence}
For any strictly positive integers $n$ and $k$, there exists a
$\left(5 \cdot 2^{\frac{5n-k}{16}}\right)$-total quantum authentication
scheme on $n$ qubits with key set $\{0,1\}^k$.
\end{lemma}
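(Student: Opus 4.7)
The plan is to combine two pre-existing ingredients: the theorem of \cite{AM17} showing that any unitary $2$-design on a composite message-plus-trap Hilbert space yields a total authentication scheme in the sense of \cref{df:QAS-security}, together with the explicit efficient construction of approximate unitary $2$-designs from \cite{CLLW16}. The only remaining work is then to balance the parameters so that the authentication error matches the target $5 \cdot 2^{\frac{5n-k}{16}}$ while the design's index set fits into exactly $k$ bits.

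Concretely, I would first fix a number of trap qubits $t$ and build the authentication scheme on $n + t$ qubits as follows: for each key $r$, pick a unitary $U_r$ from a $\delta$-approximate unitary $2$-design on $n + t$ qubits and set the isometry $A_r \in \mc{L}(\tsf{M}, \tsf{Y})$ to be $\ket{\psi} \mapsto U_r \bigl( \ket{\psi} \tensor \ket{0}^{\tensor t} \bigr)$, with $\dim \tsf{M} = 2^n$ and $\dim \tsf{Y} = 2^{n+t}$; the maps $\tsf{QAS.Auth}_r$ and $\tsf{QAS.Ver}_r$ are then given by \cref{eq:QAS-assumption-iso,eq:QAS-assumption-ver}. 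Invoking \cite{AM17}, this scheme is $\epsilon$-total-authenticating for some explicit function $\epsilon = \epsilon(\delta, t, n)$ of the design approximation error $\delta$ and the number of trap qubits $t$.

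Next, I would instantiate the $2$-design using the construction of \cite{CLLW16}, whose index set can be represented using $\ell(\delta, n + t)$ bits. Treating $t$ as a free parameter and $\delta$ as a free parameter subject to $\ell(\delta, n + t) \leq k$, the authentication error is then minimized by a direct optimization; with the right choice this gives the bound $5 \cdot 2^{\frac{5n-k}{16}}$, where the coefficient and the fractional exponent $(5n-k)/16$ fall out of the two competing trade-offs (increasing $t$ reduces the authentication error directly but leaves fewer of the $k$ bits available to drive $\delta$ small, while the design's bit-length depends roughly linearly on $n + t$ and on $\log(1/\delta)$).

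Finally, the lemma requires the key set to be \emph{exactly} $\bool^k$, whereas the natural index set of the design is of size at most $2^{\ell}$ for some $\ell \leq k$. I would handle this by partitioning $\bool^k$ into $2^{\ell}$ equal-size blocks and declaring every key in the $r$th block to label the same unitary $U_r$; the uniform distribution on $\bool^k$ then reproduces the uniform distribution over the design, so both the $2$-design property and the total authentication bound of \cite{AM17} carry over unchanged, simply with redundant key labels. The main obstacle I anticipate is the bookkeeping of constants when composing the two imported theorems, since it is precisely this step that pins down the specific form of the final bound; the conceptual content is otherwise a direct application of the cited results.
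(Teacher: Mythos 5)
Your high-level plan is the right one — combine the \cite{AM17} theorem (\cref{th:AM17}) that unitary $2$-designs on $n+t$ qubits give total QAS with error $2^{(6-t)/3}$ with the explicit design of \cite{CLLW16} (\cref{th:2-design}), and then tune $t$ — but your treatment of the key-set mismatch has a genuine gap, and it is precisely the part of the proof that is not mere bookkeeping.

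The design from \cite{CLLW16} on $n+t$ qubits has $2^{5(n+t)}-2^{3(n+t)}$ elements, which is \emph{not} a power of two. Your proposed fix — partition $\{0,1\}^k$ into $2^{\ell}$ equal-size blocks and map each block to one unitary — therefore cannot induce the uniform distribution on the design: either some blocks go unused (undefined keys), or some unitaries receive more blocks than others (non-uniform weighting). In the latter case you do not have a unitary $2$-design under the induced distribution, so the $\cite{AM17}$ guarantee does not carry over "unchanged," contrary to your claim. The paper handles this by introducing $\epsilon'$-uniform maps (\cref{df:eps}), showing one exists from $\{0,1\}^k$ onto any $N$-element set with $\epsilon' = N/(4\cdot 2^k)$ (\cref{th:eps-existence}), and proving that re-keying a total QAS through such a map degrades security by an additive $\epsilon'$ (\cref{th:QAS-key-change}, whose proof uses \cref{th:trace-distance-orthogonal} on the explicit key register). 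This cost term $\epsilon' \approx 2^{5(n+t)-k-2}$ is not negligible: it is exactly the second term that competes with $2^{(6-t)/3}$ in the paper's optimization of $t$, and balancing the two is what produces the exponent $(5n-k)/16$ in the final bound. Without it, increasing $t$ would be free and the conclusion would be qualitatively different. A secondary misreading: the paper uses the \emph{exact} $2$-design statement from \cite{CLLW16}, so there is no separate approximation parameter $\delta$ floating around; the only free parameter to optimize is $t$.

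Also note a small-$k$ regime the paper handles that your sketch does not mention: when $k < 5n + 38$ the target bound $5\cdot 2^{(5n-k)/16}$ is $\geq 1$, and the paper simply outputs a trivial $1$-total QAS, since otherwise $t_{\text{opt}}$ could be non-positive and the argument above would not apply.
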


\section{Definitions}   %
\label{sec:definitions} %

Here, we define quantum copy protection (\Cref{sec:defn-copy-protection}) and secure software leasing (\Cref{sec:defn-SSL}), along with their correctness and security notions. All of our definitions are for Boolean circuits only, where the input is a binary string, and the output is a single bit.

\subsection{Quantum Copy Protection} %
\label{sec:defn-copy-protection}     %

We present our definition of a copy protection scheme, following the general lines of~\cite{CMP20}. We note that we have rephrased the definition in~\cite{CMP20} in terms of the more standard cryptographic notion where the parameter in the definition (here, we use $\epsilon$) characterizes the \emph{insecurity} of a game (and hence, we strive for schemes were $\epsilon$ is small).

\subsubsection{Quantum Copy Protection Scheme} %

First, we define the functionality of \emph{quantum copy protection}.
\begin{definition}[Quantum copy protection scheme] Let $\calC$ be a set of $n$-bit Boolean circuits. A \emph{quantum copy protection} scheme for $\mathcal{C}$ is a pair of quantum circuits $\CP=(\Protect,\Eval)$ such that for some space $\sf Y$:
\begin{enumerate}
\item $\Protect(C)$: takes as input a Boolean circuit $C \in \calC$, and outputs a  quantum state $\rho\in {\cal D}({\sf Y})$.
\item  $\Eval(\rho, x)$: takes a quantum state $\rho\in {\cal D}({\sf Y})$ and string $x \in \bool^\len$ as inputs and outputs a bit $b$.
\end{enumerate}
\end{definition}

We will interpret the output of $\Protect$ and $\Eval$ as quantum states on $\sf Y$ and $\mathbb{C}^2$, respectively, so that, for example, for any bit $b$, string $x$ and program $\rho$, $\Tr[\ketbra{b}\Eval(\rho, x)]$ is the probability that $\Eval(\rho,x)$ outputs $b$.
\begin{definition}[$\eta$-Correctness of copy protection]
\label{def:cp-correct}
A \emph{quantum copy protection} scheme for a set of $n$-bit circuits $\calC$, $\CP$, is \emph{$\eta$-correct} with respect to a family of distributions on $n$-bit strings $\{T_C\}_{C \in \calC}$, if for any $C \in \calC$ and $\rho = \Protect(C)$, the scheme satisfies
\begin{equation}
\E_{x \leftarrow T_C} \Tr[\ketbra{C(x)}\Eval(\rho, x)] \geq 1-\eta.
\end{equation}
\end{definition}

Our notion of correctness differs from that of \cite{CMP20}, and other previous work on uncloneable point function obfuscation, by being defined with respect to a family of distributions (see \Cref{sec:open-problems}). However, if the scheme is $\eta$-correct with respect to all families of distributions, then we recover the more standard definition of correctness.

\subsubsection{Reusability}\label{sec:reuse}

We note that the $\Eval$ procedure only addresses the ability to compute $C$ on a single input~$x$. Thus, the \emph{reusability} of $\rho$ is not addressed in the definition. However, some notion of reusability follows from correctness. Let $W_x$ be a unitary purification of $\Eval$ on $\reg{YZO}$, where $\reg{ZO}$ is the purifying space, and we assume the output qubit is the single-qubit register $\reg{O}$. For a single qubit space $\reg{O}'$, define:
\begin{align*}
\overline{\Eval}(\rho,x) = \Tr_{\reg{ZO}}\left(W_x^\dagger\cdot {\sf CNOT}_{\reg{OO}'}\cdot W_x(\rho\otimes \braket{0}_{\reg{Z}}\otimes\braket{0}_{\reg{O}}\otimes\braket{0}_{\reg{O}'})W_x^\dagger\cdot {\sf CNOT}_{\reg{OO}'}\cdot W_x\right),
\end{align*}
as shown in \cref{fig:pure-eval}.

\begin{figure}
\centering
\begin{tikzpicture}[scale=1.2]
\node at (-.25,1) {$x$}; 		\draw[double] (0,1)--(1.5,1)--(1.5,.75);
\node at (-.25,.5) {$\rho$};	\draw (0,.5) -- (4,.5);
	\node at (.5,0) {$\ket{0}$};		\draw (.75,0)--(3.5,0);
	\node at (.5,-.5) {$\ket{0}$};		\draw (.75,-.5)--(3.5,-.5);
	\node at (.5,-1) {$\ket{0}$};		\draw (.75,-1)--(4,-1);

\node at (1,.65) {\small $\reg{Y}$};
\node at (1,.15) {\small $\reg{Z}$};
\node at (1,-.35) {\small $\reg{O}$};
\node at (1,-.85) {\small $\reg{O}'$};

\filldraw[fill=white] (1.25,.75) rectangle (1.75,-.75);
\node at (1.5,0) {$W_x$};

\filldraw (2.125,-.5) circle (.035);
\draw (2.125,-.5)--(2.125,-1.1);
\draw (2.125,-1) circle (.1);

\filldraw[fill=white] (2.5,.75) rectangle (3,-.75);
\node at (2.75,0) {$W_x^\dagger$};

\draw[dashed] (.25,1.25) rectangle (3.75,-1.25);
\node at (3.1,1.06) {$\overline{\Eval}$};
\end{tikzpicture}
\caption{An evaluation procedure that outputs the program for reuse.}\label{fig:pure-eval}
\end{figure}
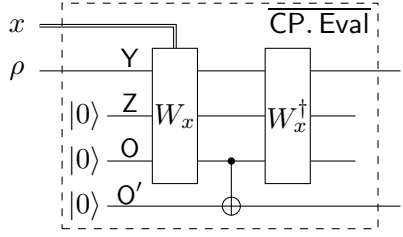

If $x\leftarrow T_C$, and $\CP$ is $\eta$-correct with respect to $T_C$, then the post-evaluated state, $\tilde\rho$ output by $\overline{\Eval}(\rho,x)$ on register $\sf Y$ satisfies
$\Delta(\rho,\tilde\rho)\leq O(\eta)$.

\subsubsection{Honest-Malicious Security for Quantum Copy Protection}
\label{sec:definition-CP}

In this section, we define the notion of security for a copy protection scheme against an adversary $\adv = (\pirate, \adv_1, \adv_{2})$, where $\cal P$ (Pete) is the \emph{pirate}, and $\adv_1$ (Bob) and $\adv_2$ (Charlie) are \emph{users} (see \cref{fig:freeloading-experiment}).
We use the $\ExperimentFree$ from~\cite[Section 3]{CMP20} as the basis
of our security game between a challenger and~$\adv$. The game is
parametrized by: (i) a distribution $D$ on the set of circuits $\calC$,
and (ii) a set of distributions $\{D_C\}_{C \in \calC}$ over pairs of
input strings in $\{0,1\}^n\times\{0,1\}^n$, called the \emph{challenge
distributions}.

\vspace{0.5em}
\begin{center}
\fbox{
\begin{minipage}[c]{0.8\textwidth}
\textbf{The CP experiment $\ExperimentFree$}$_{\adv,\CP}$
\begin{enumerate}
\item The challenger samples $C \leftarrow D$ and sends $\rho = \Protect(C)$ to $\pirate$.
\item $\pirate$ outputs a state $\sigma$ on registers $\reg{A}_1, \reg{A}_{2}$ and sends $\reg{A}_1$ to $\adv_1$ and $\reg{A}_2$ to $\adv_2$.
\item At this point, $\adv_1$ and $\adv_{2}$ are separated and cannot communicate.  The challenger samples $(x_1, x_{2}) \leftarrow D_C$ and sends $x_1$ to $\adv_1$ and $x_2$ to $\adv_2$.
    \item $\adv_1$ returns a bit $b_1$ to the challenger and $\adv_2$ returns a bit $b_2$.
    \item The challenger outputs $1$ if and only if $b_1 = C(x_1)$ and
	$b_2 = C(x_2)$, in which case, we say that $\adv$ wins the game.
\end{enumerate}
\end{minipage}
}
\end{center}
\vspace{0.5em}

\begin{figure}
\centering
\begin{tikzpicture}
\draw[double] (0,1.25)--(2.675,1.25)--(2.675,.75);
	\draw (1.5,.55)--(3,.55);	\draw[double] (3,.55)--(3.75,.55);
\draw (0,0)--(1.5,0);
	\draw (1.5,-.55)--(3,-.55);	\draw[double] (3,-.55)--(3.75,-.55);
\draw[double] (0,-1.25)--(2.675,-1.25)--(2.675,-.75);

\node at (-3.5,0) {
	$\begin{aligned}
		C & \leftarrow D\\
		\rho & = \Protect(C)\\
		(x_1,x_2) & \leftarrow D_C
	\end{aligned}$
};

\node at (-.25,1.25) {$x_1$};
\node at (-.25,0) {$\rho$};
\node at (-.25,-1.25) {$x_2$};
\node at (.25,.15) {\small $\sf Y$};

\filldraw[fill=white] (.5,1) rectangle (1.5,-1);
\node at (1,0) {
	\begin{minipage}{2cm}
		\centering
		${\cal P}$\\
		\small (Pete)
	\end{minipage}
};
\node at (1.75,.75) {\small $\reg{A}_1$};
\node at (1.75,-.35) {\small $\reg{A}_2$};

\filldraw[fill=white] (2,1) rectangle (3.5,.1);
\node at (2.75,.55) {\begin{minipage}{2cm}
	\centering\small
	$\adv_1$\\
	(Bob)
\end{minipage}};
\filldraw[fill=white] (2,-1) rectangle (3.5,-.1);
\node at (2.75,-.55) {\begin{minipage}{2cm}
	\centering\small
	$\adv_2$\\
	(Charlie)
\end{minipage}};

\node at (4,.5) {$b_1$};
\node at (4,-.5) {$b_2$};

\node at (6.5,0) {
\begin{minipage}{3.75cm}
\centering
\underline{Winning Conditions:}\\
\vskip 5pt
$b_1=C(x_1)$\\
$b_2=C(x_2)$
\end{minipage}
};
\end{tikzpicture}
\caption{The pirating game $\ExperimentFree_{\adv,\CP}$}\label{fig:freeloading-experiment}
\end{figure}

\vspace{1em}
In previous work on copy protection, the adversary is assumed to control ${\cal P}$, ${\cal A}_1$ and ${\cal A}_2$, whose behaviour can be arbitrary (or, in some cases, computationally bounded). This models a setting where the potential users of pirated software are aware that the software is pirated, and willing to run their software in some non-standard way in order to make use of it. We refer to this setting as the \emph{malicious-malicious} setting. In this setting, the action of the adversary $\adv=(\pirate,\adv_1,\adv_2)$ can be specified by:
\begin{enumerate}
\item an arbitrary CPTP map $\Phi_{\cal P}:{\cal L}(\sf Y)\to{\cal L}(\reg{A}_1\reg{A}_2)$, representing the action of $\cal P$, where $\reg{A}_1$ and $\reg{A}_2$ are arbitrary spaces;
\item arbitrary two-outcome projective measurements $\{\Pi_x\}_{x\in\{0,1\}^n}$ on $\reg{A}_1$, such that $\adv_1$ (Bob) performs the measurement $\{\Pi_{x_1},I-\Pi_{x_1}\}$ on input $x_1$ to obtain his output bit $b_1$; and
\item arbitrary two-outcome projective measurements $\{\Pi_x'\}_{x\in\{0,1\}^n}$ on $\reg{A}_2$, such that $\adv_2$ (Charlie) performs the measurement $\{\Pi_{x_2}',I-\Pi_{x_2}'\}$ on input $x_2$ to obtain his output bit $b_2$.
\end{enumerate}

In contrast, one could also imagine a scenario in which users are honest, and will therefore try to evalute the program they receive from ${\cal P}$ by running $\Eval$. In that case, while $\pirate$ can still perform an arbitrary CPTP map, $\adv_1$ and $\adv_2$ are constrained to run $\Eval$.
It is potentially easier to design copy protection in this weaker setting, which we call the \emph{honest-honest} setting, since the adversary is more constrained. We will consider an intermediate setting.

Diverging from previous work, we will focus on a special type of adversary, where $\adv_1$ (Bob) performs the \emph{honest} evaluation procedure, while $\adv_{2}$ (Charlie) performs an arbitrary measurement. (See \Cref{sec:intro-contributions} for a discussion of this model). Specifically, we consider the following type of adversary.

\begin{definition}
An \emph{honest-malicious adversary} for the pirating game is an adversary of the form $\hatAdv = (\pirate, \Eval, \adv_2)$, where $\pirate$ implements an arbitrary CPTP map $\Phi_{\pirate}:{\cal L}(\sf Y)\to {\cal L}({\sf YA}_2)$, ${\sf A}_2$ is any space, and $\adv_2$ is specified by a set of arbitrary two-outcome measurements $\{\Pi_x\}_{x\in\{0,1\}^n}$ on~${\sf A}_2$.
\end{definition}

For a fixed scheme $\CP = (\Protect, \Eval)$ for a set of $n$-bit circuits $\calC$, we define \emph{honest-malicious} security with respect to distributions $D$ and $\{D_C\}_{C \in \calC}$ in terms of the best possible winning probability, $\Pr[\ExperimentFree_{\hatAdv,\CP}]$, over honest-malicious adversaries~$\hatAdv$.
Observe that there is one strategy that $\pirate$ can always facilitate, which is to pass the  intact program to Bob and then let Charlie locally produce his best guess of the output, based on prior knowledge of $D$ and $\{D_C\}_{C \in \calC}$\footnote{There are other trivial strategies, $\emph{e.g.},$ where Charlie gets an intact program register and Bob does not, but this is a more restricted trivial strategy, since Bob is constrained to evaluate the program honestly.}. This leads to a winning probability for the above game which is truly trivial to achieve, in the sense that Charlie is using a strategy that does not take any advantage of the interaction with the pirate $\pirate$. In fact, assuming the scheme is $\eta$-correct with respect to the distribution family $\{T_C\}_{C\in {\cal C}}$ where $T_C$ is Bob's marginal of $D_C$,
Bob will always produce the correct answer, except with probability
$\eta$. Indeed, Charlie simply considers the most likely output, given
his input, thereby upper bounding the winning probability with Charlie's
maximum guessing probability\footnote{The winning probability may be
less than this. By the union bound, even though Bob's and Charlie's
inputs are not independent, the overall success probability will be at
least $p^{\text{marg}}-\eta$, and we will be considering situations
where $\eta$ is small.}.

Formally, we define $\pmarg_{D, \{D_C\}_{C \in \calC}}$ as follows.
The distributions $D$ and $\{D_C\}_{C \in \mc{C}}$ yield a joint
distribution $\tilde{D}$ on $\mc{C} \times \{0,1\}^n$ by first sampling
$C \gets D$ and then sampling $(x_1, x_2) \gets D_C$ and only taking the
$x_2$ component. Let $\hat{D}$ be the marginal distribution of $x_2$
from $\tilde{D}$ and, for every $x$, let $\hat{D}_x$ be the marginal
distribution of $C$ from $\tilde{D}$, conditioned on $x_2 = x$. Then,
\begin{equation}
\label{eq:pmarg}
 \pmarg_{ D, \{D_C\}_{C \in \calC}}
 =
\E_{x \gets \hat{D}} \max_{b \in \{0,1\}} \Pr_{C \gets \hat{D}_x}[C(x) = b].
\end{equation}

This is different from the security notion in~\cite{CMP20} where the trivial guessing probability is optimized over both users. For intuition, note that $\pmarg$ is always at least $1/2$, since Charlie can always output a random bit that is correct with probability $1/2$. Depending on the specific input and challenge distributions, it may be larger. We now state the main security notion for this work.

\begin{definition}[Honest-malicious security]
\label{defn:copy-protection}
A copy protection scheme $\CP = (\Protect, \Eval)$ for a set of $n$-bit circuits $\calC$ is \emph{$\epsilon$-honest-malicious secure with respect to the distribution $D$ and challenge distributions $\{D_C\}_{C \in \calC}$} if for all honest-malicious adversaries $\hatAdv$,
\begin{equation}
 \Pr[\ExperimentFree_{\hatAdv,\CP}] \leq \pmarg + \epsilon\,,
\end{equation}
where  $\pmarg = \pmarg_{D, \{D_C\}_{C \in \calC}}$.
\end{definition}

We re-iterate that our definition for honest-malicious security is \emph{statistical}: it makes no assumption on the computational power of~$\hatAdv$ (see \Cref{sec:intro-contributions}).

Finally, if we modify the above definition by allowing arbitrary adversaries $\adv=(\pirate,\adv_1,\adv_2)$, and letting $\bar{p}^{\text{marg}}$ denote the optimal trivial guessing probability, as in \cref{eq:pmarg} but over \emph{both} adversaries (see also \cite{CMP20}), we recover the more standard security definition, which we call \emph{malicious-malicious} security:

\begin{definition}[Malicious-malicious security]
\label{defn:mm-copy-protection}
A copy protection scheme $\CP$ for a set of $n$-bit circuits $\calC$ is \emph{$\epsilon$-malicious-malicious secure with respect to the distribution $D$ and challenge distributions $\{D_C\}_{C \in \calC}$} if for \emph{all} adversaries $\adv$,
\begin{equation}
 \Pr[\ExperimentFree_{\adv,\CP}] \leq \bar{p}^{\text{marg}} + \epsilon.
\end{equation}
\end{definition}

\subsection{Secure Software Leasing} %
\label{sec:defn-SSL}                 %

We define Secure Software Leasing (SSL) below. As with copy protection, the basic scheme and security game mirror~\cite{CMP20} but we diverge from them in our exact notions of correctness and security.

\subsubsection{Secure Software Leasing Scheme} %
\label{sec:ssl-definition}                         %

\begin{definition}[Secure software leasing (SSL)] Let $\mathcal{C}$ be a set of $n$-bit Boolean circuits. A \emph{secure software leasing} scheme for $\mathcal{C}$ is a tuple of quantum circuits $\SSL=(\SSLGen,$ $\SSLLease$, $\SSLEval$, $\SSLVerify)$ such that for some space $\sf Y$:
\begin{enumerate}
\item $\SSLGen$: outputs a secret key $\sk$.
\item $\SSLLease(\sk, C)$: takes as input a secret key $\sk$ and a circuit $C \in \calC$, and outputs a quantum state $\rho\in{\cal D}({\sf Y})$.
\item $\SSLEval(\rho, x)$: takes as input a quantum state $\rho\in{\cal D}({\sf Y})$ and input string $x\in\{0,1\}^n$ and outputs a bit~$b$ together with a post-evaluated state $\tilde{\rho}\in{\cal D}({\sf Y})$.
\item $\SSLVerify(\sk, \sigma, C)$: takes a secret key $\sk$, a circuit $C \in \calC$ and a quantum state $\sigma\in{\cal D}({\sf Y})$,
and outputs a bit $v$ indicating acceptance or rejection.
\end{enumerate}
\end{definition}

\begin{definition}[$\eta$-Correctness of SSL]
\label{def:ssl-correct}
A \emph{secure software leasing} scheme for $\calC$, $\SSL$, is $\eta$-\emph{correct} with respect to a family of distributions on $n$-bit strings $\{T_C\}_{C \in \calC}$, if for any $C \in \calC$, $\sk\leftarrow\SSLGen$, and $\rho = \SSLLease(\sk, C)$, the scheme satisfies:
\begin{itemize}
\item Correctness of Evaluation:
$\displaystyle \E_{x \leftarrow T_C} \Tr\left(\ketbra{C(x)}\SSLEval(\rho, x)]\right) \geq 1-\eta$,
\item and Correctness of Verification:
$\displaystyle\Tr\left(\ketbra{1}\SSLVerify(\sk, \rho, C)\right) \geq 1-\eta$.
\end{itemize}
\end{definition}
In the above definition, recall that for $b\in\{0,1\}$, $\Tr\left(\ketbra{b}\SSLVerify(\sk,\rho, C)\right)$ is the probability that $\SSLVerify(\sk,\rho,C)$ outputs the bit $b$, and similarly for $\Tr\left(\ketbra{b}\SSLEval(\rho, x)\right)$.

When a scheme $\SSL$ is $\eta$-correct with respect to every distribution, we recover the more standard notion of correctness.

About the definition of correctness for SSL, we remark that as stated, it seems to only imply that the lessee can either run the program, \emph{or} return it. The definition does not explicitly guarantee that the post-evaluated state output by $\SSLEval$ after the program has been run will be accepted by $\SSLVerify$. However, using the construction described in \cref{sec:reuse}, it is always possible to evaluate the program, and by correctness of evaluation, the program will not be changed very much, and so by correctness of verification, it will still be accepted with high probability. The probability of acceptance will degrade by $O(\eta)$ with each evaluation.

\subsubsection{Security for Secure Software Leasing} %
\label{sec:security-SSL}                             %

\begin{figure}
\centering
\begin{tikzpicture}
	\draw (1.5,.55)--(3,.55);	\draw[double] (4.5,.55)--(5.5,.55);
\draw (0,0)--(1.5,0);
	\draw (1.5,-.55)--(3,-.55);	\draw[double] (3,-.55)--(5.5,-.55);
\draw[double] (0,-1.25)--(3.25,-1.25)--(3.25,-.75);

\node at (-3.25,0) {
	$\begin{aligned}
		\sk & \leftarrow \SSLGen\\
		C & \leftarrow D\\
		\rho & = \SSLLease(\sk,C)\\
		x & \leftarrow D_C'
	\end{aligned}$
};

\node at (-.25,0) {$\rho$};
\node at (-.25,-1.25) {$x$};
\node at (.25,.15) {\small $\sf Y$};

\filldraw[fill=white] (.5,1) rectangle (1.5,-1);
\node at (1,0) {
	\begin{minipage}{2cm}
		\centering
		$\Phi_{\adv}$
	\end{minipage}
};
\node at (1.75,.75) {\small $\reg{Y}$};
\node at (1.75,-.35) {\small $\reg{A}$};

\filldraw[fill=white] (2.25,1) rectangle (5.25,.1);
\node at (3.75,.55) {\small
	$\SSLVerify(\sk,\cdot,C)$};
\filldraw[fill=white] (2.25,-1) rectangle (4.25,-.1);
\node at (3.25,-.55) {\small
	$\{\Pi_x,I-\Pi_x\}$
};

\node at (5.75,.55) {$v$};
\node at (5.75,-.55) {$b$};

\draw[dashed] (.4,1.5) -- (2,1.5) -- (2,0) -- (4.35,0) -- (4.35,-1.1) -- (.4,-1.1) -- (.4,1.5);
\node at (.75,1.25) {$\adv$};

\node at (8.25,0) {
\begin{minipage}{3.75cm}
\centering
\underline{Winning Conditions:}\\
\vskip 5pt
$v=1$\\
$b=C(x)$
\end{minipage}
};
\end{tikzpicture}
\caption{The SSL game $\ExperimentSSL_{\adv,\SSL}$, where the behaviour of $\adv$ is specified by a CPTP map $\Phi_{\adv}$ and a set of two-outcome measurements $\{\Pi_x\}_{x\in\{0,1\}^n}$.}\label{fig:SSL-experiment}
\end{figure}

We base our security game, between a challenger (in this case a \emph{Lessor}) and an adversary $\adv$, on the $\ExperimentSSL$ from~\cite[Section 6]{CMP20}. The game is parametrized by a  distribution~$D$ over circuits in~$\mathcal{C}$, and a set of challenge distributions $\{D'_C\}_{C \in \calC}$ over inputs $\{0,1\}^n$.

\vspace{0.5em}

\begin{center}
\fbox{
\begin{minipage}{0.8\textwidth}
\textbf{The SSL game $\ExperimentSSL$}$_{\adv,\SSL}$
\begin{enumerate}
\item The Lessor samples $C \leftarrow D$ and runs $\SSLGen$ to obtain a secret key $\sk$. She then sends $\rho=\SSLLease(\sk,C)$ to $\adv$.

\item $\adv$ produces a state $\sigma$ on registers $\reg{YA}$ and sends register $\reg{Y}$ back to the Lessor and keeps $\reg{A}$.

\item (\emph{Verification phase.}) The Lessor runs $\SSLVerify$ on $\reg{Y}$, the circuit $C$ and the secret key $\sk$ and outputs the resulting bit $v$. If $\SSLVerify$ accepts ($v=1$), the game continues, otherwise it aborts and $\adv$ loses.

\item The Lessor samples an input $x \leftarrow D'_C$ and sends~$x$ to $\adv$.
\item $\adv$ returns a bit~$b$ to the Lessor.
\item The Lessor outputs 1 if and only if $b=C(x)$ and $v=1$, in which case, we say $\adv$ ``wins'' the game.
\end{enumerate}
\end{minipage}
}
\end{center}

\vspace{0.5em}

An adversary $\adv$ for $\ExperimentSSL$ can be described by: an arbitrary CPTP map $\Phi_{\adv}:{\cal L}(\reg{Y})\rightarrow{\cal L}(\reg{YA})$ for some arbitrary space $\reg{A}$, representing the action of $\adv$ in Step 2; and a set of two-outcome measurements $\{\Pi_x\}_{x\in\{0,1\}^n}$ on $\reg{A}$ such that given challenge $x$ in Step 4, $\adv$ obtains the bit $b$ in Step 5 by measuring $\reg{A}$ with $\{\Pi_x,I-\Pi_x\}$ (see \cref{fig:SSL-experiment}).

As in \Cref{sec:definition-CP}, we define security with respect to the trivial strategy where $\adv$ returns the program $\rho$ to the Lessor in Step 2, and tries to guess the most likely value for $b$, given input $x$.

Formally, we define $\pind_{D, \{D_C\}_{C \in \calC}}$ as follows.
The distributions $D$ and $\{D_C\}_{C \in \mc{C}}$ yield a joint
distribution $\tilde{D}$ on $\mc{C} \times \{0,1\}^n$ by first sampling
$C \gets D$ and then sampling $x \gets D_C$. Let $\hat{D}$ be the
marginal distribution of $x$ from $\tilde{D}$ and, for every $x'$, let
$\hat{D}_{x'}$ be the marginal
distribution of $C$ from $\tilde{D}$, conditioned on $x = x'$. Then,
\begin{equation}
\label{eq:pind}
	\pind_{D, \{D'_C\}_{C \in \calC}}
	=
\E_{x \gets \hat{D}} \max_{b \in \{0,1\}} \Pr_{C \gets \hat{D}_x}[C(x) = b].
\end{equation}

The above equation is very similar to $\pmarg$ given in
\Cref{eq:pmarg}. However, we point out that they are defined and used in
different contexts. Specifically, in $\ExperimentFree$ there are two
parties, Bob and Charlie, who must be challenged with inputs on which to
evaluate the function. However, there is only a single party attempting
to evaluate the function at the end of $\ExperimentSSL$. Thus, $\pmarg$
is defined with respect to the marginal distribution on Charlie's
challenge generated by the joint challenge distribution. On the other
hand, $\pind$ can be directly defined with respect to the single
challenge issued in $\ExperimentSSL$.

We now define the security of SSL as follows.

\begin{definition}[Security of SSL]
\label{defn:ssl-statistical-security}
An SSL scheme $\SSL$ for a set of $n$-bit circuits $\calC$ is \emph{$\epsilon$-secure with respect to the distribution $D$ and challenge distributions $\{D'_C\}_{C \in \calC}$} if for all adversaries $\adv$,
\begin{equation}
 \Pr[\ExperimentSSL_{\adv}] \leq \pind + \epsilon\,,
\end{equation}
where  $\pind = \pind_{D, \{D'_C\}_{C \in \mathcal{C}}}$.
\end{definition}

Observe that, as in the case with \Cref{defn:copy-protection}, our definition provides statistical guarantees for security as we impose no conditions on the adversaries.

\subsection{Distributions for Point Functions} %
\label{sec:defn-distributions}                 %

The definitions of correctness and security for copy protection and
secure software leasing presented earlier in this section are parametrized
by various distributions on the circuits that are encoded and the challenges
that are issued.

In this section, we define notation for the distributions we will consider in the setting of point functions. First, we will consider security in the setting when the point function is chosen uniformly at random.

\begin{definition}
\label{df:Dn}
We let $R$ be the uniform distribution on the set of
point functions $\{P_p \;:\; p \in \{0,1\}^n\}$. For simplicity, we will also use $R$ to simply refer to the uniform distribution on $\{0,1\}^n$, as we often conflate a point $p$ with its corresponding point function $P_p$.
\end{definition}

For a fixed point $p$, we will consider the distribution of inputs where $p$ is sampled with probability $1/2$, and otherwise, a uniform $x\neq p$ is sampled.
\begin{definition}
\label{df:Dy}
For any bit string $p \in \{0,1\}^n$, we define $\Dhalf_p$ to be the
distribution on $\{0,1\}^n$ such that
\begin{itemize}
	\item
		$p$ is sampled with probability $\frac{1}{2}$ and
	\item
		any $x \neq p$ is sampled with probability
		$\frac{1}{2}\cdot\frac{1}{2^n-1}$.
\end{itemize}
\end{definition}
This is a natural distribution in the setting of point functions, since it means that the function evaluates to a uniform random bit. This ensures that the output is non-trivial to guess --- an adversary's advantage against challenge
distributions of this form can be quantified by comparing it with their
probability of correctly guessing a random bit.
Furthermore, $\eta$-correctness with respect to this distribution, for some small $\eta$,  ensures that evaluating the point is correct except with small probability, and that all but a small fraction of the other inputs are evaluated correctly except with small probability.

\section{Relationships Between Definitions} %
\label{sec:generic}

In this section, we give some generic relationships between the definitions given in \cref{sec:definitions}. Specifically, in \cref{sec:mm-correct}, we show that any copy protection scheme for point functions that is secure in the malicious-malicious setting but only satisfies correctness with respect to the distribution family $\Dhalffam$, in which $\Dhalf_p$ samples $p$ with probability $1/2$ and all other strings uniformly, can be combined with a pairwise independent permutation family to get a scheme that is still secure in the malicious-malicious setting but is also correct with respect to any distribution (\cref{thm:mm-correct}). We recall that the malicious-malicious security setting is the standard security definition considered in previous works, and correctness with respect to any distribution is the standard notion of correctness. Thus, our construction given in \cref{sec:auth-CP}, while it has its advantages, falls short of achieving the standard security and correctness notions by being secure only in the honest-malicious setting, and by being correct only with respect to $\Dhalffam$. The results of \cref{sec:mm-correct} show that solving the former problem would also solve the latter.

Finally, in \cref{sec:SSL}, we describe how an honest-malicious copy protection scheme for any set of circuits $\calC$ can be turned into an SSL scheme for $\calC$ (\cref{thm:CP-to-SSL}). In particular, this means that the copy protection scheme for point functions presented in \cref{sec:auth-CP} implies an SSL scheme for point functions. We also describe how the latter can be extended into an SSL scheme for compute-and-compare programs (\cref{thm:SSLPF-to-SSLCC}).

\subsection{Malicious-Malicious Security and Correctness}\label{sec:mm-correct}

Let $\CP=(\Protect,\Eval)$ be a copy protection scheme
for point functions of length $n$ and fix a pairwise independent family
of permutations $\{h_r\}_{r\in{\cal R}}$ on the set $\{0,1\}^n$.
We define another copy protection scheme for point functions of
length $n$, denoted
${\sf MIX}^{\CP}=({\sf MIX^{\CP}.Protect},{\sf MIX^{\CP}.Eval})$, as
follows:
\begin{description}
\item[\rm${\sf MIX^{CP}.Protect}(p)$:] On input of $p$ (representing the point function $P_p$), output
\begin{align*}
\sum_{r\in{\cal R}}\frac{1}{|{\cal R}|} \ketbra{r}\otimes {\Protect}(h_r(p)).
\end{align*}
\item[\rm${\sf MIX^{CP}.Eval}((r,\sigma),x)$:] On input $x$ and program $(r,\sigma)$, output ${\Eval}(\sigma,h_r(x))$.
\end{description}
We call a set of challenge distributions $\{D_p\}_{p\in\{0,1\}^n}$ \emph{symmetric} if for $p\leftarrow R$, where we recall that $R$ is the uniform distribution on points, and $(x_1,x_2)\leftarrow  D_p$, the probability of any triple $(p,x_1,x_2)$ is the same as $(\pi(p),\pi(x_1),\pi(x_2))$ for any permutation $\pi$ on $\{0,1\}^n$. Equivalently, $D_p(x_1,x_2)$ can only depend on whether $x_1=x_2$, whether $x_1=p$ and whether $x_2=p$. In particular, the set of product distributions $\{\Dhalf_p\times \Dhalf_p\}_p$ is symmetric.

In the remainder of this section, we show the following:
\begin{theorem}\label{thm:mm-correct}
If the scheme $\CP$ is $\epsilon$-malicious-malicious secure with
respect to the uniform distribution on points $R$ and any symmetric set of challenge distributions $\{D_p\}_{p\in\{0,1\}^n}$,
and $\eta$-correct with respect to the distribution family $\Dhalffam$,
then $\sf MIX^{CP}$ is $\epsilon$-malicious-malicious secure with respect
to $R$ and $\{D_p\}_{p\in \{0,1\}^n}$ and $2\eta$-correct with respect to any distribution.
\end{theorem}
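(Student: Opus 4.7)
The statement has two parts: $2\eta$-correctness of ${\sf MIX}^{\CP}$ with respect to any distribution, and $\epsilon$-malicious-malicious security with respect to $R$ and $\{D_p\}_p$. The plan for both parts rests on the fact that each $h_r$ is a permutation (so point functions pull back cleanly), supplemented with pairwise independence for the correctness argument and with the symmetry hypothesis for the security reduction.

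For correctness, I would fix any point $p$ and any input distribution $T$, set $\mu = T(p)$, and rewrite the expected error of ${\sf MIX}^{\CP}$ in terms of the image pair $(q,y) = (h_r(p), h_r(x))$. Since $h_r$ is a permutation, $P_p(x) = P_q(y)$, so the expected error equals
\begin{equation*}
\E_{(q,y)} \Pr\!\left[\Eval(\Protect(q), y) \neq P_q(y)\right].
\end{equation*}
Pairwise independence gives that $q$ is marginally uniform and that the conditional distribution of $y$ given $q$ is $T_q = \mu\,\delta_q + (1-\mu)\,U_{\{0,1\}^n \setminus \{q\}}$, while $\Dhalf_q = \tfrac12\,\delta_q + \tfrac12\,U_{\{0,1\}^n \setminus \{q\}}$. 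The $\eta$-correctness of $\CP$ on $\Dhalf_q$ decomposes as $\tfrac12$ times the ``point'' error plus $\tfrac12$ times the ``off-point'' expected error, so each nonnegative term is at most $2\eta$; the $T_q$-expected error is a convex combination of these two nonnegative quantities and therefore is also at most $2\eta$, pointwise in $q$. Averaging over $q$ preserves the bound.

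For security, I would give a black-box reduction: from any adversary $\hat\adv = (\hat\pirate, \hat\adv_1, \hat\adv_2)$ against ${\sf MIX}^{\CP}$, construct $\adv = (\pirate, \adv_1, \adv_2)$ against $\CP$ with the same winning probability. The pirate $\pirate$, on input $\Protect(q)$, samples $r$ uniformly, runs $\hat\pirate$ on $(r, \Protect(q))$, and appends a classical copy of $r$ to each output register. On challenge $y_i$, the user $\adv_i$ reads $r$, computes $x_i = h_r^{-1}(y_i)$, and applies $\hat\adv_i$'s measurement for input $x_i$; this is a valid two-outcome projective measurement indexed by $y_i$ because $\hat\adv_i$'s measurements are projective and classically conditioning on $r$ preserves projectivity. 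The symmetry of $\{D_p\}_p$ means that, for every fixed permutation $\pi$ on $\{0,1\}^n$, the triple $(\pi(q),\pi(y_1),\pi(y_2))$ with $q \leftarrow R$ and $(y_1,y_2) \leftarrow D_q$ is distributed identically to $(q,y_1,y_2)$; applying this with $\pi = h_r^{-1}$ for each fixed $r$ shows that the simulated triple $(p,x_1,x_2) = (h_r^{-1}(q), h_r^{-1}(y_1), h_r^{-1}(y_2))$ matches the real ${\sf MIX}^{\CP}$ law. Because $h_r$ is a bijection, $P_q(y_i) = P_p(x_i)$, so the winning conditions coincide and $\Pr[\ExperimentFree_{\hat\adv, {\sf MIX}^{\CP}}] = \Pr[\ExperimentFree_{\adv, \CP}] \leq \bar{p}^{\text{marg}} + \epsilon$, with $\bar{p}^{\text{marg}}$ the same for both games since it depends only on $(R, \{D_p\}_p)$. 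The main subtleties I anticipate are (i) using pairwise independence, rather than mere uniformity of each $h_r(p)$, to pin down the conditional law of $y$ given $q$ in the correctness step, and (ii) invoking the symmetry of $\{D_p\}_p$ in place of any would-be uniformity-over-all-permutations property of $\{h_r\}_r$, which need not hold for a pairwise independent family.
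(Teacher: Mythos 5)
Your proof is correct. The correctness half is essentially the paper's own argument: both decompose the $\Dhalf_q$-error into a point term and an average off-point term, bound each by $2\eta$ using the $\eta$-correctness of $\CP$ under $\Dhalf_q$, and use pairwise independence (not mere 1-wise uniformity) at the same place, to control the joint law of $(h_r(p),h_r(x))$ when $x\neq p$. For security, you take a genuinely more direct route than the paper. The paper factors the argument through two intermediate results: \cref{lem:hrp}, showing that for a \emph{fixed} permutation $\pi$ the relabeled scheme $p\mapsto\Protect(\pi(p))$ remains $\epsilon$-secure (using symmetry to re-index the challenge variables); and \cref{thm:MX}, showing that a classically labeled \emph{mixture} over $r$ of $\epsilon$-secure schemes is itself $\epsilon$-secure, via an averaging argument that hardwires the worst $r$ into a new pirate $\Phi_r(\cdot)=\Phi_{\pirate}(\ketbra{r}\otimes\cdot)$. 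You instead give a single monolithic reduction in which the simulated pirate samples $r$ internally, forwards a classical copy of $r$ to both users, and each user inverts $h_r$ on its challenge before applying the original measurement; symmetry is invoked once per fixed $r$ to match the simulated triple $(h_r^{-1}(q),h_r^{-1}(y_1),h_r^{-1}(y_2))$ to the ${\sf MIX}^{\CP}$ law, yielding an exact equality of winning probabilities rather than an averaging inequality. Your version is a bit shorter and makes the role of the symmetry hypothesis more transparent; the paper's version is more modular, with \cref{thm:MX} being a reusable statement that mixtures of secure copy-protection schemes are secure. Both routes turn on precisely the same two facts: the symmetry of $\{D_p\}_p$, and the observation that $\bar p^{\text{marg}}$ depends only on $(R,\{D_p\}_p)$ and is therefore identical in both games.
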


We begin by showing that $\sf MIX^{CP}$ is $2\eta$-correct:

\begin{lemma}
If the scheme $\CP$ is $\eta$-correct with respect to the distribution family $\Dhalffam$, then ${\sf MIX^{CP}}$ is $2\eta$-correct with respect to any distribution family.
\end{lemma}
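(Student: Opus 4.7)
The plan is to reduce the correctness of $\sf MIX^{CP}$ against an arbitrary input distribution to the correctness of $\CP$ against $\Dhalf_q$ for a uniformly random $q$. Fix an arbitrary family of input distributions $\{T_p\}_{p \in \{0,1\}^n}$ and a point $p$, and write $\alpha := T_p(p)$. By construction, ${\sf MIX^{CP}.Eval}$ on ${\sf MIX^{CP}.Protect}(p)$ with input $x$ amounts to sampling $r$ uniformly and running $\Eval(\Protect(h_r(p)), h_r(x))$. Since $h_r$ is a permutation, $P_p(x) = P_{h_r(p)}(h_r(x))$, so defining $q := h_r(p)$, $y := h_r(x)$, and
\begin{equation*}
	f_q(y) \;:=\; \Tr\bigl[\ketbra{P_q(y)}\, \Eval(\Protect(q), y)\bigr] \;\in\; [0,1],
\end{equation*}
the quantity I need to lower bound by $1 - 2\eta$ is exactly $\E_{(q,y)} f_q(y)$, where $(q,y)$ is the random pair induced by sampling $r$ uniformly and $x \leftarrow T_p$.

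Next I would identify this induced distribution. A short calculation using only the pairwise independence of $\{h_r\}$ shows that the marginal of $q$ is uniform, and that for each fixed $q$ the conditional distribution $T'_q$ of $y$ places mass $\alpha$ on $q$ and mass $(1-\alpha)/(2^n-1)$ on each $y \neq q$: the $x = p$ contribution of $T_p$ forces $y = q$, while the $x \neq p$ contributions are spread uniformly over ordered distinct pairs by pairwise independence. Consequently the quantity of interest equals
\begin{equation*}
	\E_{q \leftarrow R}\!\left[\, \alpha \, f_q(q) \;+\; \frac{1-\alpha}{2^n-1} \sum_{y \neq q} f_q(y)\right].
\end{equation*}

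Finally, I would invoke $\eta$-correctness of $\CP$ with respect to $\Dhalffam$: for every $q$,
\begin{equation*}
	\tfrac{1}{2}\, f_q(q) \;+\; \tfrac{1}{2(2^n-1)}\sum_{y \neq q} f_q(y) \;\geq\; 1 - \eta.
\end{equation*}
Writing $A := f_q(q)$ and $B := \frac{1}{2^n-1}\sum_{y \neq q} f_q(y)$, both of which lie in $[0,1]$, this inequality becomes $A + B \geq 2 - 2\eta$, which combined with $A, B \leq 1$ forces both $A \geq 1 - 2\eta$ and $B \geq 1 - 2\eta$. Hence the convex combination $\alpha A + (1-\alpha) B$ is also at least $1 - 2\eta$ for every $\alpha \in [0,1]$ and every $q$, and averaging over $q$ yields the claim. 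The only step that requires any attention is the pairwise-independence computation of $T'_q$; the remaining convex-combination estimate is elementary. No serious obstacle is anticipated.
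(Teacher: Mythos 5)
Your proof is correct and takes essentially the same approach as the paper: both extract from $\eta$-correctness w.r.t.\ $\Dhalf_q$ the two separate bounds $f_q(q)\geq 1-2\eta$ and $\frac{1}{2^n-1}\sum_{y\neq q}f_q(y)\geq 1-2\eta$ (the $A+B\geq 2-2\eta$ with $A,B\leq 1$ trick), and both use pairwise independence to spread the $x\neq p$ contribution uniformly over ordered distinct pairs. The paper states the two bounds as pointwise correctness at $x=p$ and at each $x'\neq p$ and lets correctness under an arbitrary distribution follow trivially; your phrasing via the induced distribution on $(q,y)$ and the explicit convex combination in $\alpha$ is a mildly more compact packaging of the identical computation.
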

\begin{proof}
For any $p,x\in\{0,1\}^n$, and bit $b$, the probability that
$\Eval$ outputs $b$ on input $x$, when evaluating the program
${\Protect}(p)$ is
$\Tr(\ketbra{b}{b} {\Eval}({\Protect}(p),x))$.
By the $\eta$-correctness of $\CP$ under the distribution $\Dhalffam$,
we have, for any point~$y\in\{0,1\}^n$:
\begin{align*}
\frac{1}{2}\Tr\left(\ketbra{1}{1} {\Eval}({\Protect}(y),y) \right)+\frac{1}{2}\sum_{x\neq y}\frac{1}{2^n-1}\Tr\left(\ketbra{0}{0} {\Eval}({\Protect}(y),x) \right)
&\geq 1-\eta,
\end{align*}
so
\begin{align}
\frac{1}{2}\Tr\left(\ketbra{1}{1} {\Eval}({\Protect}(y),y) \right)
&\geq \frac{1}{2}-\eta,\label{eq:delta-corr1}\\
\mbox{and }\;\;\frac{1}{2}\sum_{x\neq y}\frac{1}{2^n-1}\Tr\left(\ketbra{0}{0} {\Eval}({\Protect}(y),x) \right)
&\geq \frac{1}{2}-\eta.\label{eq:delta-corr2}
\end{align}

Fix $p$. The probability that ${\sf MIX^{CP}.Eval}({\sf MIX^{CP}.Protect}(p),p)$ outputs the correct value of $1$ is:
\begin{align*}
& \Tr\left( \ketbra{1}{1}{\sf MIX^{CP}.Eval}({\sf MIX^{CP}.Protect}(p),p) \right)\\
={}& \sum_{r\in{\cal R}}\frac{1}{|{\cal R}|}\Tr\left(\ketbra{1}{1}{\sf MIX^{CP}.Eval}((r,{\Protect}(h_r(p))),p)\right)\\
={}& \sum_{r\in{\cal R}}\frac{1}{|{\cal R}|}\Tr\left(\ketbra{1}{1}{\Eval}({\Protect}(h_r(p))),h_r(p))\right)\\
\geq & 1-2\eta, & \mbox{(by \eqref{eq:delta-corr1}).}
\end{align*}

For any $p'\neq p$, the probability that ${\sf MIX^{CP}.Eval}({\sf MIX^{CP}.Protect}(p),p')$ outputs the correct value of $0$~is:
\begin{align*}
& \Tr\left(\ketbra{0}{0}{\sf MIX^{CP}.Eval}({\sf MIX^{CP}.Protect}(p),p') \right)\\
={}& \sum_{r\in{\cal R}}\frac{1}{|{\cal R}|}\Tr\left(\ketbra{0}{0}{\Eval}({\Protect}(h_r(p)),h_r(p'))\right)\\
={}& \frac{1}{|{\cal R}|}\sum_{y\in\{0,1\}^n}\sum_{x\neq y}\sum_{\substack{r:h_r(p)=y,\\ h_r(p')=x}}\Tr\left(\ketbra{0}{0}{\Eval}({\Protect}(y),x)\right)\\
={}&\frac{1}{2^n}\sum_{y\in\{0,1\}^n}\frac{1}{2^n-1}\sum_{x\neq y}\Tr\left(\ketbra{0}{0}{\Eval}({\Protect}(y),x)\right) & \mbox{(by pairwise independence)}\\
\geq & 1-2\eta,
\end{align*}
by \eqref{eq:delta-corr2}, completing the proof.
\end{proof}

For the security proof, we will actually show that any mixture of malicious-malicious secure schemes is malicious-malicious secure (\cref{thm:MX}). We first show that if ${\Protect}(p)$ is an $\epsilon$-malicious-malicious secure encoding, then for each $r$, the scheme that encodes $p$ as ${\Protect}(h_r(p))$ is also $\epsilon$-malicious-malicious secure (\cref{lem:hrp}). The combination of these two facts completes the proof of \cref{thm:mm-correct}, since $\sf MIX^{CP}$ is a mixture, in the sense of \cref{thm:MX}, of schemes of the form described in \cref{lem:hrp}.

\begin{lemma}\label{lem:hrp}
Suppose the scheme $\CP$ is $\epsilon$-malicious-malicious secure with respect to the uniform distribution on points $R$ and any symmetric set of challenge distributions $\{D_p\}_{p\in\{0,1\}^n}$, and let $\pi$ be any permutation on $\{0,1\}^n$. Then if ${\sf CP'.Protect}(p)={\Protect}(\pi(p))$, ${\CP}'$ is $\epsilon$-malicious-malicious secure with respect to $D$ and~$\{D_p\}_p$.
\end{lemma}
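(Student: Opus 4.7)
The plan is to prove this by a reduction. Given an adversary $\adv' = (\pirate', \adv_1', \adv_2')$ attacking $\CP'$ under the distribution $R$ on points and the symmetric family $\{D_p\}_p$ of challenge distributions, we construct an adversary $\adv = (\pirate, \adv_1, \adv_2)$ attacking $\CP$ under $R$ and a related symmetric family $\{\tilde{D}_q\}_q$ defined by $\tilde{D}_q(y_1,y_2) = D_{\pi^{-1}(q)}(\pi^{-1}(y_1),\pi^{-1}(y_2))$. The key observation is that $\Protect(q) = \Protect(\pi(\pi^{-1}(q))) = {\sf CP'.Protect}(\pi^{-1}(q))$, so the program $\adv$ receives from $\CP$'s challenger, when interpreted as $q = \pi(p')$ with $p' = \pi^{-1}(q)$, looks exactly like a $\CP'$-program for the point~$p'$.

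First I would define the reduction: on input $\Protect(q)$, $\pirate$ runs $\pirate'$ verbatim on this state and forwards the two output registers to Bob and Charlie. On challenge $y_i$, $\adv_i$ computes $x_i = \pi^{-1}(y_i)$ and feeds $x_i$ to $\adv_i'$, returning whatever bit $\adv_i'$ outputs. This is a valid malicious-malicious adversary against $\CP$. Since $R$ is uniform and $\pi$ is a permutation, $q \gets R$ is distributed identically to $\pi(p')$ for $p' \gets R$, so the joint distribution of $(p', x_1, x_2)$ in the simulated $\CP'$-game agrees with the distribution of $(\pi^{-1}(q), \pi^{-1}(y_1), \pi^{-1}(y_2))$ in the $\CP$-game, exactly by our choice of $\tilde{D}_q$. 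Because $P_q(y_i) = 1 \iff y_i = q \iff \pi^{-1}(y_i) = \pi^{-1}(q) \iff P_{\pi^{-1}(q)}(\pi^{-1}(y_i)) = 1$, the winning conditions for $\adv$ in $\CP$'s game and $\adv'$ in $\CP'$'s game coincide identically on every execution path. Hence $\Pr[\ExperimentFree_{\adv,\CP}] = \Pr[\ExperimentFree_{\adv',\CP'}]$.

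Next I would verify the two hypotheses needed to invoke $\CP$'s security: that $\{\tilde{D}_q\}_q$ is symmetric, and that the relevant trivial guessing probabilities match. For symmetry, I would check that for any permutation $\sigma$,
\begin{equation*}
\tilde{D}_{\sigma(q)}(\sigma(y_1),\sigma(y_2))
=
D_{\pi^{-1}\sigma(q)}(\pi^{-1}\sigma(y_1),\pi^{-1}\sigma(y_2))
=
D_{\sigma'\pi^{-1}(q)}(\sigma'\pi^{-1}(y_1),\sigma'\pi^{-1}(y_2))
\end{equation*}
where $\sigma' = \pi^{-1}\sigma\pi$; symmetry of $\{D_p\}$ applied to $\sigma'$ then gives back $D_{\pi^{-1}(q)}(\pi^{-1}(y_1),\pi^{-1}(y_2)) = \tilde{D}_q(y_1,y_2)$, combined with the fact that $R$ is invariant under any permutation. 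For the guessing probabilities, a direct change of variables $x_2 = \pi^{-1}(y_2)$, $p' = \pi^{-1}(q)$ in the definition~\eqref{eq:pmarg} shows that the marginal distribution of Charlie's input and the conditional distribution of the point given that input are simply relabelings by $\pi$, so the maxima in the expression for $\pmarg$ coincide; using uniformity of $R$ to absorb the change of variables on $q$, we conclude $\pmarg_{R,\{\tilde D_q\}_q} = \pmarg_{R,\{D_p\}_p}$.

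Combining these, $\epsilon$-malicious-malicious security of $\CP$ applied to $\adv$ with the distributions $R$ and $\{\tilde{D}_q\}_q$ yields $\Pr[\ExperimentFree_{\adv,\CP}] \le \pmarg_{R,\{\tilde D_q\}_q} + \epsilon$, and the chain of equalities just established then gives $\Pr[\ExperimentFree_{\adv',\CP'}] \le \pmarg_{R,\{D_p\}_p} + \epsilon$, proving the lemma. The only slightly delicate step is the symmetry check for $\{\tilde{D}_q\}_q$, where one must use the hypothesis that $\{D_p\}$ is symmetric for \emph{any} permutation (not just $\pi$) in order to conclude invariance under the conjugated permutation $\sigma' = \pi^{-1}\sigma\pi$; otherwise everything is a bookkeeping argument.
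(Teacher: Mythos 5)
Your reduction is the same one the paper uses: keep the pirate unchanged, and have Bob and Charlie preprocess their challenges by applying $\pi^{-1}$ before handing them to $\adv_1'$ and $\adv_2'$. The only difference is in the bookkeeping. You route the argument through an auxiliary family $\{\tilde{D}_q\}_q$ with $\tilde{D}_q(y_1,y_2)=D_{\pi^{-1}(q)}(\pi^{-1}(y_1),\pi^{-1}(y_2))$ and then separately check that $\{\tilde{D}_q\}_q$ is symmetric and that $\pmarg_{R,\{\tilde D_q\}_q}=\pmarg_{R,\{D_p\}_p}$. But the symmetry hypothesis on $\{D_p\}_p$ already gives $D_{\sigma(p)}(\sigma(x_1),\sigma(x_2))=D_p(x_1,x_2)$ for every permutation $\sigma$; applying this with $\sigma=\pi$ yields $\tilde{D}_q=D_q$ identically, so the auxiliary family is literally the original one. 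This collapses both of your verification steps: $\{\tilde D_q\}_q$ is symmetric and has the same $\pmarg$ because it \emph{is} $\{D_p\}_p$. The paper exploits this directly in a change-of-variables computation inside the winning-probability sum (substitute $p\mapsto\pi^{-1}(p)$, $x_i\mapsto\pi^{-1}(x_i)$, then invoke symmetry of $D$ and the identity $P_{\pi^{-1}(p)}(\pi^{-1}(x_i))=P_p(x_i)$), concluding that the modified adversary attacks $\CP$ under the \emph{same} distributions $R,\{D_p\}_p$, so security applies immediately. Your argument is not wrong, but it works harder than needed; in particular, under the natural reading where $\{D_p\}_p$ is a fixed symmetric family (not a universally quantified one), you would need the observation $\tilde D_q = D_q$ to justify invoking security against $\{\tilde D_q\}_q$, and it is worth making that explicit.
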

\begin{proof}
Let $(\pirate,\adv_1,\adv_2)$ be an adversary for ${\CP}'$ in $\ExperimentFree$ (see \cref{fig:freeloading-experiment}) with success probability $q$, where $\pirate$'s action is given by the CPTP map $\Phi_{\cal P}:{\cal L}({\sf Y})\rightarrow {\cal L}({\sf A}_1{\sf A}_2)$; the action of $\adv_1$ (Bob) is described by a set of two-outcome measurements $\{\Pi_x\}_{x\in\{0,1\}^n}$ on $\reg{A}_1$, such that on challenge input $x_1$, Bob does the measurement $\{\Pi_{x_1},I-\Pi_{x_1}\}$ on $\reg{A}_1$ to obtain the bit $b_1$; and similarly, the action of $\adv_2$ (Charlie) is described by a set of two-outcome measurements $\{\Lambda_x\}_{x\in \{0,1\}^n}$ on $\reg{A}_2$.
Let $\Pi_{x_1}^1=\Pi_{x_1}$ and $\Pi_{x_1}^0=I-\Pi_{x_1}$, so that $\Pi_{x_1}^{P_p(x_1)}$ projects onto the part of Bob's input that leads Bob to output the correct answer;
and similarly define $\Lambda_{x_2}^b$. Then, since $p$ is sampled with probability $R(p)=1/2^n$, and given that $p$ is sampled, a challenge $(x_1,x_2)$ is sampled with probability $D_p(x_1,x_2)$, we have:
\begin{align*}
q={}& \sum_{p,x_1,x_2}\frac{1}{2^n}D_p(x_1,x_2)\Tr\left(((\Pi_{x_1}^{P_p(x_1)})^{{\sf A}_1}\otimes(\Lambda_{x_2}^{P_p(x_2)})^{{\sf A}_2})\Phi_{\cal P}({\CP}'{\sf .Protect}(p))\right) \\
={}& \sum_{p,x_1,x_2}\frac{1}{2^n}D_p(x_1,x_2)\Tr\left((\Pi_{x_1}^{P_p(x_1)}\otimes\Lambda_{x_2}^{P_p(x_2)})\Phi_{\cal P}({\Protect}(\pi(p)))\right) \\
={}& \sum_{p,x_1,x_2}\frac{1}{2^n}D_p(\pi^{-1}(x_1),\pi^{-1}(x_2))\Tr\left((\Pi_{\pi^{-1}(x_1)}^{P_{\pi^{-1}(p)}(\pi^{-1}(x_1))}\otimes\Lambda_{\pi^{-1}(x_2)}^{P_{\pi^{-1}(p)}(\pi^{-1}(x_2))})\Phi_{\cal P}({\Protect}(p))\right) \\
={}& \sum_{p,x_1,x_2}\frac{1}{2^n}D_p(x_1,x_2)\Tr\left((\Pi_{\pi^{-1}(x_1)}^{P_p(x_1)}\otimes\Lambda_{\pi^{-1}(x_2)}^{P_p(x_2)})\Phi_{\cal P}({\Protect}(p))\right),
\end{align*}
where in the last step, we used the symmetry of $\{D_p\}_p$.
Then letting $\Pi_x'=\Pi_{\pi^{-1}(x)}$ and $\Lambda_x'=\Lambda_{\pi^{-1}(x)}$, and defining $\adv_1'$ and $\adv_2'$ so that their respective strategies are to perform the measurements $\{\Pi_{x_1}',I-\Pi_{x_1}'\}$ and $\{\Lambda_{x_2}',I-\Lambda_{x_2}'\}$ upon receiving their respective challenges $x_1$ and $x_2$, we have that $(\pirate,\adv_1',\adv_2')$ is an adversary for $\CP$ with success probability $q$.
Thus, we must have $q\leq \frac{1}{2}+\epsilon$.
\end{proof}

\begin{theorem}\label{thm:MX}
Let $D$ be any distribution on points, and $\{D_p\}_p$ any set of challenge distributions.
Suppose $\{{\CP}_r=({\CP}_r.{\sf Protect},{\CP}_r.{\sf Eval})\}_{r\in{\cal R}}$ is a family of point function encoding schemes, each with $\epsilon$-malicious-malicious security with respect to  $D$ and $\{D_p\}_p$. Define a scheme ${\sf MX}$ by
\begin{description}
\item[\rm${\sf MX.Protect}(p)$:] On input $p$, output $\sum_{r\in{\cal R}}\frac{1}{|{\cal R}|} \ketbra{r}\otimes {\CP}_r.{\sf Protect}(p)$.
\item[\rm${\sf MX.Eval}((r,\sigma),x)$:] On input $x$ and program $(r,\sigma)$, output ${\CP}_r.{\sf Eval}(\sigma,x)$.
\end{description}
Then ${\sf MX}$ is $\epsilon$-malicious-malicious secure with respect to $D$ and $\{D_p\}_p$.
\end{theorem}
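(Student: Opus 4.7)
The plan is to directly reduce an attack on $\sf MX$ to a family of attacks on the underlying schemes $\{{\CP}_r\}_r$. The crucial observation is that ${\sf MX.Protect}(p) = \E_r \ketbra{r}\otimes {\CP}_r.{\sf Protect}(p)$ is a classical mixture over $r$, with $r$ stored in a dedicated classical register of Pete's input. Since Pete's CPTP map $\Phi_{\pirate}$ is linear, the entire winning probability decomposes as an expectation over $r$, and each term in this expectation is (up to re-labelling) the winning probability of some adversary against ${\CP}_r$.

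First, I would fix an adversary $\adv = (\pirate, \adv_1, \adv_2)$ for $\ExperimentFree_{\adv, \sf MX}$, specified by a CPTP map $\Phi_{\pirate}$ and two-outcome measurement families $\{\Pi_x^b\}$ on $\reg{A}_1$ and $\{\Lambda_x^b\}$ on $\reg{A}_2$. Substituting the definition of $\sf MX.Protect(p)$ into the expression for the success probability and using linearity of $\Phi_{\pirate}$ and the trace, we obtain
\begin{equation}
\Pr[\ExperimentFree_{\adv, {\sf MX}}]
= \E_{r\in{\cal R}} \sum_{p,x_1,x_2} D(p)\, D_p(x_1,x_2)\,
\Tr\!\left((\Pi_{x_1}^{P_p(x_1)}\otimes\Lambda_{x_2}^{P_p(x_2)})\,
\Phi_{\pirate}(\ketbra{r}\otimes {\CP}_r.{\sf Protect}(p))\right).
\end{equation}

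Next, for each fixed $r$, define an adversary $\adv_r = (\pirate_r, \adv_1, \adv_2)$ against ${\CP}_r$, where $\pirate_r$ prepends $\ketbra{r}$ as an ancillary register to its input and then applies $\Phi_{\pirate}$; Bob and Charlie reuse the same measurement families. Then the inner sum in the display above is exactly $\Pr[\ExperimentFree_{\adv_r, {\CP}_r}]$, which by the assumed $\epsilon$-malicious-malicious security of ${\CP}_r$ is at most $\bar{p}^{\text{marg}} + \epsilon$. Since $\bar{p}^{\text{marg}}$ depends only on $D$ and $\{D_p\}_p$ and is therefore the same bound for every $r$, averaging over $r$ preserves the inequality and yields the desired $\Pr[\ExperimentFree_{\adv, {\sf MX}}] \leq \bar{p}^{\text{marg}} + \epsilon$.

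There is no genuine obstacle here: the argument is essentially a convexity/linearity manipulation. The only thing to verify carefully is that the register carrying $r$ is truly classical (by construction of ${\sf MX.Protect}$), so that the linear decomposition in $r$ is legitimate, and that the trivial guessing probability $\bar{p}^{\text{marg}}$ genuinely coincides on both sides of the reduction since we use the same distributions $D$ and $\{D_p\}_p$ throughout.
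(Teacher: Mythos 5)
Your proposal is correct and takes essentially the same approach as the paper: decompose the winning probability over the classical register $r$ by linearity, observe that each $r$ yields a valid adversary $(\pirate_r,\adv_1,\adv_2)$ against ${\CP}_r$ (where $\pirate_r$ hard-codes $\ketbra{r}$ as an ancilla to $\Phi_{\pirate}$), and conclude via the uniform bound $\bar{p}^{\text{marg}}+\epsilon$. The only cosmetic difference is that the paper phrases the final step as a pigeonhole argument (there exists an $r$ achieving at least the average), whereas you bound every $r$-term individually and then average; these are dual formulations of the same convexity fact.
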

\begin{proof}
Let $(\pirate,\adv_1,\adv_2)$ be an adversary for ${\sf MX}$ in $\ExperimentFree$ that succeeds with probability $q$, where $\pirate$'s action is given by the CPTP map
$\Phi_{\cal P}:{\cal L}({\sf Y})\rightarrow {\cal L}({\sf A}_1{\sf A}_2)$;
the action of $\adv_1$ (Bob) is described by a set of two-outcome measurements $\{\Pi_x\}_{x\in\{0,1\}^n}$ on $\reg{A}_1$, such that on challenge input $x_1$, Bob does the measurement $\{\Pi_{x_1},I-\Pi_{x_1}\}$ on $\reg{A}_1$ to obtain the bit $b_1$; and similarly, the action of $\adv_2$ (Charlie) is described by a set of two-outcome measurements $\{\Lambda_x\}_{x\in \{0,1\}^n}$ on $\reg{A}_2$.
Let $\Pi_{x_1}^1=\Pi_{x_1}$ and $\Pi_{x_1}^0=I-\Pi_{x_1}$, so that $\Pi_{x_1}^{P_p(x_1)}$ projects onto the part of Bob's input that leads Bob to output the correct answer;
and similarly define $\Lambda_{x_2}^b$. Then, since $p$ is sampled with probabilty $D(p)$, and given that $p$ is sampled, a challenge $(x_1,x_2)$ is sampled with probability $D_p(x_1,x_2)$, we have:
\begin{align*}
q={}& \sum_{p,x_1,x_2}D(p)D_p(x_1,x_2)\Tr\left(((\Pi_{x_1}^{P_p(x_1)})^{{\sf A}_1}\otimes(\Lambda_{x_2}^{P_p(x_2)})^{{\sf A}_2})\Phi_{\cal P}({\sf MX.Protect}(p))\right) \\
={}& \sum_{r\in {\cal R}}\frac{1}{|{\cal R}|}\sum_{p,x_1,x_2}D(p)D_p(x_1,x_2)\Tr\left((\Pi_{x_1}^{P_p(x_1)}\otimes\Lambda_{x_2}^{P_p(x_2)})\Phi_{\cal P}(\ketbra{r}\otimes {\CP}_r.{\sf Protect}(p))\right),
\end{align*}
so there exists $r\in {\cal R}$ such that
\begin{align*}
q\leq {}& \sum_{p,x_1,x_2}D(p)D_p(x_1,x_2)\Tr\left((\Pi_{x_1}^{P_p(x_1)}\otimes\Lambda_{x_2}^{P_p(x_2)})\Phi_{\cal P}(\ketbra{r}{r}\otimes {\CP}_r.{\sf Protect}(p))\right)\\
={}&\sum_{p,x_1,x_2}D(p)D_p(x_1,x_2)\Tr\left((\Pi_{x_1}^{P_p(x_1)}\otimes\Lambda_{x_2}^{P_p(x_2)})\Phi_r({\CP}_r.{\sf Protect}(p))\right),
\end{align*}
where $\Phi_r(\cdot) := \Phi_{\cal P}(\ketbra{r}{r}\otimes \cdot)$. Thus, if $\pirate'$ is an adversary who performs the map $\Phi_r$, $(\pirate',\adv_1,\adv_2)$ is an adversary for ${\CP}_r$ that succeeds with probability at least $q$, and so we must have $q\leq \frac{1}{2}+\epsilon$.
\end{proof}

\subsection{Secure Software Leasing from Honest-Malicious Copy Protection}
\label{sec:SSL}

In this section, we show that honest-malicious copy protection for point
functions implies secure software leasing for compute-and-compare
programs.

In fact, following~\Cref{fig:links}, we show in \Cref{sec:Copy-protect-implies-SSL} how an honest-malicious copy protection scheme for some set of functions $\calC$ can be used to create an SSL scheme for $\calC$ (\Cref{thm:CP-to-SSL}).  Next, in~\Cref{sec:SSLPF-to-SSLCC}, we use a result from~\cite{CMP20} along with our definitions of correctness and security for SSL schemes to show an SSL scheme for point functions implies an SSL scheme for compute-and-compare programs (\Cref{thm:SSLPF-to-SSLCC}). Putting these two together, we can demonstrate that honest-malicious copy protection for point functions implies secure software leasing for compute-and-compare functions.

\subsubsection{From Honest-Malicious Copy Protection to SSL}
\label{sec:Copy-protect-implies-SSL}

Here, we show that a copy protection scheme for a set of Boolean
circuits $\mc{C}$ on $n$-bits that is correct with respect to \emph{two} families of distributions, $\{T_C\}_{C\in\calC}$ and $\{T_C'\}_{C\in\calC}$, and honest-malicious secure with
respect to the circuit distribution $D$ on $\calC$, and the challenge distributions $\{T_C' \times T''_C\}_{C \in \mc{C}}$,
can be used to construct an SSL scheme for $\mc{C}$ that is correct with respect to $\{T_C\}_{C\in\mc{C}}$ and secure with
respect to $D$ and $\{T''_C\}_{C \in \mc{C}}$. Here, $T_C'\times T_C''$ denotes the product distribution of the two distributions $T_C'$ and $T_C''$, which are both distributions on $\{0,1\}^n$.

Let $\CPC = (\CPCProt, \CPCEval)$ be a copy protection scheme for a set
of $n$-bit Boolean circuits~$\calC$. We define an SSL scheme for $\calC$, $\SSLC = (\SSLCGen, \SSLCLease, \SSLCEval, \SSLCVerify)$ as follows:
\begin{description}
	\item[\rm$\SSLCGen$:] Output an empty secret key $\sk = \emptyset$.
	\item[\rm$\SSLCLease(C)$:] As the secret key is empty, the only input is the circuit $C$. On input $C$, output $\rho = \CPCProt(C)$.
	\item[\rm$\SSLCEval(\rho, x)$:] On input $\rho\in{\cal D}(\reg{Y})$ and $x \in \bool^{\len}$, run the program-preserving version of $\CPCEval$, $\overline{\CPCEval}(\rho,x)$, described in \cref{sec:reuse} and output the resulting bit $b$ and post-evaluated program $\tilde{\rho}\in\reg{Y}$.
	\item[\rm$\SSLCVerify(C, \sigma)$:] As the secret key is empty, the only inputs are the circuit $C$ and a state $\sigma\in\reg{Y}$. Sample $x \leftarrow T_C'$ and output 1 if and only if $\CPCEval(\sigma, x)$ is $C(x)$.
\end{description}

We prove the following.
\begin{theorem}
\label{thm:CP-to-SSL}
Suppose the scheme $\CPC$ is a copy protection scheme for circuits $\calC$, that is $\eta$-correct with respect to
$\{T_C\}_{C \in \mc{C}}$, $\eta$-correct with respect
to $\{T'_C\}_{C \in \mc{C}}$, and $\epsilon$-honest-malicious secure with
respect to the distribution $D$ on $\calC$ and challenge distributions $\{T'_C \times T''_C\}_{C \in \mc{C}}$. Then the scheme
$\SSLC$, constructed from $\CPC$ as described above, is an SSL scheme
for $\calC$ that is $\eta$-correct with respect to
$\{T_C\}_{C \in \mc{C}}$ and $\epsilon$-secure with respect to the distributions
$D$ and $\{T''_C\}_{C \in \mc{C}}$.
\end{theorem}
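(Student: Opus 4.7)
The plan is to verify correctness in a routine way, and then to prove security by a direct reduction from the copy-protection game $\ExperimentFree$ to the SSL game $\ExperimentSSL$.

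For correctness, both parts follow immediately from the hypotheses on $\CPC$. For correctness of evaluation, note that $\overline{\CPCEval}(\rho,x)$ produces exactly the same distribution on output bits as $\CPCEval(\rho,x)$: the extra CNOT-and-uncompute machinery of \cref{fig:pure-eval} is only used to recover the post-evaluated program, and does not change the marginal distribution of the output bit. Hence $\eta$-correctness of $\CPC$ with respect to $\{T_C\}_{C \in \mc{C}}$ transfers verbatim to $\SSLCEval$. For correctness of verification, $\SSLCVerify(\emptyset,\rho,C)$ samples $x \leftarrow T'_C$ and accepts iff $\CPCEval(\rho,x) = C(x)$; thus its acceptance probability is exactly $\E_{x \leftarrow T'_C}\Tr[\ketbra{C(x)}\CPCEval(\rho,x)] \geq 1-\eta$, by $\eta$-correctness of $\CPC$ with respect to $\{T'_C\}_{C \in \mc{C}}$.

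For security, the idea is that the SSL adversary can be repackaged, without changing its winning probability, as an honest-malicious adversary against $\CPC$. Let $\adv$ be any SSL adversary, described by a CPTP map $\Phi_\adv : \mc{L}(\reg{Y}) \to \mc{L}(\reg{YA})$ and a family of two-outcome measurements $\{\Pi_x\}_{x \in \{0,1\}^n}$ on $\reg{A}$. I construct an honest-malicious adversary $\hatAdv = (\pirate, \Eval, \adv_2)$ for $\CPC$ as follows: $\pirate$ applies $\Phi_\adv$ to its input and sets $\reg{A}_1 := \reg{Y}$ (the register to be handed to Bob, who runs $\CPCEval$ honestly) and $\reg{A}_2 := \reg{A}$ (to be handed to Charlie); and $\adv_2$ uses precisely the measurements $\{\Pi_x\}_x$. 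With this reduction, when one runs $\ExperimentFree_{\hatAdv,\CPC}$ with circuit distribution $D$ and challenge distribution $\{T'_C \times T''_C\}_{C \in \mc{C}}$, Bob's challenge $x_1$ is distributed as $T'_C$ and his output bit $b_1$ is exactly the bit produced by $\CPCEval$ on the returned register; this bit equals $C(x_1)$ iff $\SSLCVerify(\emptyset,\sigma,C)$ accepts in the corresponding SSL run. Similarly, Charlie's output bit $b_2$ equals $\adv$'s output bit $b$ on challenge $x_2 \leftarrow T''_C$. Therefore the two games accept on exactly the same underlying event, and $\Pr[\ExperimentSSL_{\adv,\SSLC}] = \Pr[\ExperimentFree_{\hatAdv,\CPC}]$.

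Applying $\epsilon$-honest-malicious security of $\CPC$ bounds the right-hand side by $\pmarg_{D,\{T'_C \times T''_C\}} + \epsilon$. The last step is to observe that $\pmarg_{D,\{T'_C \times T''_C\}} = \pind_{D,\{T''_C\}}$: in the definition of $\pmarg$ from \cref{eq:pmarg}, the joint distribution $\tilde D$ on $\mc{C} \times \{0,1\}^n$ obtained by sampling $C \leftarrow D$, $(x_1,x_2) \leftarrow T'_C \times T''_C$ and keeping only $x_2$ coincides with the joint distribution obtained by sampling $C \leftarrow D$, $x \leftarrow T''_C$, which is exactly what appears in the definition of $\pind$ from \cref{eq:pind}. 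The marginal $\hat D$ on the challenge and the conditional $\hat D_x$ on circuits are therefore identical in both definitions, so the two guessing probabilities are equal. Combining these observations yields $\Pr[\ExperimentSSL_{\adv,\SSLC}] \leq \pind_{D,\{T''_C\}} + \epsilon$, as required. I expect no substantial obstacle; the main care needed is simply to check that running $\CPCEval$ on the register returned by $\adv$ exactly implements $\SSLCVerify$, so that the verification bit $v$ in $\ExperimentSSL$ and Bob's correctness indicator in $\ExperimentFree$ coincide.
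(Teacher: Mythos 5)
Your proposal is correct and follows essentially the same reduction as the paper: rewrite $\SSLCEval$ and $\SSLCVerify$ in terms of $\CPCEval$ to get both correctness claims, then repackage the SSL adversary $(\Phi_\adv,\{\Pi_x\})$ as the pirate map and Charlie's measurements of an honest-malicious adversary, observe that the two games' winning events coincide in distribution, and finish by noting $\pmarg_{D,\{T'_C\times T''_C\}}=\pind_{D,\{T''_C\}}$. The paper frames the security step as a proof by contradiction while you argue the equality of winning probabilities directly, but the underlying reduction and the key observation that the Lessor's verification is exactly Bob's honest evaluation on a fresh $x_1\leftarrow T'_C$ are identical.
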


We prove correctness and security separately. We begin with correctness.

\begin{lemma}
If the scheme $\CPC$ is $\eta$-correct with respect to a family of distributions $\{T_C\}_{C\in\calC}$ and also with respect to the family $\{T_C'\}_{C\in\calC}$ used in the definition of $\SSLC$, then the scheme $\SSLC$ described above is $\eta$-correct with respect to $\{T_C\}_{C\in\calC}$.
\end{lemma}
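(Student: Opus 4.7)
The plan is to verify the two conditions of $\eta$-correctness for SSL in \Cref{def:ssl-correct} separately, reducing each to the assumed $\eta$-correctness of the copy protection scheme $\CPC$ with respect to one of the two distribution families. The proof is essentially bookkeeping, so the main work is just unpacking the definitions of $\SSLCEval$ and $\SSLCVerify$ and matching the output distributions to those of $\CPCEval$.

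First I would handle correctness of evaluation. Note that $\SSLCEval(\rho,x)$ is defined to run $\overline{\CPCEval}(\rho,x)$ from \Cref{sec:reuse}. That construction applies a unitary purification $W_x$ of $\CPCEval$, coherently copies the output bit via a CNOT onto a fresh register $\reg{O}'$, and then uncomputes $W_x$. Measuring $\reg{O}'$ therefore yields a bit with exactly the same distribution as measuring the output register of $\CPCEval(\rho,x)$ directly. Consequently
\begin{equation}
\Tr\!\left(\ketbra{C(x)}\SSLCEval(\rho,x)\right)
=\Tr\!\left(\ketbra{C(x)}\CPCEval(\rho,x)\right),
\end{equation}
and taking expectations over $x\leftarrow T_C$ with $\rho=\CPCProt(C)$ and invoking $\eta$-correctness of $\CPC$ with respect to $\{T_C\}_{C\in\calC}$ gives the required bound.

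Next I would handle correctness of verification. By construction, $\SSLCVerify(C,\sigma)$ samples $x\leftarrow T_C'$, runs $\CPCEval(\sigma,x)$, and accepts iff the outcome equals $C(x)$. Therefore, for $\sigma=\rho=\CPCProt(C)$,
\begin{equation}
\Tr\!\left(\ketbra{1}\SSLCVerify(C,\rho)\right)
=\E_{x\leftarrow T_C'}\Tr\!\left(\ketbra{C(x)}\CPCEval(\rho,x)\right),
\end{equation}
which is $\geq 1-\eta$ by the assumed $\eta$-correctness of $\CPC$ with respect to $\{T_C'\}_{C\in\calC}$.

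Since both conditions hold with the same parameter $\eta$, I conclude that $\SSLC$ is $\eta$-correct with respect to $\{T_C\}_{C\in\calC}$. There is no real obstacle here; the only subtlety worth stating explicitly is the first step, namely the observation that the reusable wrapper $\overline{\CPCEval}$ preserves the output bit statistics of $\CPCEval$, so the correctness of the underlying copy protection scheme transfers verbatim.
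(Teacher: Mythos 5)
Your proof is correct and follows the same route as the paper: reduce each of the two SSL correctness conditions to the assumed $\eta$-correctness of $\CPC$, once against $\{T_C\}_C$ and once against $\{T_C'\}_C$. The only difference is that you explicitly justify the equality $\Tr\!\left(\ketbra{C(x)}\SSLCEval(\rho,x)\right) = \Tr\!\left(\ketbra{C(x)}\CPCEval(\rho,x)\right)$ by noting that the reusable wrapper $\overline{\CPCEval}$ preserves the output bit distribution, whereas the paper asserts this equality without comment; your addition is correct and a reasonable thing to spell out.
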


\begin{proof}
	The proof proceeds directly by considering the definition of correctness for $\CPC$ and $\SSLC$ from~\Cref{def:cp-correct,def:ssl-correct}. For any $C \in \calC$, if $\rho = \SSLCLease(C) = \CPCProt(C)$, then
	\begin{align}
	\label{eq:corr1}
		\E_{x \leftarrow T_C} \Tr\left(\ketbra{C(x)}\SSLCEval(\rho, x)\right) = \E_{x \leftarrow T_C} \Tr\left(\ketbra{C(x)}\CPCEval(\rho, x)\right) \geq 1 - \eta,
	\end{align}
	where the last inequality uses that $\CPC$ is $\eta$-correct with respect to $\{T_C\}_{C\in\calC}$.
	
	To satisfy the correctness requirement for $\SSLCVerify$, note that by the construction of $\SSLCVerify$,
	 \begin{align}
	 \label{eq:corr2}
		\Tr\left(\ketbra{1}\SSLCVerify(C, \rho)\right) = \E_{x \leftarrow T_C'} \Tr\left(\ketbra{C(x)}\CPCEval(\rho, x)\right) \geq 1 - \eta,
	\end{align}
	where the last inequality uses that $\CPC$ is also $\eta$-correct with respect to $\{T_C'\}_{C\in\calC}$. Putting together~\Cref{eq:corr1,eq:corr2}, we can conclude that $\SSLC$ is $\eta$-correct with respect to $\{T_C\}_{C\in\calC}$.
\end{proof}

We move on to the security guarantees for $\SSLC$. Observe that, in an SSL scheme, the challenge is sent only after the leased copy is returned. This is in contrast to a copy protection scheme where the challenger does not see the adversary's output registers while sampling the challenge questions. However, the Lessor gains no advantage from this, since by definition of the security game $\ExperimentSSL$, the Lessor samples the challenge $x$ according to some fixed distribution, independent of what she receives from the adversary.

\begin{lemma}
\label{thm:CP_PF-to-SSL_PF}
If the scheme $\CPC$ is $\epsilon$-honest-malicious secure with respect to a circuit distribution~$D$, and challenge distributions $\{T_C'\times T_C''\}_{\calC}$, then the scheme $\SSLC$ described above is $\epsilon$-secure with respect to $D$ and $\{T_C''\}_{C\in\calC}$.
\end{lemma}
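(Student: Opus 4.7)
The plan is to proceed by a direct reduction: given any SSL adversary $\adv$ against $\SSLC$, I will construct an honest-malicious copy protection adversary $\hatAdv$ against $\CPC$ whose winning probability in $\ExperimentFree$ equals the winning probability of $\adv$ in $\ExperimentSSL$, and then invoke the $\epsilon$-honest-malicious security of $\CPC$.

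Fix an SSL adversary $\adv$ specified by a CPTP map $\Phi_\adv : \mc{L}(\reg{Y}) \to \mc{L}(\reg{YA})$ and a family of two-outcome measurements $\{\Pi_x\}_{x \in \{0,1\}^n}$ on $\reg{A}$. I build the honest-malicious adversary $\hatAdv = (\pirate, \CPCEval, \adv_2)$ as follows. The pirate $\pirate$ applies $\Phi_\adv$ to its input on $\reg{Y}$ and hands the resulting $\reg{Y}$ register to Bob (as $\reg{A}_1$) and the $\reg{A}$ register to Charlie (as $\reg{A}_2$). Bob, as required in the honest-malicious model, simply runs $\CPCEval$; Charlie uses the same family $\{\Pi_x\}_x$ as $\adv$. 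Now instantiate the honest-malicious game with circuit distribution $D$ and challenge distributions $\{T_C' \times T_C''\}_{C \in \calC}$.

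The key observation is a step-by-step correspondence between $\ExperimentFree_{\hatAdv, \CPC}$ and $\ExperimentSSL_{\adv, \SSLC}$. The sampling of $C \leftarrow D$ is identical; the encoded state $\CPCProt(C) = \SSLCLease(C)$ is identical; and $\pirate$ faithfully simulates $\Phi_\adv$. When the challenge $(x_1, x_2) \leftarrow T_C' \times T_C''$ is drawn, Bob running $\CPCEval$ on input $x_1$ and outputting $b_1 = C(x_1)$ corresponds \emph{exactly} to the acceptance event of $\SSLCVerify(C, \cdot)$, which by construction samples $x_1 \leftarrow T_C'$ and accepts iff $\CPCEval$ outputs $C(x_1)$. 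Simultaneously, Charlie measuring $\{\Pi_{x_2}, I - \Pi_{x_2}\}$ on $\reg{A}_2$ to obtain $b_2$ is identical to $\adv$'s response to SSL challenge $x = x_2$, and $b_2 = C(x_2)$ matches the SSL winning condition on the challenge bit. Hence
\begin{equation}
	\Pr[\ExperimentFree_{\hatAdv, \CPC}] = \Pr[\ExperimentSSL_{\adv, \SSLC}].
\end{equation}

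To conclude, I would verify that the trivial baselines coincide: $\pmarg$ for the joint distribution $T_C' \times T_C''$ depends only on the marginal of $x_2$, which is precisely $T_C''$, so the formulas defining $\pmarg$ in \cref{eq:pmarg} and $\pind$ in \cref{eq:pind} agree under these distributions. Applying the $\epsilon$-honest-malicious security of $\CPC$ then gives $\Pr[\ExperimentSSL_{\adv, \SSLC}] \leq \pmarg + \epsilon = \pind + \epsilon$, completing the argument. The only point requiring care — more a bookkeeping issue than an obstacle — is that in $\ExperimentSSL$ the verification precedes the challenge being sent to $\adv$, whereas in $\ExperimentFree$ Bob and Charlie receive their challenges simultaneously. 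However, $\adv$'s post-challenge action acts only on $\reg{A}$, which is disjoint from the register $\reg{Y}$ returned to the Lessor, so the two orderings yield identical joint statistics and the correspondence is exact.
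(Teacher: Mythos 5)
Your proposal is correct and follows essentially the same reduction as the paper: map $\Phi_\adv$ to the pirate's map, use Bob's honest $\CPCEval$ to simulate $\SSLCVerify$ (which by construction samples $x_1\leftarrow T_C'$ and runs $\CPCEval$), use Charlie's measurements $\{\Pi_x\}$ to simulate $\adv$'s final measurement, equate the two winning probabilities, and observe that $\pmarg = \pind$ because $x_2$'s marginal under $T_C'\times T_C''$ is exactly $T_C''$. The only cosmetic difference is that you phrase this as a direct reduction while the paper phrases it as a proof by contradiction; the construction, the probability identity, and the handling of the ordering issue (verification and Charlie's measurement commute as they act on disjoint registers) are the same.
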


The main intuition for the proof is to map the honest evaluation in the scheme $\CPC$ to the Lessor's verification procedure in the scheme $\SSLC$. The $\epsilon$-correctness of $\CPCEval$ ensures that the verification is accepted with sufficiently high probability. Next, we map the malicious user Charlie's ($\adv_2$) evaluation in $\ExperimentFree$ to the adversary's evaluation in $\ExperimentSSL$. Assuming that $\CPC$ is secure, we can bound Charlie's probability of guessing the right answer, which in turn bounds the adversary's probability of guessing the right answer. Putting it together, we can conclude that the corresponding SSL scheme $\SSLC$ is secure.

\begin{proof}
Let $\pind = \pind_{D, \{T_C''\}_{C\in\calC}}$ and $\pmarg = \pmarg_{D, \{T_C'\times T_C''\}_{C\in\calC}}$ and assume for the rest of the proof that $\ExperimentSSL$ is instantiated with circuit distribution $D$ and challenge ensemble $\{T_C''\}_{C\in\calC}$, and $\ExperimentFree$ is instantiated with circuit distribution $D$ and challenge ensemble $\{T_C'\times T_C''\}$.

The proof proceeds by contradiction i.e., we use a winning adversary against the scheme $\SSL$, $\adv_{\SSL}$ to construct a winning honest-malicious adversary for the scheme $\CPC$, $\hatAdv_{\CP}$.

Let $\adv_{\SSL}$ be an adversary for $\ExperimentSSL$, and suppose
\begin{equation}
	\Pr\big[ \ExperimentSSL_{\adv_{\SSL},\SSL} \big] > \pind + \epsilon \,.
\end{equation}
We show how to construct $\hatAdv_{\CP}$ that wins $\ExperimentFree_{\hatAdv_{\CP},\CP}$ with probability $> \pmarg + \epsilon$.

The behaviour of the adversary $\adv_{\SSL}$ can be described in two parts  (see also \cref{fig:SSL-experiment}).
First, the adversary applies an arbitrary CPTP map $\Phi_{\adv_{\SSL}}:{\cal L}(\reg{Y})\rightarrow{\cal L}(\reg{YA})$ to his input $\rho=\SSLCLease(\sk,C)$, sending the $\reg{Y}$ part to the Lessor
(Step 2 of $\ExperimentSSL$); and keeping the $\reg{A}$ part, for some arbitrary space $\reg{A}$, for himself. Later, when $\adv_{\SSL}$ receives the challenge $x$, he uses it to select a two outcome measurement $\{\Pi_x,I-\Pi_x\}$ with which to measure his register $\reg{A}$ to obtain a bit $b$ (Step 5 of $\ExperimentSSL$).
Construct an honest-malicious adversary $\hatAdv_{\CP} = (\pirate, \CPCEval, \adv_2)$ such that:
\begin{itemize}
\item $\pirate$'s behaviour is described by the map $\Phi_{\pirate}=\Phi_{\adv_{\SSL}}$, and
\item $\adv_2$'s behaviour is described by the two-outcome measurements $\{\Pi_x\}_{x\in\{0,1\}^n}$.
\end{itemize}
Then $\ExperimentFree_{\hatAdv_{\CP},\CP}$ proceeds as follows.
\begin{itemize}
\item The challenger samples $C \leftarrow D$ and sends $\rho = \CPCProt(C)$ to $\pirate$.
\item Upon receiving $\rho$, $\pirate$ computes $\sigma=\Phi_{\adv_{\SSL}}(\rho)\in {\cal D}(\reg{YA})$ and sends $\reg{Y}$ to Bob, who is controlled by the challenger in the honest-malicious setting, and $\reg{A}$ to Charlie.
\item The challenger samples $x_1 \leftarrow T'_C$ and sends it to Bob, and samples $x_2 \leftarrow T''_C$ and sends it to Charlie.
\item Bob runs $\CPCEval$ on ${\reg{Y}}$ and $x_1$ and outputs the resulting bit $b_1$.
\item Charlie measures $\reg{A}$ using the measurement $\{\Pi_{x_2},I-\Pi_{x_2}\}$ and outputs the resulting bit $b_2$.
\item The challenger outputs 1 if and only if $b_1 = C(x_1)$ and $b_2 = C(x_2)$.
\end{itemize}

To see why this construction works, observe that the pirate $\pirate$ is essentially acting as $\adv_{\SSL}$ in Step 2 of $\ExperimentSSL$, the only difference is that after applying $\Phi_{\adv_{\SSL}}$, $\adv_{\SSL}$ keeps the register $\reg{A}$ for himself, whereas $\pirate$ sends it to Charlie. Later Charlie behaves just as $\adv_{\SSL}$ behaves in Step 5 of $\ExperimentSSL$.
If we define $\Pi_x^1=\Pi_x$ and $\Pi_x^0=I-\Pi_x$, then $\Pi_{x_2}^{C(x_2)}$ is the projector onto the part of Charlie's input state that will lead Charlie to output the correct bit, $b_2=C(x_2)$ in $\ExperimentFree$. It's also the projector onto the part of $\adv_{\SSL}$'s memory $\reg{A}$
that leads to him outputting the correct bit $b=C(x)$ in $\ExperimentSSL$. Then, letting $\Psi_{\sf Ver}^C$ denote the map induced by $\SSLCVerify(C,\cdot)$, we have:
\begin{align*}
\Pr[\ExperimentSSL_{\adv_{\SSL},\SSLC}] &=\E_{C\leftarrow D}\E_{\substack{x_2\leftarrow T_C''}} \Tr\left((\ketbra{1}\otimes \Pi_{x_2}^{C(x_2)})(\Psi_{\sf Ver}^{C}\otimes \Id)\circ\Phi_{\adv_{\SSL}}(\SSLCLease(C)) \right)\\
&=\E_{C\leftarrow D}\E_{\substack{x_2\leftarrow T_C''}} \Tr\left((\ketbra{1}\otimes \Pi_{x_2}^{C(x_2)})(\Psi_{\sf Ver}^{C}\otimes \Id)\circ\Phi_{\pirate}(\CPCProt(C)) \right).
\end{align*}
Let $\Psi_{\sf Eval}^x$ denote the map induced by $\CPCEval(\cdot,x)$. Then $\Psi_{\sf Ver}^C$ works by sampling $x_1\leftarrow T_C'$ and applying $\Psi_{\sf Eval}^{x_1}$, and outputting 1 if and only if the result is $C(x)$. Thus, we can continue from above:
\begin{align*}
\Pr[\ExperimentSSL_{\adv_{\SSL},\SSLC}] &=\E_{C\leftarrow D}\E_{\substack{x_1\leftarrow T_C'\\ x_2\leftarrow T_C''}} \Tr\left((\ketbra{C(x_1)}\otimes \Pi_{x_2}^{C(x_2)})(\Psi_{\sf Eval}^{x_1}\otimes \Id)\circ\Phi_{\pirate}(\CPCProt(C)) \right)\\
&=\Pr[\ExperimentFree_{\hatAdv_{\CP},\CPC}].
\end{align*}
By assumption, $\Pr[\ExperimentSSL_{\adv_{\SSL},\SSLC}] > \pind + \epsilon$, so
$\Pr[\ExperimentFree_{\hatAdv_{\CP},\CPC}] > \pind+\epsilon$.

Additionally, for any $C$, $x_2$ is distributed according to $T_C''$, independent of $x_1$. This is also the challenge distribution in $\ExperimentSSL$.
Therefore, using~\Cref{eq:pind,eq:pmarg}, $\pind = \pmarg$ which implies that,
\begin{equation*}
	\Pr[\ExperimentFree_{\hatAdv_{\CP},\CPC}] > \pmarg + \epsilon.
\end{equation*}
This contradicts the assumption that $\CP$ is $\epsilon$-honest-malicious secure and completes the proof.
\end{proof}

We remark that the previous proof did not make any assumptions about the abilities of the adversaries. Hence, if the copy protection scheme $\CPC$ achieves statistical security guarantees, then so does the corresponding SSL scheme $\SSLC$.

\subsubsection{From SSL for Point Functions to SSL for Compute-and-Compare Programs}
\label{sec:SSLPF-to-SSLCC}
In this section we present a restatement of a theorem due to \cite{CMP20}, which states that an SSL scheme for point functions that is $\epsilon$-secure with respect to a family of distributions can be modified to get an SSL scheme for compute-and-compare programs that is also $\epsilon$-secure with respect to a related family of distributions. In the spirit of the results from the previous section, we state this result with a more precise relationship between the distributions used for the point functions and the compute-and-compare programs.

Let $F$ denote any set of functions from $\{0,1\}^n$ to $\{0,1\}^m$. Let
$\calF = \{(f,y):f\in F,y\in\{0,1\}^m\}$ be the set of compute-and-compare circuits for $F$, where as with point functions, we conflate $(f,y)$ with a circuit $\CC_y^f$ for the function that outputs 1 on input $x$ if and only if $f(x)=y$.

Let $\SSLPF =$ ($\SSLPFGen$, $\SSLPFLease$, $\SSLPFEval$, $\SSLPFVerify$) be an SSL scheme for $m$-bit point functions. We define an SSL scheme for compute-and-compare functions $\calF$, $\SSLCC =$ ($\SSLCCGen$, $\SSLCCLease$, $\SSLCCEval$, $\SSLCCVerify$) as follows:
\begin{description}
\item[\rm$\SSLCCGen$:] Compute $\SSLPFGen$ and output the resulting secret key $\sk$.
\item[\rm$\SSLCCLease(\sk, (f, y))$:] On input secret key $\sk$ and $(f,y)\in {\cal F}$, output $(f, \rho)$ where $\rho = \SSLPFLease(\sk, P_y)$.
\item[\rm$\SSLCCEval ( (f, \rho), x )$:] On input $\rho\in{\cal D}(\reg{Y})$, $f\in F$, and $x \in \bool^n$ do the following:
	\begin{enumerate}
		\item Compute $y' = f(x)$.
		\item Compute $\SSLPFEval(\rho, y')$ to get an output bit $b$ and post-evaluated state $\tilde{\rho}$, and output $b$ and $(f,\tilde\rho)$.
	\end{enumerate}
\item[\rm$\SSLCCVerify (\sk, (f, y), \sigma)$:] On input secret key $\sk$, $(f,y)\in \cal F$ and $\sigma\in{\cal D}(\reg{Y})$, compute $\SSLPFVerify(\sk, P_y, \sigma)$ and output the resulting bit $v$.
\end{description}

Formally, we show the following theorem. The proof of correctness
follows directly from definitions and the security proof follows the
same lines as the one presented in \cite{CMP20}. They are given in \cref{sc:gen-proofs}.
\begin{theorem}
\label{thm:SSLPF-to-SSLCC}
We fix the following distributions.
\begin{itemize}
\item $D$: A distribution over compute-and-compare functions $\CC_y^f$, or equivalently, over $(f,y)\in\calF$. Fixing a function $f\in F$ induces a marginal distribution $D_f$ over $y\in\{0,1\}^m$, or equivalently, over $m$-bit point functions $P_y$.
\item $\{T_{f,y}^{CC}\}_{f,y}$ and $\{D_{f,y}^{CC}\}_{f,y}$: Families of distributions over inputs $x\in\{0,1\}^n$ to compute-and-compare functions $\CC_y^f$.
\item $\{T_{f,y}^{PF}\}_{f,y}$ and $\{D_{f,y}^{PF}\}_{f,y}$: Families of distributions over inputs $z\in\{0,1\}^m$ to $m$-bit point functions $P_y$, where $T_{f,y}^{PF}$ is defined from $T_{f,y}^{CC}$ by sampling $x\leftarrow T^{CC}_{f,y}$ and outputting $f(x)$; and $D_{f,y}^{PF}$ is defined similarly from $D_{f,y}^{CC}$.
\end{itemize}
Suppose that $\SSLPF$ is a secure software leasing scheme for point
functions such that, for every $f\in F$, $\SSLPF$ is $\eta$-correct with
respect to the distribution family $\{T_{f,y}^{PF}\}_{y\in\{0,1\}^m}$
and $\epsilon_f$-secure with respect to the circuit distribution $D_f$
and challenge distributions $\{D_{f, y}^{PF}\}_{y\in\{0,1\}^m}$ where
\begin{equation}
\label{eq:eps_f}
	\epsilon_f
	=
	\left(
		p^{\text{triv}}_{D,\{D^{CC}_{f,y}\}_{(f,y)}}
		-
		p^{\text{triv}}_{D_{f^*},\{D^{PF}_{f^*,y}\}_{y}}
	\right)
	+
	\epsilon.
\end{equation}
Then the scheme $\SSLCC$, constructed from $\SSLPF$ as described above, is an SSL scheme for compute-and-compare programs in $\calF$ that is $\eta$-correct with respect to the family $\{T_{f,y}^{CC}\}_{(f,y)\in \calF}$ and $\epsilon$-secure with respect to program distribution $D$ and challenge distributions $\{D_{f,y}^{CC}\}_{(f,y)\in\calF}$.
\end{theorem}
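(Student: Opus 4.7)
The plan is to establish correctness and security of $\SSLCC$ separately, following the template of the analogous proof in \cite{CMP20}.

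For correctness, the argument reduces to bookkeeping. Given $(f,y) \in \calF$, $\sk \leftarrow \SSLCCGen$, and $\rho = \SSLPFLease(\sk, P_y)$ so that $\SSLCCLease(\sk,(f,y)) = (f, \rho)$, the definition of $\SSLCCEval$ shows that $\SSLCCEval((f,\rho),x)$ produces the same classical-quantum output as $\SSLPFEval(\rho, f(x))$, and the algebraic identity $\CC_y^f(x) = P_y(f(x))$ holds. Because $T^{PF}_{f,y}$ is by construction the pushforward of $T^{CC}_{f,y}$ under $f$, averaging $x \leftarrow T^{CC}_{f,y}$ is the same as averaging $z = f(x) \leftarrow T^{PF}_{f,y}$, so the assumed $\eta$-correctness of $\SSLPF$ with respect to $\{T^{PF}_{f,y}\}_y$ immediately yields the required evaluation correctness of $\SSLCC$. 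Verification correctness is even simpler, since $\SSLCCVerify$ literally forwards to $\SSLPFVerify$ on $P_y$.

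For security, I would proceed by contradiction and a black-box reduction. Assume $\adv_{CC}$ wins $\ExperimentSSL_{\adv_{CC},\SSLCC}$ with probability $p_{CC}$ strictly exceeding $p^{\text{triv}}_{D,\{D^{CC}_{f,y}\}_{(f,y)}} + \epsilon$. For each $f \in F$, I construct a reduction $\adv_{PF}^f$ playing the PF-game parameterized by $D_f$ and $\{D^{PF}_{f,y}\}_y$: on receiving the lease $\rho = \SSLPFLease(\sk, P_y)$, it pairs $\rho$ with $f$ to form the $\SSLCC$ lease, runs $\adv_{CC}$ on it, and returns $\adv_{CC}$'s $\reg{Y}$-output to the lessor. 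The verification phase behaves identically because $\SSLCCVerify$ just calls $\SSLPFVerify$ on $P_y$. On receiving the challenge $z \leftarrow D^{PF}_{f,y}$, the reduction invokes the searchability of $f$ to produce some $x$ with $f(x) = z$, passes $x$ to $\adv_{CC}$, and outputs the resulting bit. The crucial identity $\CC_y^f(x) = P_y(f(x)) = P_y(z)$ ensures that a correct $\adv_{CC}$ guess is automatically a correct $\adv_{PF}^f$ guess.

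The main obstacle is that the distribution of $x$ obtained by search from $z \leftarrow D^{PF}_{f,y}$ need not match $D^{CC}_{f,y}$, so the reduction's success probability does not a priori equal the conditional win probability of $\adv_{CC}$ given $f$. The formula for $\epsilon_f$ is crafted precisely to absorb this mismatch: the gap $p^{\text{triv}}_{D,\{D^{CC}_{f,y}\}_{(f,y)}} - p^{\text{triv}}_{D_{f^*},\{D^{PF}_{f^*,y}\}_{y}}$ between the trivial baselines in the two games is what one must pay for replacing a CC challenge by a PF challenge plus a searchability call. I would choose $f^*$ non-uniformly to maximize the reduction's advantage over its PF-trivial baseline, and then chain the assumed $\epsilon_{f^*}$-security of $\SSLPF$ with the defining identity for $\epsilon_{f^*}$ to derive the inequality
\[
    p_{CC} \leq p^{\text{triv}}_{D_{f^*},\{D^{PF}_{f^*,y}\}_{y}} + \epsilon_{f^*} = p^{\text{triv}}_{D,\{D^{CC}_{f,y}\}_{(f,y)}} + \epsilon,
\]
contradicting the hypothesis. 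Justifying this inequality requires a careful comparison of trivial strategies across the two games, which is the technical heart of the argument and closely mirrors \cite{CMP20}.
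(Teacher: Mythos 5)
Your correctness argument matches the paper's: it's a direct translation of the evaluation/verification averages through the identity $\CC_y^f(x)=P_y(f(x))$ and the pushforward definition of $T^{PF}_{f,y}$.

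The security argument has a genuine gap in how the reduction handles the challenge, and a misreading of what $\epsilon_f$ is for. When the PF lessor sends $z\leftarrow D^{PF}_{f^*,y}$, your reduction uses searchability to produce \emph{some} $x$ with $f^*(x)=z$ and feeds that $x$ to $\adv_{CC}$. As you yourself note, this $x$ is not distributed like $D^{CC}_{f^*,y}$, so the reduction's win probability is not the conditional win probability of $\adv_{CC}$. You then claim the term $p^{\text{triv}}_{D,\{D^{CC}_{f,y}\}}-p^{\text{triv}}_{D_{f^*},\{D^{PF}_{f^*,y}\}}$ in $\epsilon_f$ is ``crafted precisely to absorb this mismatch,'' but that is not its role. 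That correction term exists only to reconcile the two \emph{trivial baselines}, which would differ even with a perfect distributional match; it says nothing about, and cannot compensate for, a mismatch in the distribution of the $x$ handed to $\adv_{CC}$. If the reduction passes arbitrary preimages, $\adv_{CC}$'s measurement $\{\Pi_x,I-\Pi_x\}$ is applied with a skewed distribution of $x$, and there is no a priori lower bound on the resulting win probability.

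The fix --- and what the paper actually does --- is to have $\adv_{PF}$ sample $x$ from the \emph{conditional} distribution of $D^{CC}_{f^*,y}$ restricted to the fiber $f^{*-1}(z)$. Since $D^{PF}_{f^*,y}$ is exactly the pushforward of $D^{CC}_{f^*,y}$ under $f^*$, this makes the joint distribution of $(y,x)$ in the reduction identical to the one in the CC game conditioned on $f=f^*$, so the reduction's win probability equals the conditional CC win probability exactly. This step does not need searchability at all: the adversary in this model is unbounded, so it can compute the whole preimage set and the exact conditional weights. Also, $f^*$ is not chosen ``to maximize the reduction's advantage over its PF baseline'' in the abstract; it is chosen (by an averaging argument) to be one $f$ for which the CC-conditional win probability already exceeds $p^{\text{triv}}_{D,\{D^{CC}_{f,y}\}}+\epsilon$. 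With that in hand, the $\epsilon_f$ identity then converts this bound into a contradiction with the assumed PF security at $f^*$.
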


\section{Authentication-based Copy Protection Scheme} %
\label{sec:auth-CP}                                   %

In this section, we show how to construct a copy protection scheme for
point functions, with honest-malicious security, from a total authentication scheme.

Recall that we assume that our circuits are searchable, which, for point
functions, implies that there is an efficient algorithm which can
produce the point $p$ from a circuit which computes its point function.
Thus, we will freely identify circuits for the point function $P_p$
simply with $p$. Specifically, our copy protection scheme will take as
input a point $p$ instead of a circuit.

\subsection{Construction and Correctness} %

Let
$\tsf{QAS} = \left(\tsf{QAS.Auth}, \tsf{QAS.Ver}\right)$ be an
$\epsilon$-total quantum authentication scheme, as in \cref{df:QAS},
with $\epsilon \leq \frac{1}{2}$ for a message space $\tsf{M}$ of
dimension greater than or equal to two with key set $\mc{K}=\{0,1\}^n$.
Fix some state $\ket{\psi} \in \tsf{M}$.

We recall that we assume that for every key $k$, the action of
$\tsf{QAS.Auth}$ with this key can be modeled by an isometry
$A_k : \tsf{M} \to \tsf{Y}$. Note that since $A_k$ is an isometry,
$A_kA_k^\dagger$ is the projector onto $\text{im}(A_k)$. Further, let
$V_k : \reg{Y} \to \reg{MFX}$ be an isometry which purifies the CPTP map
$\tsf{QAS.Ver}_k$ defined in \cref{eq:QAS-assumption-ver}, where the register
$\reg{X}$ corresponds to the Hilbert space used for this purification.
To simplify our notation, we will absorb $\sf X$ into the flag register,
which we no longer assume to be two-dimensional. We can still assume
that there is a unique accepting state $\ket{\text{Acc}}\in \sf F$.%
\footnote{This follows from correctness, since for every state
$\ket{\psi}$, we necessarily have $
	V_kA_k\ket{\psi}
	=
	\ket{\text{Acc}}_{\sf F}
	\ket{\psi}_{\reg{M}}
	\ket{X_\psi}_{\reg{X}}
$ for some state $\ket{X_\psi}$, and by the fact that $V_kA_k$ must
preserve inner products, we necessarily have $\ket{X_\psi}=\ket{X}$
independent of~$\ket{\psi}$. Thus, we can let
$\ket{\text{Acc}}_{\reg{F}}\ket{X}_{\reg{X}}$ be the accepting state on
$\reg{FX}$.} Thus, from here on, we assume that
$V_k:\reg{Y}\to \reg{MF}$ is an isometry, and $\reg{F}$ has dimension at
least two (but possibly larger) with $\ket{\text{Acc}}$ the accepting
state, and all orthogonal states rejecting.

Finally, we will write
$\overline{V}_k = (\bra{\text{Acc}}_{\reg{F}}\otimes I_{\reg{M}})V_k$
to denote the map which applies the verification, but only outputs the
state corresponding to the verification procedure accepting, corresponding to the procedure $\tsf{QAS.Ver}_k'$ described in \cref{sc:qas}. Then note that $\overline{V}_k=A_k^\dagger$.

From this authentication scheme and fixed state $\ket{\psi}$, which can be assumed without loss of generality to be $\ket{0}$, we
construct a copy protection scheme for point functions of length $n$,
$\tsf{AuthCP}$, as follows:

\begin{description}
\item[\rm$\mathsf{AuthCP.Protect}(p)$:]
	On input $p\in\{0,1\}^n$, do the following:
	\begin{enumerate}
		\item
			Output $A_p \ket{\psi}$.
	\end{enumerate}
\item[\rm$\mathsf{AuthCP.Eval}(\sigma,x)$:]
	On input $\sigma\in{\cal D}(\reg{Y})$ and $x\in\{0,1\}^n$, do the
	following:
	\begin{enumerate}
		\item
			Compute $\xi=V_{x} \sigma V_{x}^\dagger$. Recall that
			$\xi$ is a state on registers $\sf F$, the flag register, and
			$\sf M$, the message register.
		\item
			Measure the $\sf F$ register of $\xi$ in
			$\{\ketbra{\text{Acc}},I-\ketbra{\text{Acc}}\}$. If the
			outcome obtained is ``Acc'', output $1$. Otherwise, output $0$.
	\end{enumerate}
\end{description}

We recall that correctness is parametrized by a family of input distributions to each point function, and security is parametrized
by a distribution on the possible functions to be encoded and by a family of
distributions on challenges to send the users Bob and Charlie.
Our correctness and security are proven with respect to the
following distributions:
\begin{itemize}
\item Our correctness will be with respect to the distribution $\Dhalf_p$, as defined in \cref{df:Dy}, which we recall is the distribution on $\{0,1\}^n$ in which $p$ is sampled with probability $1/2$, and all other strings are sampled with probability $\frac{1}{2(2^n-1)}$.
	\item
		In our security proof, we will assume that the point $p$ of the challenge function is chosen uniformly
		at random. This corresponds to the distribution $R$ given in \cref{df:Dn}.
	\item
		If the challenge function is specified by the point $p$, the
		challenges will be sampled independently according to the distribution $\Dhalf_p$. We will refer to this as $\Dhalf_p\times\Dhalf_p$.
\end{itemize}
We first prove the correctness of the scheme $\sf AuthCP$.

\begin{theorem}
If the scheme $\QAS$ is an $\epsilon$-total authentication scheme, then the
scheme $\sf AuthCP$ described above is $\epsilon$-correct with respect to the family of
distributions $\Dhalffam$.
\end{theorem}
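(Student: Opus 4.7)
The plan is to decompose the expected success probability according to the two-case structure of $\Dhalf_p$: the point $p$ is drawn with probability $\tfrac{1}{2}$, and otherwise a uniformly random $x \neq p$ is drawn. For the first case we get correctness "for free" from the QAS, and for the second case we use \cref{th:QAS-wrong-key} to bound, on average, the probability that verification with a wrong key accepts.

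More concretely, fix $p \in \{0,1\}^n$ and let $\rho_p = A_p \ketbra{\psi} A_p^\dagger = \Protect(p)$. Unpacking \textsf{AuthCP.Eval}, the probability that it outputs $1$ on input $(\rho_p, x)$ is exactly $\Tr\!\left[\tsf{QAS.Ver}'_x(\rho_p)\right]$. So the expected correctness is
\begin{equation}
\E_{x \leftarrow \Dhalf_p}\Tr\!\left[\ketbra{P_p(x)}\Eval(\rho_p,x)\right]
=
\tfrac{1}{2}\Tr\!\left[\tsf{QAS.Ver}'_p(\rho_p)\right]
+ \tfrac{1}{2}\cdot\frac{1}{2^n-1}\sum_{x\neq p}\left(1-\Tr\!\left[\tsf{QAS.Ver}'_x(\rho_p)\right]\right).
\end{equation}
For the $x=p$ term, the QAS correctness condition (\cref{eq:QAS-correctness}) combined with the fact that $\overline{V}_k = A_k^\dagger$ gives $\Tr\!\left[\tsf{QAS.Ver}'_p(\rho_p)\right]=1$. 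So the success probability is
\begin{equation}
1 - \tfrac{1}{2}\cdot\frac{1}{2^n-1}\sum_{x\neq p}\Tr\!\left[\tsf{QAS.Ver}'_x(\rho_p)\right].
\end{equation}

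It therefore suffices to upper bound $\sum_{x\neq p}\Tr\!\left[\tsf{QAS.Ver}'_x(\rho_p)\right]$. I would apply \cref{th:QAS-wrong-key} to the state $\rho_p$, which gives $\sum_{k \in \{0,1\}^n}\Tr\!\left[\tsf{QAS.Ver}'_k(\rho_p)\right] \leq 2^n\cdot 2\epsilon = 2^{n+1}\epsilon$; subtracting off the $k=p$ contribution (which equals $1$) yields $\sum_{x\neq p}\Tr\!\left[\tsf{QAS.Ver}'_x(\rho_p)\right] \leq 2^{n+1}\epsilon - 1$. Plugging this in, the correctness error is at most $\tfrac{1}{2}\cdot\frac{2^{n+1}\epsilon-1}{2^n-1}$, which a short algebraic check shows is at most $\epsilon$ precisely when $\epsilon\leq \tfrac{1}{2}$ — exactly the assumption placed on the QAS in the construction.

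I do not expect any real obstacle here: the only subtlety is that the "bad-key acceptance" bound of \cref{th:QAS-wrong-key} is stated as an average over \emph{all} keys, whereas we want an average over keys $x\neq p$; the single correctly-handled key $k=p$ gives exactly the "$-1$" term that, together with the $\epsilon \le \tfrac{1}{2}$ hypothesis, is enough to close the gap. No information about the actual structure of $\QAS$ beyond correctness, the assumed form of $\tsf{QAS.Ver}$, and the wrong-key lemma is needed.
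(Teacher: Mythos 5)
Your proposal is correct and follows essentially the same argument as the paper: both split the expectation over $\Dhalf_p$ into the $x=p$ and $x\neq p$ cases, use QAS correctness for the former, and apply \cref{th:QAS-wrong-key} to $\rho_p$ (subtracting the unit-trace $k=p$ contribution) to bound the latter, closing with the hypothesis $\epsilon\leq\tfrac{1}{2}$. The only cosmetic difference is that the paper works with $\norm{\overline{V}_x A_p\ket{\psi}}^2$ rather than $\Tr\left[\tsf{QAS.Ver}'_x(\rho_p)\right]$, but these are identical since $\overline{V}_x = A_x^\dagger$.
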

\begin{proof}
For all $p \in \{0,1\}^n$, it suffices to compute a lower bound on
\begin{equation}
\label{eq:correctness-to-bound}
	\frac{1}{2}
	\norm{\overline{V}_p A_p\ket{\psi}}^2
	+
	\frac{1}{2} \cdot \frac{1}{2^n-1}
	\sum_{\substack{x \in \{0,1\}^n\\ x \not= p}}
	\left(
		1
		-
		\norm{
			\overline{V}_x A_p \ket{\psi}
		}^2
	\right).
\end{equation}
By the correctness of the authentication scheme, we have that
$\norm{\overline{V}_p A_p \ket{\psi}}^2=1$.
On the other hand, by \cref{th:QAS-wrong-key}, we have that
\begin{equation}
	\sum_{\substack{x \in \{0,1\}^n \\ x \not= p}}
		\norm{
			\overline{V}_x A_p\ket{\psi}
		}^2
	\leq
	2^n\cdot 2\epsilon - 1
\end{equation}
by expanding the expectation and removing the term corresponding to
$x = p$. Thus, a lower bound for \cref{eq:correctness-to-bound} is given
by
\begin{equation}
	\frac{1}{2}
	+
	\frac{1}{2} \cdot \frac{1}{2^n-1}
	\left(
		2^n-1
		-
		\sum_{\substack{x \in \{0,1\}^n \\ x \not= p}}
		\norm{
			\overline{V}_x A_p\ket{\psi}
		}^2
	\right)
	\geq
	\frac{1}{2}
	+
	\frac{1}{2}
	\left(1 - \frac{2^n\cdot 2 \epsilon - 1}{2^n-1}\right)
	\geq
	1 - \epsilon,
\end{equation}
as long as $\epsilon\leq 1/2$, and so the scheme is $\epsilon$-correct with respect to the given
distribution family.\end{proof}

\subsection{Honest-Malicious Security} %

In this section, we prove the security of the scheme $\tsf{AuthCP}$
in the honest-malicious setting.
Formally, we prove the following theorem.

\begin{theorem}\label{thm:hm-security-auth}
If the scheme $\QAS$ is an $\epsilon$-total authentication scheme, then the scheme $\sf AuthCP$ described above is
$(\frac{3}{2}\epsilon+\sqrt{2\epsilon})$-honest-malicious secure with
respect to the uniform distribution $R$ on point functions and
challenge distributions $\{\Dhalf_p\times\Dhalf_p\}_{p \in \{0,1\}^n}$, where $R$ and
$\Dhalf_p$ are as defined in \cref{df:Dn} and \cref{df:Dy}.
\end{theorem}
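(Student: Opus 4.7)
Fix an honest-malicious adversary $\hatAdv=(\pirate,\Eval,\adv_2)$ with pirate map $\Phi_{\pirate}\colon\mc{L}(\reg{Y})\to\mc{L}(\reg{Y}\reg{A}_2)$ and projective two-outcome Charlie measurements $\{\Pi_x,I-\Pi_x\}_{x\in\{0,1\}^n}$, and set $\sigma_p:=\Phi_{\pirate}(A_p\ketbra{\psi}A_p^\dagger)$. Since Bob is honest, his ``output $1$'' operator on challenge $x_1$ is exactly $M_{x_1}:=A_{x_1}A_{x_1}^\dagger$. A short calculation (using that the marginal of $x_2$ under $R\times\Dhalf_p$ is uniform on $\{0,1\}^n$) shows $p^{\mathrm{marg}}=1/2$ for these distributions, so the task reduces to bounding
\begin{equation*}
\E_p\E_{x_1,x_2\sim\Dhalf_p}\Tr\bigl[(M_{x_1}^{P_p(x_1)}\otimes\Pi_{x_2}^{P_p(x_2)})\sigma_p\bigr]\;\leq\; \tfrac{1}{2}+\tfrac{3}{2}\epsilon+\sqrt{2\epsilon},
\end{equation*}
where $M_x^1=M_x$, $M_x^0=I-M_x$, and similarly for $\Pi_x^b$.

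The main tool is to invoke \cref{df:QAS-security} on $\Phi_{\pirate}$, viewed as a CPTP map on $\reg{YZ}$ that initializes $\reg{Z}\cong\reg{A}_2$ to $\ketbra{0}$, with input $\rho=\ketbra{\psi}_{\reg{M}}\otimes\ketbra{0}_{\reg{Z}}$. This yields a CP trace-nonincreasing simulator $\Psi$ on $\reg{Z}$; setting $\sigma_{\mathrm{sim}}:=\Psi(\ketbra{0})$ (a single fixed operator, independent of $p$) and $\xi_p:=\Tr_{\reg{Y}}[(M_p\otimes I)\sigma_p]$, the total-authentication inequality unpacks to
\begin{equation*}
\E_p\ketbra{p}_{\reg{K}}\otimes\xi_p\;\approx_\epsilon\;\E_p\ketbra{p}_{\reg{K}}\otimes\sigma_{\mathrm{sim}}.
\end{equation*}
By \cref{th:trace-distance-orthogonal} this gives $\E_p\Delta(\xi_p,\sigma_{\mathrm{sim}})\leq\epsilon$. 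Two consequences will be used: (i) for any controlled POVM element $N=\sum_p\ketbra{p}\otimes N_p$ with $0\leq N_p\leq I$, $\abs{\E_p\Tr[N_p\xi_p]-\E_p\Tr[N_p\sigma_{\mathrm{sim}}]}\leq 2\epsilon$; and (ii) taking bare traces, $\abs{\E_p\Tr\xi_p-\Tr\sigma_{\mathrm{sim}}}\leq 2\epsilon$.

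Split the winning probability into an ``$x_1=p$'' contribution $A$ (weight $1/2$, Bob's correct outcome is $M_p$) and an ``$x_1\neq p$'' contribution $B$ (weight $1/2$, correct outcome $I-M_{x_1}$). For $A=\tfrac{1}{2}\E_p\E_{x_2\sim\Dhalf_p}\Tr[\Pi_{x_2}^{P_p(x_2)}\xi_p]$, take $N_p:=\E_{x_2\sim\Dhalf_p}\Pi_{x_2}^{P_p(x_2)}$ (which satisfies $0\leq N_p\leq I$) and apply (i); a short symmetry calculation yields $\E_p\Tr[N_p\sigma_{\mathrm{sim}}]=\tfrac{1}{2}\Tr[\sigma_{\mathrm{sim}}]$, whence $A\leq\tfrac{1}{4}\Tr[\sigma_{\mathrm{sim}}]+\epsilon$. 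For $B$, expand $I-M_{x_1}$ and discard the nonnegative wrong-key-accept cross term to obtain $B\leq\tfrac{1}{2}\E_p\E_{x_2\sim\Dhalf_p}\Tr[\Pi_{x_2}^{P_p(x_2)}\sigma_p^{\reg{A}_2}]$; decomposing Charlie's marginal as $\sigma_p^{\reg{A}_2}=\xi_p+\eta_p$ with $\eta_p$ the Bob-rejects-with-key-$p$ complement, the $\xi_p$ part is handled exactly as in $A$, while the $\eta_p$ part is bounded by $\tfrac{1}{2}\E_p\Tr\eta_p=\tfrac{1}{2}(1-\E_p\Tr\xi_p)\leq\tfrac{1}{2}(1-\Tr[\sigma_{\mathrm{sim}}])+\epsilon$ via (ii). Summing, the $\Tr[\sigma_{\mathrm{sim}}]$ dependences in $A$ and $B$ cancel exactly and $A+B\leq\tfrac{1}{2}+O(\epsilon)$; a slightly more refined accounting of constants (with a Cauchy--Schwarz-style step to upgrade the in-expectation bound on $\Delta(\xi_p,\sigma_{\mathrm{sim}})$ to a bound usable pointwise on the $\eta_p$ branch) produces the stated $\tfrac{3}{2}\epsilon+\sqrt{2\epsilon}$ slack.

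The main obstacle is the asymmetry in the $B$ term: total authentication directly constrains Charlie's state \emph{only} on the branch where Bob accepts with the correct key $p$, and gives no a~priori control over the wrong-key-accept or Bob-rejects branches. The resolution is the complementarity observation $\Tr\eta_p=1-\Tr\xi_p$, whose expectation over $p$ is itself pinned down by (ii); this is precisely what forces the simulator-trace dependences in $A$ and $B$ to cancel, leaving the final bound independent of $\Tr[\sigma_{\mathrm{sim}}]$.
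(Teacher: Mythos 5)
Your proof is correct, and it takes a genuinely different route through the $x_1\neq p$ case. The paper works at the state-vector level: it splits Pete's output as $\ket{\Gamma^p}=\ket{\Gamma^p_{\text{Acc}}}+\ket{\Gamma^p_{\text{Rej}}}$ and then, because the winning probability is a squared norm, incurs a cross term $T_{\text{cross}}^{p,x_1,x_2}$ that it controls by Cauchy--Schwarz together with \cref{th:QAS-wrong-key}; this is precisely the origin of the $\sqrt{2\epsilon}$ term. You instead bound Bob's correct outcome trivially via $I-M_{x_1}\leq I$ and split Charlie's \emph{marginal density operator} as $\sigma_p^{\reg{A}_2}=\xi_p+\eta_p$. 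That split is linear, so no cross term ever appears; the $\xi_p$ branch is constrained by total authentication and \cref{lem:independent-state}, while the $\eta_p$ branch is controlled by the complementarity $\Tr\eta_p=1-\Tr\xi_p$ and the simulator-trace estimate. Tracking your own constants honestly gives $A\leq\tfrac{1}{4}\Tr\sigma_{\mathrm{sim}}+\epsilon$ and $B\leq\tfrac{1}{2}-\tfrac{1}{4}\Tr\sigma_{\mathrm{sim}}+2\epsilon$, whence $A+B\leq\tfrac{1}{2}+3\epsilon$. Since $3\epsilon\leq\tfrac{3}{2}\epsilon+\sqrt{2\epsilon}$ whenever $\epsilon\leq 8/9$ (and the theorem is vacuous otherwise), this \emph{implies} the stated bound, and is in fact strictly sharper in the interesting regime. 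Your closing sentence is the only misstep: no ``Cauchy--Schwarz-style upgrade'' is needed or helpful, and adding one would only \emph{worsen} your estimate. You should simply state the $3\epsilon$ bound and note that it subsumes the claimed $\tfrac{3}{2}\epsilon+\sqrt{2\epsilon}$; as written, that last sentence reads like an attempt to reverse-engineer the paper's constant rather than a step in your argument.
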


In fact, we can prove security with respect to a slightly more general set of challenge distributions. If we let $T^{(r)}_p$ be the distribution that samples $p$ with probability $r$, and any other point uniformly, then for any $r\in [1/2,1]$, our proof holds when Bob's input is chosen according to $T^{(r)}_p$ and Charlie's input is chosen according to $\Dhalf_p$. (See \cref{rem:gendist} following the proof of \cref{thm:hm-security-auth}).
If Bob gets the point with probability less than $1/2$, then it becomes easier for the adversary to win. Pete can simply send the program to Charlie, and give Bob a maximally mixed state. In that case, Bob will probably output 0, which is correct more than $1/2$ the time.

For the challenge distributions $R$ and $\{\Dhalf_p\times\Dhalf_p\}_p$, it is easy to see that Charlie's maximum guessing probability if he has no interaction with Pete, against which we measure security (see \cref{defn:copy-protection}), is $\pmarg=1/2$. We will use this fact in our security proof, which could likely be generalized to other distributions of Charlie's challenge with a different value of $\pmarg$, but we do not analyze such cases.

\begin{figure}
\begin{center}
\begin{tikzpicture}
\draw[double] (0,1.5) -- (3.625,1.5) -- (3.625,1);
\draw (0,.625) -- (4,.625);		\draw (4,.825) -- (5,.825);	\draw[double] (5,.825) -- (6,.825);
									\draw (4,.425) -- (4.3,.425);
\draw (1.1,-.625) -- (4,-.625);		\draw[double] (4,-.625) -- (6,-.625);
\draw[double] (0,-1.5) -- (3.625,-1.5) -- (3.625,-1);

\node at (-.25,1.5) {$x_1$};
\node at (-.55,.625) {$A_p\ket{\psi}$};
\node at (.8,-.625) {$\ket{0}$};
\node at (-.25,-1.5) {$x_2$};

\node at (1.3,.8) {$\reg{Y}$};
\node at (1.3,-.45) {$\reg{Z}$};
\filldraw[fill=white] (1.5,1) rectangle (2.25,-1);	\node at (1.875,0) {$U_{\pirate}$};
\node at (2.45,.8) {$\reg{Y}$};
\node at (2.45,-.45) {$\reg{Z}$};
\draw[dashed] (.25,1.25) rectangle (2.65,-1.25);	\node at (.65,1.05) {\small Pete};

\filldraw[fill=white] (3.25,1) rectangle (4,.25);		\node at (3.625,.625) {$V_{x_1}$};
\node at (4.2,1) {\small$\reg{F}$};
\node at (4.2,.6) {\small$\reg{M}$};
\node at (5.2,.825) {\meas};
\draw[dashed] (3,1.75) rectangle (5.7,.1);			\node at (5.3,1.55) {\small Bob};

\filldraw[fill=white] (3.25,-1) rectangle (5.55,-.25);	\node at (4.4,-.625) {\small$\{\Pi_{x_2},I-\Pi_{x_2}\}$};
\draw[dashed] (3,-1.75) rectangle (5.7,-.1);			\node at (5.05,-1.55) {\small Charlie};

\node at (6.25,.825) {$b_1$};
\node at (6.25,-.625) {$b_2$};

\end{tikzpicture}
\end{center}
\caption{%
\label{fig:auth-security}%
The pirating game specified to the
$\tsf{AuthCP}$ scheme.}
\end{figure}
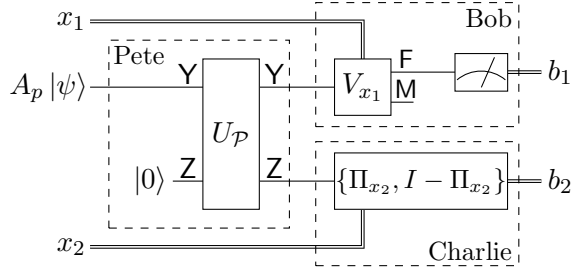

The idea of the proof is the following. In the setting of the scheme
$\sf AuthCP$, the pirating game $\ExperimentFree_{\hatAdv,{\sf AuthCP}}$
(see \cref{fig:freeloading-experiment}) that an honest-malicious
adversary must win is expressed in \cref{fig:auth-security}. Without
loss of generality, we can assume Pete's behaviour is modeled by a
unitary $U_{\pirate}$ on the space $\reg{YZ}$ for an arbitrary auxiliary
space $\reg{Z}$ initialized to a fixed state, which we will denote
$\ket{0}$. (Note that this state can be composed of more than one
qubit.)

Since the adversary is honest-malicious, we can assume that Bob is
honestly evaluating the program, meaning he runs the verification
procedure of the underlying authentication scheme, using the point he
receives as the key, on the register $\reg{Y}$, outputting 1 if and only
if the flag register~$\reg{F}$ is measured as ``Acc''.

Charlie's behaviour can be arbitrary, but without loss of generality, we
can assume that it is specified by a family of two-outcome measurements
on $\reg{Z}$, $\{\Pi_x,I-\Pi_x\}_{x\in\{0,1\}^n}$. Charlie uses his
challenge input $x_2$ to select a measurement to perform to obtain his
output $b_2$.

We will break the proof into two cases. First, consider the case where
$x_1=p$. We can consider Pete's output in two orthogonal parts:
\begin{equation}
	U_{\pirate}(A_{p}\ket{\psi}\otimes\ket{0})
	=
	\ket{\Gamma_{\text{Acc}}^{p}}+\ket{\Gamma_{\text{Rej}}^{p}},
\end{equation}
where $\ket{\Gamma_{\text{Acc}}^p}$ is the part of the state that
leads to Bob outputting $1$ on input $p$, which is the correct bit for Bob
to produce in this case. That is, $\ket{\Gamma_{\text{Acc}}^p}$ is the
projection of Pete's output onto states where the $\reg{Y}$ register is
supported on the image of $A_{p}$. When $x_1=p$, only
$\ket{\Gamma_{\text{Acc}}^p}$ contributes to a winning outcome. We show
(\cref{lem:Bob-correct}) that this state is close (on average over $p$)
to a state of the form $A_{p}\ketbra{\psi}A_p^\dagger\otimes \xi_{\reg{Z}}$ for some subnormalized state
$\xi$ independent of $p$. Since Charlie's input is essentially
independent of $p$, his winning probability is not much more than $1/2$,
so the total winning probability in this case is not much more than
$1/2$ (\cref{lem:independent-state}), which is scaled down by the trace
of the subnormalized state $\xi$, representing the fact that the
probability that Bob outputs the correct bit is
$\norm{\ket{\Gamma_{\text{Acc}}^p}}^2$.

The other case is when $x_1 \neq p$. In that case, we need to consider
the contribution of both terms $\ket{\Gamma_{\text{Acc}}^p}$ and
$\ket{\Gamma_{\text{Rej}}^p}$, as well as their cross term. We can bound
the contribution of the first term to just over
$\frac{1}{2}\Tr(\xi)$ because Charlie's input is close to
$p$-independent. As for the contribution of the second term, in the
worst case, the second term is of the form
$\alpha\ket{0}_{\reg{Y}}\otimes A_{p}\ket{\psi}$ for some scaling factor~$\alpha$.
This corresponds to the strategy that Pete just sends Charlie the
program. Charlie can evaluate the program and be correct with
probability close to 1, and Bob will output 0 with probability close to
1, which is the correct bit in this case, since $x_1\neq p$. So we
trivially upper bound the contribution of this term by
$\norm{\ket{\Gamma_{\text{Rej}}^p}}^2$. However, as this increases,
the size of $\ket{\Gamma_{\text{Acc}}^p}$ and thus $\Tr(\xi)$
decreases, so the probability of being correct in the $x_1=p$ case goes
down. We find that the total contribution, ignoring the cross term, is
at most negligibly more than $1/2$. Finally, we show that the cross-term
is negligible by the correctness of the scheme $\sf AuthCP$.

We first state and prove the necessary lemmas, before formalizing the
above argument.
The following lemma is simply stating that if Charlie gets an input
state that is independent of the point $p$, then his guess as to whether
$x_2=p$ will be independent of $p$, and so will be correct with
probability $1/2$.

\begin{lemma}\label{lem:independent-state}
Suppose $p$ is chosen uniformly at random, and $x_2 \leftarrow \Dhalf_p$, so that
with probability $1/2$, $x_2=p$, and otherwise $x_2$ is
uniform on $\{0,1\}^n\setminus\{p\}$.  Let $\Pi_{x_2}^1=\Pi_{x_2}$ and $\Pi_{x_2}^0=I-\Pi_{x_2}$, so $\Pi_{x_2}^{P_p(x_2)}=\Pi_{x_2}$ when
$x_2=p$, and otherwise $\Pi_{x_2}^{P_p(x_2)} = I-\Pi_{x_2}$. Then, for any
density matrix $\sigma$,
$\E_{p,x_2}\Tr\left(\Pi_{x_2}^{P_p(x_2)}\sigma\right)=\frac{1}{2}$.
\end{lemma}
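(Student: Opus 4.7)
The plan is to change the order of sampling so that $x_2$ is drawn first and $p$ is drawn conditioned on $x_2$. The key observation will be that this conditional distribution has a particularly clean form: given any $x_2$, the point $p$ equals $x_2$ with probability exactly $1/2$, and otherwise $p$ is uniform on $\{0,1\}^n\setminus\{x_2\}$. Once we have this, the rest is essentially a one-line computation using $\Tr(\Pi_{x_2}\sigma)+\Tr((I-\Pi_{x_2})\sigma)=\Tr(\sigma)=1$.

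First I would compute the joint probability $\Pr(p,x_2)$ from the description: $p$ is uniform on $\{0,1\}^n$, and conditionally on $p$, $x_2 = p$ with probability $1/2$ and each $x_2 \neq p$ has probability $\tfrac{1}{2(2^n-1)}$. Marginalizing over $p$ gives $\Pr(x_2) = \tfrac{1}{2\cdot 2^n} + (2^n-1)\cdot \tfrac{1}{2^n}\cdot \tfrac{1}{2(2^n-1)} = \tfrac{1}{2^n}$, so $x_2$ is uniform. Dividing gives $\Pr(p = x_2 \mid x_2) = 1/2$ and $\Pr(p = p' \mid x_2) = \tfrac{1}{2(2^n-1)}$ for each $p' \neq x_2$.

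Next I would unfold the expectation by conditioning on $x_2$. Since $\Pi_{x_2}^{P_p(x_2)} = \Pi_{x_2}$ when $p = x_2$ and $\Pi_{x_2}^{P_p(x_2)} = I - \Pi_{x_2}$ when $p \neq x_2$, and since the indicator $P_p(x_2)$ only depends on whether $p$ equals $x_2$ or not, the inner expectation over $p$ becomes
\begin{equation}
\E_{p \mid x_2}\Tr\!\left(\Pi_{x_2}^{P_p(x_2)}\sigma\right) = \tfrac{1}{2}\Tr(\Pi_{x_2}\sigma) + \tfrac{1}{2}\Tr((I-\Pi_{x_2})\sigma) = \tfrac{1}{2}\Tr(\sigma) = \tfrac{1}{2}.
\end{equation}
Taking the outer expectation over $x_2$ then yields $\tfrac{1}{2}$ as required.

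There is essentially no obstacle here; the only subtlety is to avoid conflating the two different natural ways of sampling $(p,x_2)$ and to verify that reversing the order genuinely yields the uniform $1/2$ split on $p = x_2$ versus $p \neq x_2$. The fact that $\sigma$ does not appear in the final answer is what makes this lemma useful: Charlie's measurement, applied to a $p$-independent state, cannot correlate with whether $x_2 = p$, regardless of the measurement strategy.
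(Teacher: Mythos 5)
Your proof is correct. The paper's own argument expands the expectation in the sampling order (outer sum over $p$, inner sum over $x_2$) and then rearranges the resulting double sum so that the two terms involving $\E_x \Tr(\Pi_x \sigma)$ cancel. You instead reverse the conditioning: you first verify via a short marginalization/Bayes computation that the joint distribution of $(p,x_2)$ is symmetric under swapping roles --- $x_2$ is marginally uniform, and conditioned on $x_2$ the point $p$ again follows $\Dhalf_{x_2}$ --- and then the inner expectation over $p$ given $x_2$ collapses immediately to $\tfrac{1}{2}\Tr(\Pi_{x_2}\sigma)+\tfrac{1}{2}\Tr((I-\Pi_{x_2})\sigma)=\tfrac{1}{2}$, independent of $x_2$ and of $\sigma$. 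The two proofs are of comparable length, but yours isolates the reason the answer is $1/2$ more cleanly: once you see that $P_p(x_2)$ is an unbiased coin conditioned on $x_2$, the term involving $\sigma$ has to disappear, which is exactly the property the lemma is being used for downstream.
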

\begin{proof}
It suffices to compute:
\begin{align*}
\E_{p,x_2}\Tr(\Pi_{x_2}^{P_p(x_2)}\sigma)
&= \frac{1}{2^n}\sum_{p\in \{0,1\}^n}\left(\frac{1}{2}\Tr(\Pi_p\sigma)+\frac{1}{2}\frac{1}{2^n-1}\sum_{x_2\neq p}\Tr((I-\Pi_{x_2})\sigma)\right)\\
&= \frac{1}{2}\frac{1}{2^n}\sum_{p\in \{0,1\}^n}\Tr(\Pi_p\sigma)+\frac{1}{2}\left(1-\frac{1}{2^n}\sum_{x_2\in \{0,1\}^n}\Tr(\Pi_{x_2}\sigma)\right)=\frac{1}{2}.\qedhere
\end{align*}
\end{proof}

The following lemma tells us that in the part of Pete's output that will be accepted by Bob in the $x_1=p$ case, Bob's input from Pete is essentially $A_p\ket{\psi}$, and Charlie's input from Pete is almost independent of $p$. Recall that $A_p$ is an isometry, so $A_pA_p^\dagger$ is the projector onto $\text{im}(A_p)$.

\begin{lemma}\label{lem:Bob-correct}
Let $
	\ket{\Gamma_{\text{Acc}}^p}_{\reg{YZ}}
	=
	(A_{p}A_{p}^\dagger \otimes I_{\reg{Z}})
	U_{\pirate}(A_{p}\ket{\psi}\otimes\ket{0})
$ be the projection of Pete's output onto states supported on
$\text{im}(A_{p})$ in the $\reg{Y}$ register. Then, there exists a
subnormalized state $\xi \in \mc{D}(\reg{Z})$ such that
\begin{equation}
	\E_p\Delta\left(
		\ketbra{\Gamma_{\text{Acc}}^p}
		,
		A_{p}\ketbra{\psi}A_{p}^\dagger
		\otimes
		\xi
	\right)
	\leq
	\epsilon.
\end{equation}
\end{lemma}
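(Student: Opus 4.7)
The plan is to apply the $\epsilon$-total authentication property of $\QAS$ directly to Pete's action, and read off the candidate $\xi$ from the resulting simulator. Model Pete's unitary $U_{\pirate}$ on $\reg{YZ}$ as a CPTP map $\Phi_{\pirate}:\mc{L}(\reg{YZ})\to\mc{L}(\reg{YZ})$ and apply \cref{df:QAS-security} with the fixed product input state $\rho = \ketbra{\psi}_{\reg{M}}\otimes\ketbra{0}_{\reg{Z}}\in\mc{D}(\reg{MZ})$. Setting $\sigma^p := U_{\pirate}(A_p\ketbra{\psi}A_p^\dagger\otimes\ketbra{0})U_{\pirate}^\dagger$, the left-hand side of \cref{eq:total-authentication} becomes $\E_p \ketbra{p}\otimes \tau^p$, where $\tau^p := (A_p^\dagger\otimes I)\sigma^p(A_p\otimes I)$ is the subnormalized post-verification state on $\reg{MZ}$.

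For the right-hand side, let $\Psi:\mc{L}(\reg{Z})\to\mc{L}(\reg{Z})$ be the simulator guaranteed by the $\epsilon$-total authentication property. Since $\Psi$ acts only on $\reg{Z}$, it commutes with $\QAS.\Auth_p$, which acts on $\reg{M}$ only; combined with $A_p^\dagger A_p = I_{\reg{M}}$, the right-hand side evaluates to $\E_p\ketbra{p}\otimes\ketbra{\psi}\otimes\Psi(\ketbra{0})$. Define the candidate $\xi := \Psi(\ketbra{0}_{\reg{Z}})$, which is a valid subnormalized density operator since $\Psi$ is completely positive and trace non-increasing. The total authentication property then reads
\begin{equation}
\Delta\!\left(\E_p \ketbra{p}\otimes \tau^p,\ \E_p \ketbra{p}\otimes \ketbra{\psi}\otimes \xi\right) \leq \epsilon.
\end{equation}

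From here I would apply the corollary to \cref{th:trace-distance-orthogonal} — using that $\{\ket{p}\}_{p\in\{0,1\}^n}$ is an orthonormal family — to bring the expectation outside, obtaining $\E_p \Delta(\tau^p,\ketbra{\psi}\otimes \xi)\leq \epsilon$. Finally, note that $\ketbra{\Gamma_{\text{Acc}}^p} = (A_p\otimes I)\tau^p(A_p^\dagger\otimes I)$ and $A_p\ketbra{\psi}A_p^\dagger\otimes\xi = (A_p\otimes I)(\ketbra{\psi}\otimes\xi)(A_p^\dagger\otimes I)$; since $A_p$ is an isometry, conjugation by $A_p\otimes I$ preserves the trace norm, so $\Delta(\ketbra{\Gamma_{\text{Acc}}^p},A_p\ketbra{\psi}A_p^\dagger\otimes\xi) = \Delta(\tau^p,\ketbra{\psi}\otimes\xi)$, and averaging over $p$ yields the lemma.

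I do not expect a significant obstacle: the lemma is essentially an unpacking of $\epsilon$-total authentication in the special case where the authenticated input is the fixed pure state $\ket{\psi}$ and Pete's side register is initialized to $\ket{0}$. The only subtleties are (i) leveraging the explicit $\ketbra{p}$ key register in \cref{df:QAS-security} to invoke \cref{th:trace-distance-orthogonal} — which is exactly the feature highlighted after that definition — and (ii) switching between the $\reg{MZ}$ picture (in which the simulator naturally lives) and the $\reg{YZ}$ picture (in which $\ket{\Gamma_{\text{Acc}}^p}$ is defined) via isometry invariance of the trace norm.
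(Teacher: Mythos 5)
Your proposal is correct and follows essentially the same route as the paper: apply the $\epsilon$-total-authentication guarantee to the fixed product input $\ketbra{\psi}\otimes\ketbra{0}$ with attack $U_{\pirate}$, read off $\xi=\Psi(\ketbra{0})$ from the simulator, invoke \cref{th:trace-distance-orthogonal} via the explicit $\ketbra{p}$ key register to pull the expectation outside, and then transfer from the $\reg{MZ}$ picture back to $\reg{YZ}$. The paper packages the last step as ``$\ket{\Gamma_{\text{Acc}}^p}$ and $A_p\ket{\psi}$ are orthogonal to $\ker\overline{V}_p$'' while you phrase it as trace-norm invariance under conjugation by the isometry $A_p\otimes I$; these are the same observation.
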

\begin{proof}
By the security of the total authentication scheme, there exists a
completely positive trace non-increasing map
$\Psi : \mc{L}(\reg{Z}) \to \mc{L}(\reg{Z})$ such that
\begin{equation}
\begin{split}
	&
	\E_p
		\ketbra{p}
		\otimes
		(\overline{V}_p \otimes I_{\reg{Z}})(U_{\pirate})_{\reg{YZ}}
			(A_p\ketbra{\psi} A_p^\dagger\otimes\ketbra{0}_{\reg{Z}})
		(U_{\pirate})_{\reg{YZ}}^\dagger
		(\overline{V}_p^\dagger\otimes I_{\reg{Z}})
	\\\approx_{\epsilon}&
	\E_p
		\ketbra{p}
		\otimes
		(\overline{V}_pA_p)
			\ketbra{\psi}_{\reg{M}}
		(A_p^\dagger\overline{V}_p^\dagger)
		\tensor
		\Psi(\ketbra{0})
		.
\end{split}\label{eq:exp-over-p}
\end{equation}

Using the fact that $\overline{V}_p=A_p^\dagger = A_p^\dagger( A_pA_p^\dagger)$ (that is, project onto states in the image of $A_p$, and then invert $A_p$), we have
\begin{align*}
(\overline{V}_p\otimes I_{\reg{Z}})U_{\pirate} (A_{p}\ket{\psi}\otimes \ket{0}_{\reg{Z}})
&= (\overline{V}_p\otimes I_{\reg{Z}})(A_{p}A_{p}^\dagger\otimes
I_{\reg{Z}})U_{\pirate} (A_{p}\ket{\psi}\otimes \ket{0}_{\reg{Z}})\\
&= (\overline{V}_p\otimes I_{\reg{Z}})\ket{\Gamma_{\text{Acc}}^{p}}.
\end{align*}
Then by \cref{th:trace-distance-orthogonal}, and letting
$\xi=\Psi(\ketbra{0})$, we can continue from \cref{eq:exp-over-p} to get:
\begin{align*}
\E_p\Delta\left( \overline{V}_{p}\ketbra{\Gamma_{\text{Acc}}^{p}}\overline{V}_{p}^\dagger , \overline{V}_{p}A_{p}\ketbra{\psi}A_{p}^\dagger \overline{V}_{p}^\dagger \otimes \xi \right)   &\leq\epsilon\\
\E_p\Delta\left( \ketbra{\Gamma_{\text{Acc}}^{p}}, A_{p}\ketbra{\psi}A_{p}^\dagger \otimes \xi \right)   &\leq\epsilon,
\end{align*}
where we used the fact that $\ket{\Gamma_{\text{Acc}}^{p}}$ and $A_{p}\ket{\psi}$ are both orthogonal to the kernel of $\overline{V}_{p}$.
\end{proof}

We now proceed to prove our main theorem of this section, \cref{thm:hm-security-auth}.

\begin{proof}[Proof of \cref{thm:hm-security-auth}]
For a fixed $p$, $x_1$ and $x_2$, let $q_1^{p,x_2}$ be the adversary's winning
probability when $x_1=p$, and let $q_0^{p,x_1,x_2}$ be the winning
probability when $x_1\neq p$. Then the total winning probability is
given by
\begin{equation}
	\frac{1}{2}\E_{\substack{p\leftarrow R,\\x_2\leftarrow \Dhalf_p}}q_1^{p,x_2}
	+
	\frac{1}{2}\E_{\substack{p\leftarrow R,\\ x_1\leftarrow \{0,1\}^n\setminus p,\\x_2\leftarrow\Dhalf_p}}q_0^{p,x_1,x_2}.
\end{equation}
If $\ket{\Gamma^p}:=U_{\pirate}(A_p\ket{\psi}\otimes\ket{0})$ is Pete's
output for a fixed $p$, and $\Pi_{x_2}^{P_p(x_2)}$ is defined to be $\Pi_p$ when
$x_2=p$ and $I-\Pi_{x_2}$ otherwise, we have that
\begin{equation}
\begin{split}
	q_1^{p,x_2}
	&=
	\norm{
		(
			\ketbra{\text{Acc}}_{\reg{F}}
			\otimes
			I_{\reg{M}}
			\otimes
			(\Pi_{x_2}^{P_p(x_2)})_{\reg{Z}}
		)
		(V_{p}\otimes\Id_{\reg{Z}})
		\ket{\Gamma^{p}}
	}^2
	\\\qq{and}
	q_0^{p,x_1,x_2}
	&=
	\norm{
		(
			(I_{\reg{F}}-\ketbra{\text{Acc}}_{\reg{F}})
			\otimes
			I_{\reg{M}}
			\otimes
			(\Pi_{x_2}^{P_p(x_2)})_{\reg{Z}}
		)
		(V_{x_1}\otimes\Id_{\reg{Z}})
		\ket{\Gamma^{p}}
	}^2.
\end{split}
\end{equation}
We will upper bound $q_1^{p,x_2}$ and $q_0^{p,x_1,x_2}$ separately.

Recall that we can write Pete's output as
\begin{equation}
	\ket{\Gamma^p}
	=
	\ket{\Gamma_{\text{Acc}}^p}
	+
	\ket{\Gamma_{\text{Rej}}^p},
\end{equation}
where
\begin{equation}
\begin{split}
	&
	\ket{\Gamma_{\text{Acc}}^p}
	=
	(A_{p}A_{p}^\dagger \otimes I_{\reg{Z}})
	\ket{\Gamma^{p}}
	\\\qq{and}&
	\ket{\Gamma_{\text{Rej}}^p}
	=
	((I_{\reg{Y}}-A_{p}A_{p}^\dagger) \otimes I_{\reg{Z}})
	\ket{\Gamma^p}.
\end{split}
\end{equation}

\paragraph{The $x_1=p$ case.}
We begin by upper bounding $q_1^{p,x_2}$. We first show there is no
contribution from the second term:
\begin{equation}
\begin{split}
	&
	(
		\ketbra{\text{Acc}}_{\reg{F}}
		\otimes
		I_{\reg{M}}
		\otimes
		\Pi_{x_2}^{P_p(x_2)}
	)
	(V_{p} \otimes I_{\reg{Z}})
	\ket{\Gamma^{p}_{\text{Rej}}}
	\\
	={}&
	(
		\ketbra{\text{Acc}}_{\reg{F}}
		\otimes
		I_{\reg{M}}
		\otimes
		(\Pi_{x_2}^{P_p(x_2)})_{\reg{Z}}
	)
	(
		V_{p}
		(I_{\reg{Y}}-A_{p}A_{p}^\dagger)
		\otimes
		I_{\reg{Z}}
	)
	\ket{\Gamma^p}
	\\
	={}&
	0
\end{split}
\end{equation}
because
\begin{equation}
\begin{split}
	(\bra{\text{Acc}}\otimes I_{\reg{M}})
	V_{p}
	(I_{\reg{Y}}-A_{p}A_{p}^\dagger)
	&=
	\overline{V}_{p}(I_{\reg{Y}}-A_{p}A_{p}^\dagger)
	\\&=
	A_{p}^\dagger A_{p}A_{p}^\dagger
	(I_{\reg{Y}}-A_{p}A_{p}^\dagger)
	\\&=
	0.
\end{split}
\end{equation}
Above we used the fact that $\overline{V}_p=A_p^\dagger=A_p^\dagger
(A_pA_p^\dagger)$ which is to say that $\overline{V}_p$ simply projects
onto states in the image of $A_p$, and then inverts $A_p$. Thus
(omitting implicit tensored identities):
\begin{equation}
\begin{split}
	q_1^{p,x_2}
	&=
	\Tr\left(
		(\ketbra{\text{Acc}}_F\otimes \Pi_{x_2}^{P_p(x_2)})
		V_{p}
		\ketbra{\Gamma_{\text{Acc}}^p}
		V_{p}^\dagger
	\right)
	\\&=
	\Tr\left(
		V_{p}^\dagger
		(\ketbra{\text{Acc}}_F \otimes \Pi_{x_2}^{P_p(x_2)})
		V_{p}
		(
			A_{p}\ketbra{0}A_{p}^\dagger
			\otimes
			\xi
			+
			\delta_p)
		\right)
	\\&\leq
	\Tr\left(
		V_{p}^\dagger
		\ketbra{\text{Acc}}V_{p}A_{p}
		\ketbra{0}A_p^\dagger
		\otimes
		\Pi_{x_2}^{P_p(x_2)}
		\xi
	\right)
	\\&\qquad
	+\Delta\left(
		\ketbra{\Gamma_{\text{Acc}}^p}
		,
		A_{p}\ketbra{0}A_{p}^\dagger\otimes \xi
	\right)
	\\&=
	\Tr(\Pi_{x_2}^{P_p(x_2)}\xi)
	+
	\Delta\left(
		\ketbra{\Gamma_{\text{Acc}}^{p}}
		,
		A_{p}\ketbra{0}A_{p}^\dagger\otimes \xi
	\right),
\end{split}
\end{equation}
where $
	\delta_{p}
	=
	\ketbra{\Gamma_{\text{Acc}}^{p}}
	-
	A_{p}\ketbra{0}A_{p}^\dagger\otimes \xi
$.

By \cref{lem:independent-state}, we have $
	\Tr(\xi)
	\E_{p,x_2}\Tr\left(
		\Pi_{x_2}^{P_p(x_2)}\frac{\xi}{\Tr(\xi)}
	\right)
	=
	\Tr(\xi)/2
$. Combining this with \cref{lem:Bob-correct}, we conclude with:
\begin{equation}
\label{eq:case-p-equal-p0}
	\E_{p,x_2}
	q_1^{p,x_2}
	\leq
	\frac{\Tr(\xi)}{2}
	+
	\epsilon
	.
\end{equation}

\paragraph{The $x_1\neq p$ case:} We will analyze the probability in
three parts, as follows:
\begin{equation}
\label{eq:q0xpp1}
\begin{split}
	q_0^{p,x_1,x_2}
	&
	=
	\norm{
		((I_{\reg{F}}-\ketbra{\text{Acc}}_{\reg{F}})\otimes (\Pi_{x_2}^{P_p(x_2)})_{\reg{Z}})
		V_{x_1}
		\ket{\Gamma^p}
	}^2
	\\&\leq
	\underbrace{
		\norm{
			((I_{\reg{F}}-\ketbra{\text{Acc}}_{\reg{F}})\otimes \Pi_{x_2}^{P_p(x_2)})
			V_{x_1}
			\ket{\Gamma_{\text{Acc}}^{p}}
		}^2
	}_{=T_1^{p,x_1,x_2}}
	+
	\underbrace{
		\norm{
			((I_{\reg{F}}-\ketbra{\text{Acc}}_{\reg{F}}) \otimes \Pi_{x_2}^{P_p(x_2)})
			V_{x_1}
			\ket{\Gamma_{\text{Rej}}^p}
		}^2
	}_{=T_2^{p,x_1,x_2}}
	\\&\qquad+
	\underbrace{
		2\abs{
			\bra{\Gamma_{\text{Rej}}^p}
			(
				V_{x_1}^\dagger
				(I_{\reg{F}}-\ketbra{\text{Acc}}_{\reg{F}})
				V_{x_1}
				\otimes
				\Pi_{x_2}^{P_p(x_2)}
			)
			\ket{\Gamma_{\text{Acc}}^{p}}
		}
	}_{=T_{\text{cross}}^{p,x_1,x_2}}.
\end{split}
\end{equation}

We begin with the first term, whose analysis is similar to the $x_1=p$
case. We have:
\begin{equation}
\begin{split}
	T_1^{p,x_1,x_2}
	={}&
	\Tr\left(
		(
			V_{x_1}^\dagger
			(I-\ketbra{\text{Acc}})
			V_{x_1}
			\otimes
			\Pi_{x_2}^{P_p(x_2)}
		)
		(
			A_{p}
			\ketbra{0}A_{p}^\dagger
			\otimes
			\xi
			+
			\delta_{p}
		)
	\right)
	\\\leq{}&
	\Tr(\Pi_{x_2}^{P_p(x_2)}\xi)
	+
	\Delta(
		\ketbra{\Gamma_{\text{Acc}}^{p}}
		,
		A_{p}\ketbra{0}A_{p}^\dagger\otimes\xi
	)
	.
\end{split}
\end{equation}
Thus, just as we concluded with \cref{eq:case-p-equal-p0}, we can
conclude
\begin{equation}
\label{eq:term1}
	\E_{p,x_1,x_2} T_1^{p,x_1,x_2}
	\leq
	\frac{\Tr(\xi)}{2}+\epsilon
	,
\end{equation}
again, by \cref{lem:independent-state} and \cref{lem:Bob-correct}.

For the second term, we will use the naive bound:
\begin{equation}
\begin{split}
	T_2^{p,x_1,x_2}
	&\leq
	\norm{\ket{\Gamma_{\text{Rej}}^{p}}}^2
	\\&=
	1-\norm{\ket{\Gamma_{\text{Acc}}^{p}}}^2
	\\&\leq
	1
	-
	\Tr(
		A_{p}\ketbra{0}A_{p}^\dagger\otimes \xi
	)
	+
	\Delta\left(
		\ketbra{\Gamma_{\text{Acc}}^{p}}
		,
		A_{p}\ketbra{0}A_{p}^\dagger\otimes \xi
	\right)
	\\&=
	1
	-
	\Tr(\xi)
	+
	\Delta\left(
		\ketbra{\Gamma_{\text{Acc}}^{p}}
		,
		A_{p}\ketbra{0}A_{p}^\dagger
		\otimes
		\xi
	\right)
	.
\end{split}
\end{equation}
Then by \cref{lem:Bob-correct}, we have
\begin{align}
\label{eq:term2}
	\E_{p,x_1,x_2}T_2^{p,x_1,x_2}
	\leq
	1-\Tr(\xi)+\epsilon
	.
\end{align}

Finally, we upper bound the cross-term. The idea is that $\ket{\Gamma_{\text{Acc}}^{p}}$ and $\ket{\Gamma_{\text{Rej}}^{p}}$ are orthogonal in the $\reg{Y}$ register. This is, of course, also true once we apply $\Pi_{x_2}^{P_p(x_2)}$ to the $\reg{Z}$ register. Applying the projector $V_{x_1}^\dagger(I_{\reg{F}}-\ketbra{\text{Acc}}_{\reg{F}})V_{x_1}$ to the $\reg{Y}$ register could change this, however, we will argue that, by correctness of the scheme, this projector cannot change the state $\ket{\Gamma_{\text{Acc}}^{p}}$ very much, because its first register is in $\text{im}(A_{p})$, and trying to decode with a different key, $x_1\neq p$, should result in rejection with high probability.
We have:
\begin{equation}
\begin{split}
	T_{\text{cross}}^{p,x_1,x_2}
	&=
	2\abs{
		\bra{\Gamma_{\text{Rej}}^{p}}
		(I_{\reg{Y}}\otimes\Pi_{x_2}^{P_p(x_2)})
		\ket{\Gamma_{\text{Acc}}^{p}}
		-
		\bra{\Gamma_{\text{Rej}}^{p}}
		(
			V_{x_1}^\dagger
			\ketbra{\text{Acc}}
			V_{x_1}
			\otimes
			\Pi_{x_2}^{P_p(x_2)}
		)
		\ket{\Gamma_{\text{Acc}}^{p}}
	}
	\\&=
	2\abs{
		\bra{\Gamma_{\text{Rej}}^{p}}
		(
			V_{x_1}^\dagger
			\ketbra{\text{Acc}}
			V_{x_1}
			\otimes\Pi_{x_2}^{P_p(x_2)}
		)
		\ket{\Gamma_{\text{Acc}}^{p}}
	}\\
	&\leq
	2\norm{
		(\bra{\text{Acc}}V_{x_1}\otimes I_{\reg{Z}})
		\ket{\Gamma_{\text{Acc}}^{p}}
	}
=2\norm{(\overline{V}_{x_1}\otimes I_{\reg{Z}})\ket{\Gamma_{\text{Acc}}^{p}}}
	,
\end{split}
\end{equation}
where, in the last line, we used the Cauchy-Schwarz inequality. Since $\ket{\Gamma_{\text{Acc}}^p}$ is supported on $\text{im}(A_p)$ in the first register, it has a Schmidt decomposition of the form:
\begin{equation}
	\ket{\Gamma_{\text{Acc}}^{p}}
	=
	\sum_{\ell}
		\beta_{\ell}
		(A_{p}\ket{u_\ell})_{\reg{Y}}
		\otimes
		\ket{v_{\ell}}_{\reg{Z}}.
\end{equation}
Taking the expectation, we have:
\begin{equation}
\label{eq:cross-term}
\begin{split}
	\E_{p,x_1,x_2}
	T_{\text{cross}}^{p,x_1,x_2}
	&\leq
	2\E_{p,x_1}
		\sqrt{
			\sum_{\ell}
				\abs{\beta_\ell}^2
				\norm{
					\overline{V}_{x_1}A_{p}
					\ket{u_{\ell}}
				}^2
		}
	\\&\leq
	2\sqrt{
		\sum_{\ell}
			\abs{\beta_\ell}^2
			\E_{p,x_1}
				\norm{
					\overline{V}_{x_1}
					A_{p}
					\ket{u_{\ell}}
				}^2
	}
	\qq{(By Jensen's inequality).}
\end{split}
\end{equation}
We next want to appeal to \cref{th:QAS-wrong-key},
which implies that
$
	\E_{p,x_1\leftarrow\{0,1\}^n}\norm{\overline{V}_{x_1}A_{p}\ket{u}}^2
	\leq
	2\epsilon
$, for any pure state~$\ket{u}$, however, notice that $p$ and $x_1$ are not uniformly distributed, because while $p$ is uniform, $x_1$ is uniform over $\{0,1\}^n\setminus \{p\}$. However, since for any $p$ we have $\norm{\overline{V}_pA_p\ket{u}}^2=1$, we have:
\begin{align*}
\E_{\substack{p\leftarrow \{0,1\}^n,\\x_1\leftarrow\{0,1\}^n\setminus\{p\}}}\norm{\overline{V}_{x_1}A_p\ket{u_\ell}}^2
&= \frac{2^{2n}}{2^n(2^n-1)}\left(\E_{\substack{p\leftarrow \{0,1\}^n,\\x_1\leftarrow\{0,1\}^n}}\norm{\overline{V}_{x_1}A_p\ket{u_{\ell}}}^2-\frac{1}{2^{2n}}\sum_{p\in\{0,1\}^n}\norm{\overline{V}_{p}A_p\ket{u_{\ell}}}^2\right)\\
&\leq 2\epsilon + \frac{1}{2^n-1}2\epsilon-\frac{1}{2^n-1}
\end{align*}
which is at most $2\epsilon$ as long as $\epsilon \leq 1/2$.
Thus we can continue:
\begin{equation}
\begin{split}
	\E_{p,x_1,x_2} T_{\text{cross}}^{p,x_1,x_2}&\leq
	2\sqrt{\sum_{\ell}\abs{\beta_\ell}^2}\sqrt{2\epsilon}
	\\&=
	2\sqrt{2\epsilon}
	.
\end{split}
\end{equation}

Combining \cref{eq:term1}, \cref{eq:term2}, and \cref{eq:cross-term} into \cref{eq:q0xpp1}, we conclude the $x_1\neq p$ case with:
\begin{equation}
\label{eq:case-p-nequal-p0}
\begin{split}
	\E_{p,x_1,x_2}
	q_0^{p,x_1,x_2}
	&\leq
	\E_{p,x_1,x_2}
	T_1^{p,x_1,x_2}
	+
	\E_{p,x_1,x_2}
	T_2^{p,x_1,x_2}
	+
	\E_{p,x_1,x_2}
	T_{\text{cross}}^{p,x_1,x_2}
	\\&\leq
	\frac{1}{2}
	\Tr(\xi)
	+
	\epsilon
	+
	1
	-
	\Tr(\xi)
	+
	\epsilon
	+
	2\sqrt{2\epsilon}
	.
\end{split}
\end{equation}

\paragraph{Conclusion.} We can now combine \cref{eq:case-p-equal-p0} and
\cref{eq:case-p-nequal-p0} to get an upper bound on the total winning
probability of:
\begin{equation}
\begin{split}
	&
	\frac{1}{2}
	\E_{p,x_2}
	q_1^{p,x_2}
	+
	\frac{1}{2}
	\E_{p,x_1,x_2}
	q_0^{p,x_1,x_2}
	\\\leq{}&
	\frac{1}{2}\left(
		\frac{1}{2}
		\Tr(\xi)
		+
		\epsilon
	\right)
	+
	\frac{1}{2}\left(
		1
		-
		\frac{1}{2}\Tr(\xi)
		+
		2\epsilon
		+
		2\sqrt{2\epsilon}
	\right)
	\\={}&
	\frac{1}{2}+\frac{3}{2}\epsilon+\sqrt{2\epsilon}
	.
\end{split}\label{eq:conclusion}
\end{equation}
Noting that $p^{\text{marg}}_{R, \{\Dhalf_p\times\Dhalf_p\}_p} = \frac{1}{2}$
completes the proof.
\end{proof}

\begin{remark}\label{rem:gendist}
We note that if our challenge distribution instead chooses Bob's input so that $x_1=p$ with probability $r$, for $r\geq 1/2$, and all other points uniformly, then \cref{eq:conclusion} would instead give us:
\begin{equation*}
\begin{split}
	&
	r
	\E_{p,x_2}
	q_1^{p,x_2}
	+
	(1-r)
	\E_{p,x_1,x_2}
	q_0^{p,x_1,x_2}
	\\\leq{}&
	r\left(
		\frac{1}{2}
		\Tr(\xi)
		+
		\epsilon
	\right)
	+
	(1-r)\left(
		1
		-
		\frac{1}{2}\Tr(\xi)
		+
		2\epsilon
		+
		2\sqrt{2\epsilon}
	\right)
	\\={}&
	\frac{1}{2}(2r-1)\Tr(\xi)+1-r + (2-r)\epsilon+2(1-r)\sqrt{2\epsilon}\\
\leq{}&\frac{1}{2}(2r-1)+1-r + (2-r)\epsilon+2(1-r)\sqrt{2\epsilon}\\
={}&\frac{1}{2}+(2-r)\epsilon+2(1-r)\sqrt{2\epsilon}.
\end{split}
\end{equation*}
We therefore have $((2-r)\epsilon+2(1-r)\sqrt{2\epsilon})$-honest-malicious security under this more general challenge distribution, where Bob's input is distributed as $T^{(r)}_p$ and Charlie's input is distributed as $\Dhalf_p$.
\end{remark}


\appendix %

\section{Proofs for {\cref{sc:prelims}}} %
\label{sc:prelim-proofs}                 %

We collect in this appendix the proofs of statements made in
\cref{sc:prelims}.

\subsection{Proofs for {\cref{sc:trace-distance}}} %

Before proceeding to the proof of \cref{th:trace-distance-orthogonal},
we recall that for any linear operator $X \in \mc{L}(\tsf{A})$,
$\norm{X}_1 = \max_{U \in \mc{U}(\tsf{A})} \abs{\ip{U}{X}}$. Note that
the absolute value here is superfluous in a certain sense. Specifically,
for any unitary operator $U$ and linear operator $X$, there exists a
unitary operator $V$ such that $\abs{\ip{U}{X}} = \ip{V}{X}$. Indeed,
assume that $\ip{U}{X} = a\cdot e^{i\theta}$ for a non-negative real
$a$. Then, taking $V = e^{i\theta} U$ yields $\ip{V}{X} = a$. Thus, it
would suffice to take the maximum over unitary operators which yield a
real and non-negative value.

\begin{proof}[Proof of {\cref{th:trace-distance-orthogonal}}]
First, note that
\begin{equation}
	\norm{
		\sum_{j \in J} \ketbra{\psi_j} \tensor X_j
		-
		\sum_{j \in J} \ketbra{\psi_j} \tensor Y_j
	}_1
	=
	\norm{
		\sum_{j \in J} \ketbra{\psi_j} \tensor (X_j - Y_j)
	}_1.
\end{equation}

Next, we show that
\begin{equation}
	\norm{
		\sum_{j \in J} \ketbra{\psi_j} \tensor (X_j - Y_j)
	}_1
	\leq
	\sum_{j \in J} \norm{X_j - Y_j}_1.
\end{equation}
To obtain this inequality, it suffices to recall that for any two linear
operators $A$ and $B$, we have that
$\norm{A \tensor B}_1 \leq \norm{A}_1 \cdot \norm{B}_1$. Indeed, by
using the fact that the Schatten-$1$ norm is submultiplicative and
non-increasing under the partial trace \cite{Wat18}, we have that
\begin{equation}
	\norm{A \tensor B}_1
	=
	\norm{(A \tensor I)(I \tensor B)}_1
	\leq
	\norm{A \tensor I}_1 \cdot \norm{I \tensor B}_1
	\leq
	\norm{A}_1\cdot\norm{B}_1.
\end{equation}
Thus, we obtain the desired upper bound by writing
\begin{equation}
	\norm{
		\sum_{j \in J} \ketbra{\psi_j} \tensor (X_j - Y_j)
	}_1
	\leq
	\sum_{j \in J}
	\norm{
		\ketbra{\psi_j} \tensor (X_j - Y_j)
	}_1
	\leq
	\sum_{j \in J}
	\norm{
		\ketbra{\psi_j}
	}_1
	\cdot
	\norm{
		X_j - Y_j
	}_1
\end{equation}
and noting that $\norm{\ketbra{\psi_j}}_1 = 1$.

Finally, we show that
\begin{equation}
	\sum_{j \in J} \norm{X_j - Y_j}_1
	\leq
	\norm{
		\sum_{j \in J} \ketbra{\psi_j} \tensor (X_j - Y_j)
	}_1.
\end{equation}
By our definition of the trace norm, it suffices to find a unitary
operator $U \in \mc{U}(\tsf{AB})$ such that
\begin{equation}
	\frac{1}{2}
	\abs{\ip{U}{\sum_{j \in J} \ketbra{\psi_j}\tensor(X_j - Y_j)}}
	=
	\sum_{j \in J} \norm{X_j - Y_j}_1
\end{equation}
to obtain the inequality. For every $j \in J$, let
$U_J \in \mc{U}(\tsf{B})$ be such that
$\frac{1}{2}\ip{U_j}{X_j - Y_j} = \Delta(X_j,Y_j)$. Note the lack of
absolute value in this equation. Such a unitary $U_j$ must exist by our
remark at the start of this section. It then suffices to take
\begin{equation}
	U
	=
	\left(
		I - \sum_{j \in J} \ketbra{\psi_j}
	\right)
	\tensor
	I
	+
	\sum_{j \in J} \ketbra{\psi_j} \tensor U_j.
\end{equation}
A direct computation then yields
\begin{equation}
	\frac{1}{2}
	\abs{
		\ip{
			U
		}{
			\sum_{j \in J} \ketbra{\psi_j} \tensor (X_j - Y_j)
		}
	}
	=
	\frac{1}{2}
	\abs{
		\sum_{j \in J}
		\ip{U_j}{X_j - Y_j}
	}
	=
	\sum_{j \in J} \Delta(X_j, Y_j)
\end{equation}
which is the desired equality.
\end{proof}

\subsection{Proofs for {\cref{sc:qas}}} %

Before proceeding to the proof of \cref{th:QAS-wrong-key}, we will need
a small lemma which essentially states that orthogonal states are mapped
to orthogonal states by quantum authentication schemes.

\begin{lemma}
\label{th:QAS-orthogonal}
Let $\tsf{QAS}$ be an authentication scheme. Then,
\begin{equation}
	\Delta(\rho, \sigma) = 1
	\implies
	\Delta(\tsf{QAS.Auth}_k(\rho), \tsf{QAS.Auth}_k(\sigma)) = 1
\end{equation}
for all $k \in \mc{K}$ and all states $\rho,\sigma\in\mc{D}(\tsf{M})$.
\end{lemma}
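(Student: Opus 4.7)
The cleanest route is to invoke correctness of the authentication scheme together with the fact that the trace distance is non-increasing under CPTP maps. Specifically, since $\tsf{QAS.Ver}_k$ is a CPTP map, for any states $\rho,\sigma\in\mc{D}(\tsf{M})$ we have
\begin{equation*}
    \Delta\bigl(\tsf{QAS.Auth}_k(\rho),\tsf{QAS.Auth}_k(\sigma)\bigr)
    \geq
    \Delta\bigl(\tsf{QAS.Ver}_k\circ\tsf{QAS.Auth}_k(\rho),\tsf{QAS.Ver}_k\circ\tsf{QAS.Auth}_k(\sigma)\bigr).
\end{equation*}
By the correctness condition from \cref{df:QAS} (\cref{eq:QAS-correctness}), the right-hand side equals $\Delta(\rho\tensor\ketbra{\text{Acc}},\sigma\tensor\ketbra{\text{Acc}})$, which is precisely $\Delta(\rho,\sigma)$ since tensoring with a fixed pure state preserves the trace distance.

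Thus, whenever $\Delta(\rho,\sigma)=1$, the above chain of inequalities forces $\Delta(\tsf{QAS.Auth}_k(\rho),\tsf{QAS.Auth}_k(\sigma))\geq 1$, and since the trace distance between density operators is bounded above by $1$, this must in fact be an equality.

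This approach has the advantage of not relying on the simplifying isometric form \cref{eq:QAS-assumption-iso} for $\tsf{QAS.Auth}_k$; it uses only the correctness guarantee in the definition of an authentication scheme. The only potential subtlety is verifying that $\Delta(\rho\tensor\tau,\sigma\tensor\tau)=\Delta(\rho,\sigma)$ for a fixed state $\tau$, which follows immediately from the definition of the trace norm (\cref{df:trace-norm}) and the fact that $\norm{A\tensor\tau}_1=\norm{A}_1\cdot\norm{\tau}_1=\norm{A}_1$ when $\tau$ is a density operator. I do not foresee any real obstacle here; the result is essentially a direct consequence of data-processing and correctness.
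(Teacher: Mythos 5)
Your proof is correct and follows essentially the same approach as the paper: apply the data-processing inequality with the CPTP map $\tsf{QAS.Ver}_k$ and then invoke correctness to recover $\Delta(\rho,\sigma)$. The only cosmetic difference is that the paper additionally composes with $\Tr_{\tsf{F}}$ before invoking correctness, whereas you observe directly that tensoring with the fixed pure state $\ketbra{\text{Acc}}$ preserves trace distance; both are equally valid.
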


\begin{proof}[Proof of \cref{th:QAS-orthogonal}]
Since the trace distance is contractive under CPTP maps, we have that
\begin{equation}
\begin{split}
	&\Delta(\tsf{QAS.Auth}_k(\rho), \tsf{QAS.Auth}_k(\sigma))
	\\\geq&
	\Delta(
		\Tr_\tsf{F} \circ \tsf{QAS.Ver}_k \circ \tsf{QAS.Auth}_k(\rho),
		\Tr_\tsf{F} \circ \tsf{QAS.Ver}_k \circ \tsf{QAS.Auth}_k(\sigma)
	)
	\\\geq&
	\Delta(
		\rho,
		\sigma
	)
	\\=&
	1.
\end{split}
\end{equation}
Noting that
$\Delta(\tsf{QAS.Auth}_k(\rho),\tsf{QAS.Auth}_k(\sigma)) \leq 1$
completes the proof.
\end{proof}

\begin{proof}[Proof of \cref{th:QAS-wrong-key}]
Let $\tsf{Z} \iso \tsf{Y}$, and define the attack
$\Phi : \mc{L}(\tsf{YZ}) \to \mc{L}(\tsf{YZ})$ by
\begin{equation}
	\xi
	\mapsto
	\text{Swap}_{\tsf{YZ}}\left(
		\Tr_\tsf{Z}(\xi)
		\tensor
		\rho
	\right).
\end{equation}
In other words, an adversary implementing this attack will keep in
memory the authenticated state sent by the sender and will give the
receiver the state $\rho$.

As the authentication scheme is $\epsilon$-total, there exists a
completely positive trace non-increasing map
$\Psi : \mc{L}(\tsf{Z}) \to \mc{L}(\tsf{Z})$ such that
\begin{equation}
	\E_{k \in K}
		\ketbra{k}
		\tensor
		\tsf{QAS.Ver}'_k \circ \Phi \circ \tsf{QAS.Auth}_k(\sigma)
	\approx_\epsilon
	\E_{k \in K}
		\ketbra{k}
		\tensor
		\tsf{QAS.Ver}'_k \circ \Psi \circ \tsf{QAS.Auth}_k(\sigma)
\end{equation}
for all states $\sigma \in \mc{D}(\tsf{MZ})$. In particular, consider
separable states of the form
$\sigma_\tsf{MZ} = \tau_\tsf{M} \tensor \tau'_\tsf{Z}$. For such states,
we have that
\begin{equation}
	\E_{k \in K}
		\ketbra{k}
		\tensor
		\tsf{QAS.Ver}'_k \circ \Phi \circ \tsf{QAS.Auth}_k(\sigma)
	=
	\E_{k \in K}
		\ketbra{k}
		\tensor
		\tsf{QAS.Ver}'_k(\rho)
		\tensor
		\tsf{QAS.Auth}_k(\tau)
\end{equation}
and
\begin{equation}
	\E_{k \in K}
		\ketbra{k}
		\tensor
		\tsf{QAS.Ver}'_k \circ \Psi \circ \tsf{QAS.Auth}_k(\sigma)
	=
	\E_{k \in K}
		\ketbra{k}
		\tensor
		\tau
		\tensor
		\Psi(\tau')
\end{equation}
so the security guarantee yields
\begin{equation}
	\E_{k \in K}
		\ketbra{k}
		\tensor
		\tsf{QAS.Ver}'_k(\rho)
		\tensor
		\tsf{QAS.Auth}_k(\tau)
	\approx_\epsilon
	\E_{k \in K}
		\ketbra{k}
		\tensor
		\tau
		\tensor
		\Psi(\tau').
\end{equation}
Since the trace distance is contractive under CPTP maps, we can trace
out $\tsf{M}$, the message register, to obtain
\begin{equation}
	\E_{k \in K}
		\ketbra{k}
		\tensor
		\Tr\left[
			\tsf{QAS.Ver}'_k(\rho)
		\right]
		\cdot
		\tsf{QAS.Auth}_k(\tau)
	\approx_\epsilon
	\E_{k \in K}
		\ketbra{k}
		\tensor
		\Psi(\tau').
\end{equation}
Using the triangle inequality and two instances of the above equation,
once with $\tau = \ketbra{\psi}$ and once with $\tau = \ketbra{\phi}$
for orthogonal $\ket{\psi}$ and $\ket{\phi}$, we find that
\begin{equation}
\begin{split}
	&
	\E_{k \in \mc{K}}
		\ketbra{k}
		\tensor
		\Tr\left[
			\tsf{QAS.Ver}'_k(\rho)
		\right]
		\cdot
		\tsf{QAS.Auth}_k(\ketbra{\psi})
	\\\approx_{2\epsilon}&
	\E_{k \in K}
		\ketbra{k}
		\tensor
		\Tr\left[
			\tsf{QAS.Ver}'_k(\rho)
		\right]
		\cdot
		\tsf{QAS.Auth}_k(\ketbra{\phi})
\end{split}
\end{equation}
after which we can apply \cref{th:trace-distance-orthogonal} with the
help of the key register, which yields
\begin{equation}
	\E_{k \in K}
		\Tr\left[
			\tsf{QAS.Ver}'_k(\rho)
		\right]
	\cdot
	\Delta\left(
		\tsf{QAS.Auth}_k(\ketbra{\psi})
		,
		\tsf{QAS.Auth}_k(\ketbra{\phi})
	\right)
	\leq 2\epsilon.
\end{equation}
It then suffices to note that
$
	\Delta\left(
		\tsf{QAS.Auth}_k(\ketbra{\psi})
		,
		\tsf{QAS.Auth}_k(\ketbra{\phi})
	\right)
	=
	1
$
by \cref{th:QAS-orthogonal} since~$\ket{\psi}$ is orthogonal
to~$\ket{\phi}$. The desired result follows.
\end{proof}

Now, we proceed to prove \cref{th:QAS-existence}. However, There are a
few technical points which must be covered before. In short, these
points tell us that we can substitute the key set $\mc{K}$ of a QAS with
a set $\mc{K}'$ with little loss of security (\cref{th:QAS-key-change}),
provided that there is map $f : \mc{K}' \to \mc{K}$ which maps a
uniformly random variable on $\mc{K}'$ to an almost uniformly random
variable on $\mc{K}$ (\cref{df:eps}). Our proof is then the application
of these technical arguments to existing theorems concerning unitary
$2$-designs and how they can be used to construct a QAS
(\cref{th:2-design,th:AM17}).

\begin{definition}
\label{df:eps}
A map $f : \mc{A} \to \mc{B}$ between finite sets is $\epsilon$-uniform
if
$
	\frac{1}{2}\sum_{b \in
	\mc{B}}\abs{\frac{\abs{f^{-1}(b)}}{\abs{\mc{A}}} -
	\frac{1}{\abs{\mc{B}}}}
	\leq
	\epsilon.
$
\end{definition}

\begin{lemma}
\label{th:eps-expectation}
Let $f : \mc{A} \to \mc{B}$ be an $\epsilon$-uniform map between finite
sets and $g : \mc{B} \to [0,1]$ be a map. Then,
$\abs{\E_a g(f(a)) - \E_b g(b)} \leq \epsilon$.
\end{lemma}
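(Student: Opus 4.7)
The plan is to reduce the claim to a standard fact: for any two probability distributions $p$ and $q$ on a finite set $\mc{B}$ with total variation distance at most $\epsilon$, and any $[0,1]$-valued function $g$, we have $|\E_{b \sim p} g(b) - \E_{b \sim q} g(b)| \leq \epsilon$.

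First, I would rewrite the expectation over $a \in \mc{A}$ by grouping according to fibers of $f$:
\begin{equation*}
  \E_a g(f(a)) = \frac{1}{|\mc{A}|} \sum_{a \in \mc{A}} g(f(a)) = \sum_{b \in \mc{B}} \frac{|f^{-1}(b)|}{|\mc{A}|}\, g(b),
\end{equation*}
and observe that $\E_b g(b) = \sum_{b \in \mc{B}} \frac{1}{|\mc{B}|}\, g(b)$. Writing $p(b) = |f^{-1}(b)|/|\mc{A}|$ and $q(b) = 1/|\mc{B}|$, both $p$ and $q$ are probability distributions on $\mc{B}$, and \cref{df:eps} says exactly that their statistical distance $\frac{1}{2}\sum_b |p(b) - q(b)|$ is at most $\epsilon$.

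The difference of the two expectations is then $\sum_b (p(b) - q(b)) g(b)$. I would split this sum into the set $\mc{B}^+ = \{b : p(b) \geq q(b)\}$ and its complement. Since $p$ and $q$ are probability distributions, $\sum_b (p(b) - q(b)) = 0$, so
\begin{equation*}
  \sum_{b \in \mc{B}^+} (p(b) - q(b)) = \sum_{b \notin \mc{B}^+} (q(b) - p(b)) = \tfrac{1}{2}\sum_{b \in \mc{B}} |p(b) - q(b)| \leq \epsilon.
\end{equation*}
Using $g(b) \in [0,1]$, the positive part $\sum_{b \in \mc{B}^+} (p(b) - q(b)) g(b)$ is bounded above by $\epsilon$ and below by $0$, and similarly for the negative part $-\sum_{b \notin \mc{B}^+}(q(b) - p(b)) g(b)$, which lies in $[-\epsilon, 0]$. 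Hence the whole sum lies in $[-\epsilon, \epsilon]$, giving the desired bound.

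There is no real obstacle here; the only thing to watch is the factor of $\frac{1}{2}$ in \cref{df:eps}, which is precisely what allows the final bound to be $\epsilon$ rather than $2\epsilon$ (the naive bound $|\sum_b (p(b) - q(b)) g(b)| \leq \sum_b |p(b) - q(b)| \leq 2\epsilon$ is off by a factor of two, which the positive/negative split repairs using $g \in [0,1]$ together with $\sum_b (p(b)-q(b)) = 0$).
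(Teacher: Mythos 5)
Your proof is correct and follows essentially the same route as the paper's: both rewrite $\E_a g(f(a))$ as a reweighted expectation over $\mc{B}$, split the sum over the sign of $p(b)-q(b)$, observe that each of the positive and negative parts has total mass exactly $\epsilon$, and then use $g \in [0,1]$ to bound the contribution of each part. The paper phrases the final step as bounding the absolute value by the maximum of the two one-sided sums, while you box the value into $[-\epsilon,\epsilon]$, but these are the same argument.
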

\begin{proof}
We first note that
\begin{equation}
	\abs{\E_a g(f(a)) - \E_b g(b)}
	=
	\abs{
		\sum_{b \in \mc{B}} \frac{\abs{f^{-1}(b)}}{\abs{\mc{A}}}
		\cdot
		g(b)
		-
		\sum_{b \in \mc{B}} \frac{1}{\abs{\mc{B}}} \cdot g(b)
	}
	\leq
	\abs{
		\sum_{b \in \mc{B}}
		\left(
		\frac{\abs{f^{-1}(b)}}{\abs{\mc{A}}}
		-
		\frac{1}{\abs{\mc{B}}}
		\right)
		g(b)
	}.
\end{equation}
Let $\mc{S} = \{b \in \mc{B} \;|\; \abs{f^{-1}(b)}\abs{\mc{B}} -
\abs{\mc{A}} \geq 0\}$ and $\mc{S}' = \mc{B} \setminus \mc{A}$. We then
have that
\begin{equation}
	\abs{
		\sum_{b \in \mc{B}}
		\left(
		\frac{\abs{f^{-1}(b)}}{\abs{\mc{A}}}
		-
		\frac{1}{\abs{\mc{B}}}
		\right)
		g(b)
	}
	=
	\abs{
	\sum_{b \in \mc{S}}
	\left(
		\frac{\abs{f^{-1}(b)}}{\abs{\mc{A}}}
		-
		\frac{1}{\abs{\mc{B}}}
	\right)
	g(b)
	+
	\sum_{b \in \mc{S}'}
	\left(
		\frac{\abs{f^{-1}(b)}}{\abs{\mc{A}}}
		-
		\frac{1}{\abs{\mc{B}}}
	\right)
	g(b)
	}.
\end{equation}
Note that the $\mc{S}$ term in the right-hand side of the above equation
is positive and the $\mc{S}'$ term is negative. Recalling that $g(b)
\leq 1$, we then have
\begin{equation}
	\abs{
		\sum_{b \in \mc{B}}
		\left(
		\frac{\abs{f^{-1}(b)}}{\abs{\mc{A}}}
		-
		\frac{1}{\abs{\mc{B}}}
		\right)
		g(b)
	}
	\leq
	\max\left\{
	\sum_{b \in \mc{S}}
	\left(
		\frac{\abs{f^{-1}(b)}}{\abs{\mc{A}}}
		-
		\frac{1}{\abs{\mc{B}}}
	\right)
	,
	\sum_{b \in \mc{S}'}
	\left(
		\frac{1}{\abs{\mc{B}}}
		-
		\frac{\abs{f^{-1}(b)}}{\abs{\mc{A}}}
	\right)
	\right\}.
\end{equation}
Lastly, we note that
\begin{equation}
	\sum_{b \in \mc{S}}
	\left(
		\frac{\abs{f^{-1}(b)}}{\abs{\mc{A}}}
		-
		\frac{1}{\abs{\mc{B}}}
	\right)
	=
	\underbrace{
	\frac{1}{2}
	\sum_{b \in \mc{B}}
	\abs{
		\frac{\abs{f^{-1}(b)}}{\abs{\mc{B}}} - \frac{1}{\abs{\mc{B}}}
	}
	}_{= \epsilon}
	=
	\sum_{b \in \mc{S}'}
	\left(
		\frac{1}{\abs{\mc{B}}}
		-
		\frac{\abs{f^{-1}(b)}}{\abs{\mc{A}}}
	\right)
\end{equation}
which follows from the fact that $
	\sum_{b \in\mc{S}}\abs{f^{-1}(b)}
	=
	\abs{\mc{A}} - \sum_{b \in\mc{S}'} \abs{f^{-1}(b)}
$ and direct calculations.
\end{proof}

\begin{lemma}
\label{th:eps-existence}
For any finite sets $\mc{A}$ and $\mc{B}$, there exists an
$\abs{\mc{B}}/(4\abs{\mc{A}})$-uniform map $f : \mc{A} \to \mc{B}$.
\end{lemma}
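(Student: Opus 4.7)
The plan is to construct $f$ explicitly by making its fibers as equal in size as possible. Write $|\mc{A}| = q|\mc{B}| + r$ with $0 \leq r < |\mc{B}|$ by the division algorithm. I would then define $f$ so that exactly $r$ elements of $\mc{B}$ have fiber size $q+1$ and the remaining $|\mc{B}| - r$ elements have fiber size $q$; such an $f$ plainly exists since the sizes sum to $r(q+1) + (|\mc{B}|-r)q = |\mc{A}|$. The natural conjecture is that this balanced assignment achieves the claimed $\epsilon$-uniformity bound.

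The next step is a direct calculation. Setting $a = |\mc{A}|$ and $b = |\mc{B}|$, observe that for the $r$ large fibers we have
\begin{equation}
\frac{q+1}{a} - \frac{1}{b} = \frac{b(q+1) - a}{ab} = \frac{b - r}{ab} \geq 0,
\end{equation}
while for the $b - r$ small fibers we have
\begin{equation}
\frac{q}{a} - \frac{1}{b} = \frac{bq - a}{ab} = -\frac{r}{ab} \leq 0.
\end{equation}
Plugging into the definition of $\epsilon$-uniformity gives
\begin{equation}
\frac{1}{2}\sum_{y \in \mc{B}} \abs{\frac{|f^{-1}(y)|}{a} - \frac{1}{b}}
= \frac{1}{2}\left( r \cdot \frac{b-r}{ab} + (b-r)\cdot \frac{r}{ab} \right)
= \frac{r(b-r)}{ab}.
\end{equation}

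Finally, I would optimize the bound. The quadratic $r(b - r)$ on $[0, b]$ is maximized at $r = b/2$, giving the value $b^2/4$. Hence the total is at most $\frac{b^2/4}{ab} = \frac{b}{4a} = \frac{|\mc{B}|}{4|\mc{A}|}$, which is precisely the required $\epsilon$. I expect the main (minor) subtlety to be making sure the construction of $f$ itself is valid in corner cases, in particular when $|\mc{A}| < |\mc{B}|$ (so that $q = 0$ and many fibers are empty) and when $r = 0$ (so that $f$ is genuinely balanced and the bound is actually $0$); both are handled uniformly by the division-algorithm construction above, and the claimed bound is trivially satisfied in these cases.
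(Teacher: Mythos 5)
Your proof is correct and follows essentially the same approach as the paper: construct a map whose fibers are as balanced as possible, compute the sum exactly in terms of the remainder, and observe that the resulting quadratic in the remainder is at most $|\mc{B}|^2/4$. The only difference is cosmetic—the paper works with the ratio $|\mc{A}|/|\mc{B}|$ and its fractional part, while you work directly with the integer remainder, which makes the algebra a bit cleaner.
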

\begin{proof}
Assume that $\mc{A} = \{0, \ldots, \abs{\mc{A}}-1\}$,
$\mc{B} = \{0, \ldots, \abs{\mc{B}}-1\}$ and take $f$ to be
$x \mapsto x \pmod{\abs{\mc{B}}}$. Let
$r = \abs{\mc{A}}/\abs{\mc{B}}$ and
$\ell = \abs{\mc{A}} - \floor{r} \abs{\mc{B}}$. We then have that
\begin{equation}
\begin{split}
	\frac{1}{2}\sum_{b \in
	\mc{B}}\abs{\frac{\abs{f^{-1}(b)}}{\abs{\mc{A}}} -
	\frac{1}{\abs{\mc{B}}}}
	&=
	\frac{1}{2}
	\left(
		\ell
		\cdot
		\abs{\frac{\floor{r}+1}{\abs{\mc{A}}} - \frac{1}{\abs{\mc{B}}}}
		+
		(\abs{\mc{B}}-\ell) \cdot
		\abs{\frac{\floor{r}}{\abs{\mc{A}}} - \frac{1}{\abs{\mc{B}}}}
	\right)
	\\&=
	1
	-
	r
	+
	2\floor{r} - \frac{\floor{r}}{r} - \frac{\floor{r}^2}{r}
\end{split}
\end{equation}
where we can remove the absolute values by noting that
$\abs{\mc{B}}(\floor{r}+1) - \abs{\mc{A}} \geq 0$ and
$\abs{\mc{B}}\floor{r} - \abs{\mc{A}} \leq 0$. Letting $r = w + p$ for
$w \in \N$ and $0 \leq p < 1$ and noting that $\floor{r} = w$, we have
that
\begin{equation}
	1
	-
	r
	+
	2\floor{r} - \frac{\floor{r}}{r} - \frac{\floor{r}^2}{r}
	=
	\frac{p - p^2}{r}
	\leq
	\frac{1}{4r}
\end{equation}
where the equality is obtained by direct calculation and the inequality
by noting that $p - p^2 \leq \frac{1}{4}$.
\end{proof}

\begin{lemma}
\label{th:QAS-key-change}
Let $
	\textsf{S}
	=
	\{\left(\textsf{Auth}_k, \textsf{Ver}_k\right)\}_{k \in \mc{K}}
$ be an $\epsilon$-total QAS. Let $\mc{K}'$ be a finite set and
$f : \mc{K}' \to \mc{K}$ be an $\epsilon'$-uniform map. Finally, for
every $k' \in \mc{K}'$, we define
\begin{equation}
	\textsf{fAuth}_{k'} = \textsf{Auth}_{f(k')}
	\qq{and}
	\textsf{fVer}_{k'} = \textsf{Ver}_{f(k')}.
\end{equation}
Then, $
	\textsf{fS}
	=
	\{\left(\textsf{fAuth}_k, \textsf{fVer}_k\right)\}_{k \in \mc{K}'}
$ is an $(\epsilon + \epsilon')$-total QAS.
\end{lemma}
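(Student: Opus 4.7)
The plan is to use the simulator given by the $\epsilon$-totalness of $\textsf{S}$ as the simulator for $\textsf{fS}$, and then bound the trace distance using the key register and the $\epsilon'$-uniformity of $f$. Specifically, given any CPTP attack map $\Phi$ on $\textsf{fS}$, note that $\Phi$ is also a valid attack on $\textsf{S}$ (the two schemes differ only in how keys are indexed). So by $\epsilon$-totalness of $\textsf{S}$, there exists a completely positive trace non-increasing $\Psi$ such that, for every state $\rho$,
\begin{equation}
\E_{k \in \mc{K}} \ketbra{k} \otimes \textsf{Ver}'_k \circ \Phi \circ \textsf{Auth}_k(\rho)
\approx_\epsilon
\E_{k \in \mc{K}} \ketbra{k} \otimes \textsf{Ver}'_k \circ \Psi \circ \textsf{Auth}_k(\rho).
\end{equation}
I would take this same $\Psi$ as the simulator for the attack $\Phi$ against $\textsf{fS}$.

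Next, I would invoke \cref{th:trace-distance-orthogonal} (in its expectation form, applied with orthogonal key states) to convert both the above trace distance and the one we want to bound for $\textsf{fS}$ into expectations of trace distances on the message-auxiliary register alone. Writing $X_k = \textsf{Ver}'_k \circ \Phi \circ \textsf{Auth}_k(\rho)$ and $Y_k = \textsf{Ver}'_k \circ \Psi \circ \textsf{Auth}_k(\rho)$, the security guarantee of $\textsf{S}$ becomes $\E_{k \in \mc{K}} \Delta(X_k, Y_k) \leq \epsilon$, while the quantity we want to bound for $\textsf{fS}$ is $\E_{k' \in \mc{K}'} \Delta(X_{f(k')}, Y_{f(k')})$.

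To relate these two expectations, I would apply \cref{th:eps-expectation} to the function $g : \mc{K} \to [0,1]$ given by $g(k) = \Delta(X_k, Y_k)$. This function does take values in $[0,1]$ because both $X_k$ and $Y_k$ are subnormalized states (outputs of completely positive trace non-increasing maps applied to a state), so $\Delta(X_k, Y_k) \leq \tfrac{1}{2}(\Tr X_k + \Tr Y_k) \leq 1$. Since $f$ is $\epsilon'$-uniform, \cref{th:eps-expectation} yields $\big|\E_{k' \in \mc{K}'} g(f(k')) - \E_{k \in \mc{K}} g(k)\big| \leq \epsilon'$, hence
\begin{equation}
\E_{k' \in \mc{K}'} \Delta(X_{f(k')}, Y_{f(k')}) \leq \E_{k \in \mc{K}} \Delta(X_k, Y_k) + \epsilon' \leq \epsilon + \epsilon'.
\end{equation}
Reassembling the key register via \cref{th:trace-distance-orthogonal} gives the required $(\epsilon+\epsilon')$-bound on the trace distance defining $(\epsilon+\epsilon')$-total authentication for $\textsf{fS}$.

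The only subtlety, and the step most likely to hide an error, is the reliance on \cref{th:trace-distance-orthogonal} in expectation form: it is crucial that the key register holds orthogonal pure states $\ket{k'}$ (or $\ket{k}$) so that the trace distance decouples exactly into a sum over the key. Everything else is an essentially mechanical composition of the security of $\textsf{S}$ with the statistical closeness of the induced key distribution, and no further work on the authentication primitives themselves is needed.
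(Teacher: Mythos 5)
Your proof is correct and follows essentially the same route as the paper's: reuse the simulator $\Psi$ guaranteed by the $\epsilon$-totalness of $\textsf{S}$, decouple the key register via \cref{th:trace-distance-orthogonal}, and transfer the expectation from $\mc{K}$ to $\mc{K}'$ via \cref{th:eps-expectation}. The one small thing you add that the paper leaves implicit is the explicit check that $g(k) = \Delta(X_k, Y_k) \in [0,1]$, which is indeed a hypothesis of \cref{th:eps-expectation} and worth noting.
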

\begin{proof}
Let $\Phi$ be an attack against the $\textsf{fS}$ scheme. Then, it is
also a valid attack against the $\textsf{S}$ scheme. As $\textsf{S}$ is
an $\epsilon$-total QAS, there exists a completely positive
trace non-increasing map $\Psi$ such that
\begin{equation}
\label{eq:fauth-1}
	\E_{k \in \mc{K}}
	\ketbra{k}
	\tensor
	\textsf{Ver}'_k \circ \Phi \circ \textsf{Auth}_k
	(\rho)
	\approx_\epsilon
	\E_{k \in \mc{K}}
	\ketbra{k}
	\tensor
	\textsf{Ver}'_k \circ \Psi \circ \textsf{Auth}_k
	(\rho)
\end{equation}
for all $\rho \in \mc{D}(\tsf{MZS})$. It then suffices to show that
\begin{equation}
\label{eq:fauth-2}
	\E_{k' \in \mc{K}'}
	\ketbra{k'}
	\tensor
	\textsf{fVer}'_{k'} \circ \Phi \circ \textsf{fAuth}_{k'}
	(\rho)
	\approx_{\epsilon + \epsilon'}
	\E_{k' \in \mc{K}'}
	\ketbra{k'}
	\tensor
	\textsf{fVer}'_{k'} \circ \Psi \circ \textsf{fAuth}_{k'}
	(\rho)
\end{equation}
for all $\rho \in \mc{D}(\tsf{MZS})$ to prove the claim.

Fix a state $\rho$ and, to lighten notation, define the operators
\begin{equation}
\begin{split}
	\alpha_k
	&= 
	\textsf{Ver}'_k \circ \Phi \circ \textsf{Auth}_k(\rho)
	\text{,}\hspace{2.1cm}
	\beta_k
	= 
	\textsf{Ver}'_k \circ \Psi \circ \textsf{Auth}_k(\rho),
	\\
	\phi_{k'}
	&= 
	\textsf{fVer}'_{k'} \circ \Phi \circ \textsf{fAuth}_{k'}(\rho),
	\qq{and}
	\varphi_{k'}
	= 
	\textsf{fVer}'_{k'} \circ \Psi \circ \textsf{fAuth}_{k'}(\rho).
\end{split}
\end{equation}
Next, note that $\phi_{k'} = \alpha_{f(k')}$,
$\varphi_{k'} = \beta_{f(k')}$, and that by
\cref{th:trace-distance-orthogonal}, the
inequalities in \cref{eq:fauth-1,eq:fauth-2} are equivalent to
\begin{equation}
	\E_{k} \Delta(\alpha_k, \beta_k) \leq \epsilon
\qq{and}
	\E_{k'} \Delta(\phi_{k'}, \varphi_{k'}) \leq \epsilon + \epsilon'
\end{equation}
respectively. Finally, by \cref{th:eps-expectation}, we have that
\begin{equation}
\begin{split}
	\abs{
		\E_{k} \Delta(\alpha_k, \beta_k)
		-
		\E_{k'} \Delta(\phi_{k'}, \varphi_{k'})
	}
	=
	\abs{
		\E_{k} \Delta(\alpha_k, \beta_k)
		-
		\E_{k'} \Delta(\alpha_{f(k')}, \beta_{f(k')})
	}
	\leq
	\epsilon'
\end{split}
\end{equation}
which implies that $
	\E_{k'} \Delta(\phi_{k'}, \varphi_{k'})
	\leq
	\E_{k} \Delta(\alpha_k, \beta_k)
	+
	\epsilon'
	\leq
	\epsilon + \epsilon'
$
which completes the proof.
\end{proof}

Next, we recall a theorem which states that any unitary $2$-design can
be used to construct a total QAS. For our needs, it suffices to know
that a unitary $2$-design on $n$ qubits is a set of unitary operators on
$n$-qubits satisfying certain conditions (see \cite{DCEL09}).

\begin{theorem}[{\cite{AM17}}]
\label{th:AM17}
A unitary $2$-design on $n+t$ qubits can be used to construct a
$\left(2^{\frac{6-t}{3}}\right)$-total QAS for $n$ qubits where the key
set is the unitary $2$-design.
\end{theorem}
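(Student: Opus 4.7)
The plan is to build the authentication scheme by the standard trap-code-style construction from the 2-design. Given a message $\rho$ on the $n$-qubit space $\tsf{M}$ and a unitary $U$ drawn from the given 2-design on $n+t$ qubits, I would set $\tsf{QAS.Auth}_U(\rho) = U(\rho \tensor \ketbra{0^t})U^\dagger$ and let $\tsf{QAS.Ver}_U$ apply $U^\dagger$, measure the last $t$ qubits in the computational basis, and output the first $n$ qubits together with $\ket{\text{Acc}}$ iff the measured outcome is $0^t$ (and the maximally mixed state with $\ket{\text{Rej}}$ otherwise, as in \cref{eq:QAS-assumption-ver}). The key set is, by construction, exactly the set of unitaries in the 2-design. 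Correctness in the sense of \cref{df:QAS} is immediate because $U^\dagger U = I$, and the scheme already has the isometry form assumed in \cref{eq:QAS-assumption-iso}.

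For security, I would start from an arbitrary attack CPTP map $\Phi : \mc{L}(\tsf{YZ}) \to \mc{L}(\tsf{YZ})$ and construct the required simulator $\Psi$ as the ``identity-on-$\tsf{Y}$'' component of $\Phi$ — intuitively, the channel induced on $\tsf{Z}$ when the $\tsf{Y}$ input is traced out after applying $\Phi$ to a maximally mixed state on $\tsf{Y}$. To prove the total authentication bound in \cref{eq:total-authentication}, I would expand both sides as averages over $U$ of expressions built from $U$, $U^\dagger$, a Kraus decomposition of $\Phi$, and the trap projector $\ketbra{0^t}$, and then invoke the defining property of the 2-design to replace averages over the design with Haar averages. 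This reduces the analysis to twirls of the form $\int U^\dagger M U \tensor U^\dagger N U \, dU$, which by Schur--Weyl duality admit a closed form as a linear combination of the identity and the swap on two copies of $\tsf{Y}$, with coefficients of order $2^{-(n+t)}$.

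Contracting the trap projector against the swap twirl produces terms whose magnitude is suppressed by factors of roughly $2^{-t}$, quantifying the fact that an adversary attempting to leave and re-enter the authenticated subspace must essentially ``guess'' the trap. The main obstacle will be converting the natural $L^2$-type bound coming from this twirl into the trace-distance ($L^1$) bound required by \cref{df:QAS-security}; I expect to do this by a H\"older or Cauchy--Schwarz step after dilating to a larger reference system, using \cref{th:trace-distance-orthogonal} to handle the key register $\ketbra{k}$ and carefully tracking the subnormalization of $\tsf{QAS.Ver}'_k$. It is precisely this $L^2$-to-$L^1$ conversion that produces a cube-root-in-the-exponent scaling rather than a naive $2^{-t/2}$: the exponent $(6-t)/3$ reflects the interplay between a twirl bound of order $2^{-t}$ and a dimension factor of order $2^{n+t}$ introduced when passing to trace norm, after which a careful bookkeeping of constants should yield the stated $2^{(6-t)/3}$ bound.
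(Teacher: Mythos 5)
The paper does not prove \cref{th:AM17}; it is stated as a black-box citation to Alagic and Majenz~\cite{AM17} and used only to feed into the proof of \cref{th:QAS-existence}. So there is no in-paper proof to compare your sketch against, and the honest answer is that you are reconstructing a result the authors took for granted.

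On the merits of your sketch: the construction is the right one (tag the message with $t$ trap qubits in $\ket{0^t}$, conjugate by a unitary drawn from the $2$-design, verify by inverting and measuring the traps), and the high-level ingredients — a Kraus decomposition of the attack, replacing the design average by a Haar twirl, and Schur--Weyl to reduce twirls to combinations of $I$ and the swap — are indeed the skeleton of the Alagic--Majenz argument. However, there are two real gaps. First, your description of the simulator $\Psi$ is not right as stated: the map ``apply $\Phi$ to a maximally mixed $\reg{Y}$ and trace out $\reg{Y}$'' produces a CPTP map on $\reg{Z}$ built from \emph{all} Pauli components of $\Phi$'s Kraus operators, whereas the simulator in~\cite{AM17} is built only from the identity-on-$\reg{Y}$ component, and moreover must be allowed to be trace non-increasing (it can abort). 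Getting this exactly right matters, because the entire point of the total-authentication definition is that the simulator's side information is what the adversary learns even conditioned on acceptance, and the wrong simulator makes the trace-distance claim false. Second, your bookkeeping for the exponent cannot be correct as described: the stated bound $2^{(6-t)/3}$ has \emph{no dependence on $n$}, so any argument that introduces an uncancelled factor of order $2^{n+t}$ when passing from an $L^2$-type twirl estimate to the trace norm would produce an $n$-dependent bound and therefore does not prove the theorem. The $L^2$-to-$L^1$ conversion in~\cite{AM17} is done so that the ambient dimension drops out; your sketch flags this step as the crux but the heuristic you give for how the constants combine is dimensionally inconsistent and would need to be replaced by the actual argument.
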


\begin{theorem}[{\cite{CLLW16}}]
\label{th:2-design}
There exists a unitary $2$-design on $n$ qubits with $2^{5n} - 2^{3n}$
elements.
\end{theorem}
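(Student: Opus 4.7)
The plan is to exhibit an explicit set of unitaries $S \subset \mathcal{U}(2^n)$ with $|S| = 2^{5n}-2^{3n}$ and verify that $S$ forms a unitary $2$-design. Since the statement is cited from \cite{CLLW16}, any proof ultimately follows their explicit construction; the following outlines how I would approach it.

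First I would parametrize $S$. The factored form $2^{5n}-2^{3n} = 2^{3n}(2^{2n}-1)$ strongly suggests indexing $S$ by pairs $(\alpha,\beta)$ where $\alpha$ ranges over a set of size $2^{3n}$ (for instance the Pauli group on $n$ qubits modulo phases, which has $4^n = 2^{2n}$ elements, suitably augmented) and $\beta$ ranges over a set of size $2^{2n}-1$ (for example the nonzero vectors of $\mathbb{F}_2^{2n}$, viewed as parameters of a linear transformation). The candidate unitaries $U_{\alpha,\beta}$ would be structured Clifford-like operators built by combining a Pauli factor selected by $\alpha$ with a permutation-type factor selected by $\beta$. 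The degenerate case $\beta = 0$ is excluded, which accounts for the $-2^{3n}$ correction in the count.

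Next, I would verify the $2$-design property by checking the defining moment condition: for every linear operator $X$ on $(\mathbb{C}^{2^n})^{\otimes 2}$,
\[
\frac{1}{|S|}\sum_{U \in S} U^{\otimes 2} X (U^\dagger)^{\otimes 2}
\;=\;
\int_{\mathcal{U}(2^n)} U^{\otimes 2} X (U^\dagger)^{\otimes 2}\, dU.
\]
By Schur--Weyl duality, the right-hand side is the projection of $X$ onto the commutant of the diagonal action of $\mathcal{U}(2^n)$, namely the algebra spanned by the identity and the swap operator, with coefficients determined by $\Tr(X)$ and $\Tr(\mathrm{SWAP}\cdot X)$. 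The verification then reduces to expanding $X$ in the Pauli basis and showing, via orthogonality of Pauli characters, that the finite sum over $S$ matches the Haar twirl Pauli-by-Pauli. The exclusion of the $2^{3n}$ degenerate parameters is precisely what causes the spurious nontrivial contributions to cancel.

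The main obstacle will be this exact verification of the moment condition. Approximate $2$-designs follow from generic concentration arguments, but an exact design requires clean algebraic cancellation. Showing that the chosen structured subset achieves these cancellations, and that removing exactly $2^{3n}$ degenerate tuples is both necessary (to avoid overcounting the trivial Pauli contribution) and sufficient (so that every non-identity Pauli twirls to zero), is the delicate step; this is the core combinatorial insight of the construction in \cite{CLLW16}. The element count itself is then an immediate consequence of the parametrization, namely $2^{3n}$ choices of $\alpha$ times $2^{2n}-1$ nonzero choices of $\beta$.
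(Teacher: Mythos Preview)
The paper does not prove this statement; it is quoted verbatim as a black-box result from \cite{CLLW16} and used only through its conclusion (the existence of a $2$-design of the stated cardinality) inside the proof of \cref{th:QAS-existence}. There is therefore no proof in the paper to compare your attempt against.

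As for the sketch itself: it is too vague to be a proof, and the proposed parametrization does not hold together. You want a factor of size $2^{3n}$ and suggest the Pauli group modulo phases, but you immediately note that this has $4^n=2^{2n}$ elements and wave at it being ``suitably augmented'' without saying how; the missing factor of $2^n$ is never accounted for, and the role of the $2^{2n}-1$ nonzero $\beta$'s is left at the level of ``permutation-type factor.'' The verification paragraph is the generic Schur--Weyl template that applies to any candidate $2$-design and contains no information specific to this construction. If you want to actually prove the theorem rather than cite it, you will need to reproduce the explicit family from \cite{CLLW16} and carry out the twirl computation for that family; the outline you have written does not yet identify what that family is.
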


Finally, we give the proof of \cref{th:QAS-existence}. Recall that
\cref{th:QAS-existence} state that for any strictly positive $n$ and
$k$, there exists a $\left(5\cdot2^{\frac{5n-k}{16}}\right)$-total QAS
on $n$ qubits with key set $\{0,1\}^k$.

\begin{proof}[Proof of \cref{th:QAS-existence}]
We consider two cases: when $k \geq 5a + 38$ and when $k < 5a + 38$.

If $k < 5a + 38$, then $5\cdot2^{\frac{5a-k}{16}} \geq 1$ and so it
suffices to find a $1$-QAS on $n$ qubits. Taking the authentication
and verification maps to be the identity (with an extra output of an
accept flag in the verification map) for every key is sufficient.

We now consider the non-trivial case of $k \geq 5a + 38$. From
\cref{th:2-design,th:AM17}, we have the existence of a
$\left(2^{\frac{6-t}{3}}\right)$-QAS on $n$ qubits with a key set of
size $2^{5(n+t)} - 2^{3(n+t)}$. From \cref{th:eps-existence},
there exists an $\epsilon'$-uniform map from $\{0,1\}^k$ to this key set
for
\begin{equation}
	\epsilon'
	=
	\frac{1}{4} \cdot \frac{2^{5(a+t)} - 2^{3(a+t)}}{2^k}
\end{equation}
Thus, by \cref{th:QAS-key-change}, there exist an $\epsilon$-QAS on $a$
qubits with key set $\{0,1\}^k$ for
\begin{equation}
	\epsilon
	\leq
	2^{\frac{6-t}{3}}
	+
	\frac{1}{4} \cdot \frac{2^{5(a+t)} - 2^{3(a+t)}}{2^k}
	=
	2^{2-\frac{t}{3}}
	+
	2^{5a + 5t - k -2}
	-
	2^{3a + 3t - k - 2}
	<
	2^{2-\frac{t}{3}}
	+
	2^{5a + 5t - k -2}
\end{equation}
where we obtain the last inequality by simply dropping the third term
which will allow us to simplify our upcoming analysis. It then suffices
to find the optimal value of $t$ for given and fixed $a$ and $k$.

Direct methods (i.e. first and second derivatives) yield that the
optimal value of $t$ to minimize the above upper bound is
\begin{equation}
	t_\text{opt} = \frac{12 - 3 \log_2(15) + 3k - 15a}{16}.
\end{equation}
However, from \cref{th:AM17}, $t$ must be a
non-negative integer. Thus, we take $t = \floor{t_\text{opt}}$.
The condition that $k \geq 5a + 38$ also yields that $t$ is strictly
positive.

Finally, substituting this choice of $t$ in our upper bound of
$\epsilon$, we have that
\begin{equation}
	\epsilon
	<
	2^{2 - \frac{\floor{t_\text{opt}}}{3}}
	+
	2^{5a + 5\floor{t_\text{opt}} - k - 2}
	\leq
	2^{2 - \frac{t_\text{opt}-1}{3}}
	+
	2^{5a + 5t_\text{opt} - k - 2}
	=
	\frac{2^{\frac{23}{4}}}{15^\frac{15}{16}}\cdot2^{\frac{5a-k}{16}}
	<
	5 \cdot 2^{\frac{5a-k}{16}}
\end{equation}
which is the desired result.
\end{proof}

\section{Proofs for {\cref{sec:SSLPF-to-SSLCC}}} %
\label{sc:gen-proofs}                    %
In this appendix, we give the proof of \cref{thm:SSLPF-to-SSLCC}. We do this
in two parts: we first prove correctness, followed by security.

\begin{lemma}
If for all $f\in F$, the scheme $\SSLPF$ is $\eta$-correct with respect
to the family of distributions $\{T_{f,y}^{PF}\}_{y\in\{0,1\}^m}$, then
the scheme $\SSLCC$ described in \cref{sec:SSLPF-to-SSLCC} is $\eta$-correct with respect to the family of distributions $\{T_{f,y}^{CC}\}_{(f,y)\in \calF}$. 
\end{lemma}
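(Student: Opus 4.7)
The plan is to unpack the definition of $\eta$-correctness for $\SSLCC$ from \cref{def:ssl-correct} into its two parts (correctness of evaluation and correctness of verification) and reduce each directly to the corresponding correctness property of $\SSLPF$, using the construction of $\SSLCC$ from $\SSLPF$ given in \cref{sec:SSLPF-to-SSLCC}.

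The key observation, which makes both reductions immediate, is that by definition of $\CC_y^f$ and $P_y$, for every $x \in \{0,1\}^n$ we have $\CC_y^f(x) = P_y(f(x))$. Combined with the fact that $\SSLCCLease(\sk,(f,y))$ produces the pair $(f,\rho)$ with $\rho = \SSLPFLease(\sk,P_y)$, and $\SSLCCEval((f,\rho),x)$ simply returns $\SSLPFEval(\rho,f(x))$, we obtain
\begin{equation}
	\E_{x \leftarrow T^{CC}_{f,y}} \Tr\left(\ketbra{\CC_y^f(x)}\SSLCCEval((f,\rho),x)\right)
	=
	\E_{x \leftarrow T^{CC}_{f,y}} \Tr\left(\ketbra{P_y(f(x))}\SSLPFEval(\rho,f(x))\right).
\end{equation}
By the definition of $T^{PF}_{f,y}$, sampling $x \leftarrow T^{CC}_{f,y}$ and computing $f(x)$ yields a sample from $T^{PF}_{f,y}$, so this equals $\E_{z \leftarrow T^{PF}_{f,y}} \Tr\left(\ketbra{P_y(z)}\SSLPFEval(\rho,z)\right)$, which is at least $1-\eta$ by the assumed correctness of evaluation for $\SSLPF$ with respect to $\{T^{PF}_{f,y}\}_{y}$.

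For the verification condition, we use directly that $\SSLCCVerify(\sk,(f,y),\sigma)$ is defined to be $\SSLPFVerify(\sk,P_y,\sigma)$, so
\begin{equation}
	\Tr\left(\ketbra{1}\SSLCCVerify(\sk,\rho,(f,y))\right)
	=
	\Tr\left(\ketbra{1}\SSLPFVerify(\sk,P_y,\rho)\right)
	\geq
	1-\eta,
\end{equation}
again by the $\eta$-correctness of $\SSLPF$. Together these two bounds establish that $\SSLCC$ is $\eta$-correct with respect to $\{T^{CC}_{f,y}\}_{(f,y)\in\calF}$. There is no genuine obstacle here: the entire argument is a bookkeeping exercise that tracks how the distribution $T^{PF}_{f,y}$ was defined as the pushforward of $T^{CC}_{f,y}$ under $f$, and how the identity $\CC_y^f = P_y \circ f$ lets us replace each step of $\SSLCC$ by its $\SSLPF$ counterpart with no loss.
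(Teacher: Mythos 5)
Your proof is correct and essentially identical to the paper's: both split $\eta$-correctness into the evaluation and verification conditions, use the identity $\CC_y^f(x) = P_y(f(x))$ together with the fact that $T^{PF}_{f,y}$ is the pushforward of $T^{CC}_{f,y}$ under $f$ to reduce the evaluation bound, and reduce the verification bound directly via $\SSLCCVerify(\sk,(f,y),\cdot) = \SSLPFVerify(\sk,P_y,\cdot)$.
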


\begin{proof}
The proof proceeds by considering the definition of correctness for $\SSLCC$ and $\SSLPF$ from~\Cref{def:ssl-correct}. Let $\sk$ denote the secret key generated by $\SSLCCGen=\SSLPFGen$. For $(f,y) \in \calF$, observe that $\SSLCCLease(\sk,(f,y))$ outputs the program $(f, \rho)$ where $\rho = \SSLPFLease(\sk, P_y)$. Thus the probability that $\SSLCCEval$, given the program $(f,\rho)$ and input $x\leftarrow T_{f,y}^{CC}$, outputs the correct value $\CC_y^f(x)=P_y(f(x))$ is:
	\begin{equation}
\begin{split}
		& \E_{x \leftarrow T_{f,y}^{CC}} \Tr\left( \ketbra{\CC_y^f(x)}\SSLCCEval\left( (f, \rho), x \right)\right) \\
={}& \E_{x \leftarrow T_{f,y}^{CC}} \Tr\left( \ketbra{P_y(f(x))}\SSLPFEval\left(\rho, f(x) \right)\right) \\
={}& \E_{x \leftarrow T_{f,y}^{CC}} \Tr\left( \ketbra{P_y(f(x))}\SSLPFEval\left(\SSLPFLease(\sk,P_y), f(x) \right)\right) 	\\
={}& \E_{z \leftarrow T_{f,y}^{PF}} \Tr\left( \ketbra{P_y(z)}\SSLPFEval\left(\SSLPFLease(\sk,P_y), z\right)\right)
 \geq 1 - \eta,
	\end{split}\label{eq:corrSSLCC1}
\end{equation}
	where the third equality used the definition of $T_{f,y}^{PF}$ with $z = f(x)$ and the last inequality uses that $\SSLPF$ is $\eta$-correct with respect to $\{T_{f,y}^{PF}\}_{y\in\{0,1\}^m}$.

	To satisfy the correctness requirement for $\SSLCCVerify$, note that the probability that $\SSLCCVerify$ accepts $\rho=\SSLPFLease(\sk, P_y)$ is:
	 \begin{align}
	 \label{eq:corrSSLCC2}
		\Tr(\ketbra{1}\SSLCCVerify(\sk,  (f, y), \rho)) = \Tr(\ketbra{1}\SSLPFVerify(\sk, P_y, \rho)) \geq 1 - \eta,
	\end{align}
	where the last inequality uses the $\eta$-correctness of $\SSLPF$. Putting together~\Cref{eq:corrSSLCC1,eq:corrSSLCC2}, we can conclude that $\SSLCC$ is $\eta$-correct with respect to the ensemble $\{T_{f,y}^{CC}\}_{(f,y)\in\calF}$. 
\end{proof}

We now move to the security guarantees for the scheme $\SSLCC$. Observe that, the program distribution $D$ for compute-and-compare programs is a distribution on $(f,y)\in \calF$. For every $f$, we obtain a marginal distribution on $y$, which we denote by $D_f$. These could be different in each case. Similarly, the set of challenge distributions for compute-and-compare functions, $\{D_{f,y}^{CC}\}_{(f,y)\in\calF}$, has a distribution over $\{0,1\}^n$ for each compute-and-compare function $(f,y)\in\calF = F\times\{0,1\}^m$. A set of challenge distributions for point functions would have a distribution over $\{0,1\}^m$ for each point function $P_y$ such that $y\in\{0,1\}^m$. For every fixed $f$, we can obtain such a distribution $D_{f,y}^{PF}$ by sampling $x\leftarrow D_{f,y}^{CC}$ and outputting $f(x)$ (as defined in \cref{thm:SSLPF-to-SSLCC}). We require the scheme $\SSLPF$ to be secure against program distribution $D_f$ and challenge distributions $\{D_{f,y}^{PF}\}_{y\in\{0,1\}^m}$ for \emph{every} $f\in F$. 
We then get the following theorem showing the security of the scheme
$\SSLCC$ constructed in \cref{sec:SSLPF-to-SSLCC}. 

\begin{lemma}[Restatement Theorem 6 of \cite{CMP20}]
If for all $f\in F$ the scheme $\SSLPF$ is $\epsilon_f$-secure (with
$\epsilon_f$ defined in \cref{eq:eps_f}), with respect to program
distribution $D_f$ and challenge distributions
$\{D_{f,y}^{PF}\}_{y\in\{0,1\}^m}$, then the scheme $\SSLCC$ described
in \cref{sec:SSLPF-to-SSLCC} is $\epsilon$-secure with respect to $D$ and $\{D_{f,y}^{CC}\}_{(f,y)\in\calF}$. 
\end{lemma}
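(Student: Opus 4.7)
The plan is to argue by contrapositive. Suppose we have an adversary $\adv_{CC}$ against $\SSLCC$ that wins $\ExperimentSSL_{\adv_{CC},\SSLCC}$ with probability strictly greater than $\pind_{D, \{D_{f,y}^{CC}\}_{(f,y)}} + \epsilon$. For each $f \in F$, I will construct an adversary $\adv_{PF}^{(f)}$ against $\SSLPF$ that runs $\adv_{CC}$ as a subroutine, and then use an averaging argument over $f$ drawn from the marginal of $D$ to identify a specific $f$ for which $\adv_{PF}^{(f)}$ achieves advantage exceeding $\epsilon_f$ against $\SSLPF$, contradicting the assumed $\epsilon_f$-security.

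The reduction leverages three identities baked into the construction of $\SSLCC$ from $\SSLPF$ in \cref{sec:SSLPF-to-SSLCC}: for every secret key $\sk$, every $f \in F$, and every $y \in \{0,1\}^m$, we have $\SSLCCLease(\sk,(f,y)) = (f, \SSLPFLease(\sk, P_y))$, $\SSLCCVerify(\sk,(f,y),\cdot) = \SSLPFVerify(\sk, P_y, \cdot)$, and $\SSLCCEval((f,\cdot), x) = \SSLPFEval(\cdot, f(x))$. Thus $\adv_{PF}^{(f)}$ receives $\rho = \SSLPFLease(\sk, P_y)$ for some $y \sim D_f$, prepends $f$, and hands the simulated $\SSLCC$ program $(f, \rho)$ to $\adv_{CC}$. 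When $\adv_{CC}$ returns a register for verification, $\adv_{PF}^{(f)}$ forwards it to the $\SSLPF$ challenger; by the verification identity, the two challengers accept with the same probability on the same input state. Upon acceptance, $\adv_{PF}^{(f)}$ receives $z \sim D_{f,y}^{PF}$, samples some $x$ with $f(x) = z$ from the appropriate conditional distribution to give to $\adv_{CC}$, and relays $\adv_{CC}$'s output bit to the challenger. The winning-condition equivalence $\CC_y^f(x) = P_y(f(x)) = P_y(z)$ ensures that $\adv_{PF}^{(f)}$ is correct in the $\SSLPF$ game exactly when $\adv_{CC}$ is correct in the simulated $\SSLCC$ game.

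I expect the main obstacle to be the sampling of $x$ given $z$ and $f$: the true conditional distribution $D_{f,y}^{CC}(\cdot \mid f(x)=z)$ generally depends on the hidden $y$. Following the strategy of \cite{CMP20}, I would resolve this by observing that $\adv_{CC}$'s success depends on $x$ only through the value $f(x) = z$ (since the winning condition factors through $f$), so the relevant marginal is the joint distribution of $(y, f(x))$, which matches by the very definition of $D_{f,y}^{PF}$ as the pushforward of $D_{f,y}^{CC}$ under $f$; equivalently, $\adv_{PF}^{(f)}$ can simulate by sampling $x \sim D_{f,y}^{CC}$ internally and using $f(x)$ as the effective $\SSLPF$ challenge. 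Any residual discrepancy between the two trivial strategies is exactly the additive correction $\pind_{D, \{D_{f,y}^{CC}\}} - \pind_{D_f, \{D_{f,y}^{PF}\}}$ built into $\epsilon_f$ via \cref{eq:eps_f}. The proof then concludes with a standard averaging: the expected winning probability of $\adv_{PF}^{(f)}$ over $f$ sampled from the marginal of $D$ equals the winning probability of $\adv_{CC}$, which by hypothesis exceeds $\pind_{D, \{D_{f,y}^{CC}\}} + \epsilon$, so some $f$ yields $\adv_{PF}^{(f)}$ with winning probability strictly greater than $\pind_{D_f, \{D_{f,y}^{PF}\}} + \epsilon_f$, the required contradiction.
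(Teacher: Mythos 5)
Your overall blueprint — a reduction by contradiction that wraps $\adv_{CC}$ inside a point-function adversary $\adv_{PF}^{(f)}$, exploits the lease/verify/eval identities of the construction, uses $\CC_y^f(x)=P_y(f(x))$ to transfer the winning condition, and fixes a witnessing $f$ by averaging over the marginal of $D$ — is exactly the path the paper follows. The step where the argument breaks is the obstacle you yourself flag: how $\adv_{PF}^{(f)}$, on receiving $z$, produces an $x\in f^{-1}(z)$ to feed $\adv_{CC}$. Your resolution rests on the claim that ``$\adv_{CC}$'s success depends on $x$ only through $f(x)=z$.'' That is false in general. The winning \emph{condition} $b=\CC_y^f(x)=P_y(z)$ does factor through $f$, but the bit $b$ that $\adv_{CC}$ outputs is the outcome of the measurement $\{\Pi_x, I-\Pi_x\}$ on its register $\reg{A}$, and $\Pi_x$ may differ arbitrarily between distinct pre-images $x,x'\in f^{-1}(z)$; so the per-$x$ success probability genuinely depends on which pre-image is chosen, not just on $z$. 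Your ``equivalent'' reformulation — internally sample $x\sim D_{f,y}^{CC}$ and treat $f(x)$ as the effective $\SSLPF$ challenge — does not repair this, for two reasons: in $\ExperimentSSL_{\adv_{PF},\SSLPF}$ the challenger, not the adversary, determines $z$, so the adversary cannot substitute an internally generated value; and sampling $x\sim D_{f,y}^{CC}$ requires knowing the secret $y$, which is precisely what you correctly observed $\adv_{PF}^{(f)}$ lacks.

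The paper's own proof instead has $\adv_{PF}$ sample $x$ from the \emph{conditional} distribution $D_{f^*,y}^{CC}\bigl(\,\cdot\mid f^*(x)=z\bigr)$, i.e., the restriction of $D_{f^*,y}^{CC}$ to $f^{*-1}(z)$, so that the induced joint law of $(y,x)$ seen by $\adv_{CC}$ is exactly $D_{f^*}(y)\,D_{f^*,y}^{CC}(x)$ — this is the change of variables that makes the two winning probabilities literally equal term by term. What that reduction implicitly needs, and what your argument must supply in place of the false ``depends only through $f(x)$'' claim, is that this conditional is samplable by $\adv_{PF}$ given only $z$ and $f^*$ — in particular, that it does not depend on the hidden $y$. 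That is an additional structural property of the challenge-distribution family, not a consequence of the winning-condition factorization. You pinpointed the right obstacle, but the justification you give for why it disappears is not sound and does not yield a valid simulation of the compute-and-compare game.
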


Note that in \cite{CMP20} the above theorem was presented for a specific
family of distributions. Our restatement is applicable to all
distributions. We provide the proof below for completeness.

\begin{proof}
The proof proceeds via contradiction by showing that an adversary $\adv_{\CC}$ that can win the game $\ExperimentSSL_{\adv_{\CC},\SSLCC}$ for compute-and-compare functions can be used to construct an adversary $\adv_{\PF}$ that can win the game $\ExperimentSSL_{\adv_{\PF},\SSLPF}$ for point functions.

Assume that $\adv_{\CC}$ can win the game $\ExperimentSSL_{\adv_{\CC},\SSLCC}$ instantiated with the program distribution $D$ and challenge distributions $\{D_{f, y}^{CC}\}_{(f,y)\in\calF}$ for the scheme $\SSLCC$ with probability $>p^{\text{triv}}_{D,\{D^{CC}_{f,y}\}_{(f,y)}}+\epsilon$.  
Then there exists some $f^*\in F$ such that the probability that $(f^*,y)$ for some $y$ is sampled from $D$ is non-zero, and 
$\adv_{\CC}$ wins the game $\ExperimentSSL$ with probability $>p^{\text{triv}}_{D,\{D^{CC}_{f,y}\}_{(f,y)}}+\epsilon$ conditioned on this event.

We construct an adversary $\adv_{\PF}$ for $\SSLPF$ that wins the
$\ExperimentSSL_{\adv_{\PF},\SSLPF}$ instantiated with the program
distribution $D_{f^*}$ and challenge distributions
$\{D_{f^*,y}^{PF}\}_{y\in\{0,1\}^m}$ with probability
strictly greater than $p^{\text{triv}}_{D_{f^*},\{D^{PF}_{f^*,y}\}_{y}}+ \epsilon_{f^*}$, which is a contradiction. 
The behaviour of $\adv_{\CC}$ can be described in two parts  (see also \cref{fig:SSL-experiment}). 
First, the adversary applies an arbitrary CPTP map $\Phi_{\adv_{\CC}}:{\cal L}(\reg{Y})\rightarrow{\cal L}(\reg{YA})$ to his input $\rho$, sending the $\reg{Y}$ part to the Lessor 
(Step 2 of $\ExperimentSSL$); and keeping the $\reg{A}$ part for himself. Later, when $\adv_{\CC}$ receives the challenge $x$, he uses it to select a two outcome measurement $\{\Pi_x,I-\Pi_x\}$ with which to measure his register $\reg{A}$ to obtain a bit $b$ (Step 5 of $\ExperimentSSL$). 
Construct an adversary $\adv_{\PF}$ so that the game $\ExperimentSSL_{\adv_{\PF},\SSLPF}$ proceeds as follows:
\begin{itemize}
\item The Lessor samples $y\leftarrow D_{f^*}$ and runs $\SSLPFGen$ to obtain a secret key $\sk$. She then sends $\rho=\SSLPFLease(\sk,P_y)$ to $\adv_{\PF}$.
\item $\adv_{\PF}$ makes use of $\Phi_{\adv_{\CC}}$, which expects, as input, an encoded compute-and-compare program output by $\SSLCCLease$. Such a program has the form $(f,\xi)$, which we use as a shorthand for $\ketbra{f}\otimes\xi$, for $f\in F$ and $\xi\in{\cal D}(\reg{Y})$, an encoded point function output by $\SSLPFLease$. In other words, the output space of $\SSLCCLease$ is $\reg{Y}'=\reg{FY}$ where $\reg{F}=\mathrm{span}\{\ket{f}:f\in F\}$. 
The pair $(f^*,\rho)$ fits this description. $\adv_{\PF}$
computes $\sigma=\Phi_{\adv_{\CC}}(f^*,\rho)\in{\cal D}(\reg{Y}'\reg{A})={\cal D}(\reg{FYA})$ and sends the $\reg{Y}$ part of $\sigma$ back to the Lessor and keeps the $\reg{FA}$ part. 
\item The Lessor runs $\SSLPFVerify(\sk,P_y,\cdot)=\SSLCCVerify(\sk,(f^*,y),\cdot)$ on $\reg{Y}$ and aborts if the resulting bit $v$ is not 1.
\item The Lessor samples $z\leftarrow D_{f^*,y}^{PF}$, which is an $m$-bit string, and sends $z$ to $\adv_{\PF}$.
\item $\adv_{\PF}$ samples $x$ according to the restriction of $D_{f^*,y}^{CC}$ to the set of pre-images $f^{*-1}(z)$.\footnote{The adversary being computationally unbounded has sufficient resources to construct ${f^*}^{-1}$ given $f^*$.} He then
measures his register $\reg{A}$ using the two-outcome measurement $\{\Pi_x,I-\Pi_x\}$ and returns the resulting bit $b$ to the Lessor.
\item The Lessor outputs 1 if and only if $b=P_y(x)$ and $v=1$. 
\end{itemize}
Let $\Pi_x^1=\Pi_x$ and $\Pi_x^0=I-\Pi_x$. Let $\Psi_{\sf Ver}^{P_y}$ be the map induced by $\SSLPFVerify(\sk,P_y,\cdot)$, and $\Psi_{\sf Ver}^{\CC^{f^*}_y}$ the map induced by $\SSLCCVerify(\sk,(f^*,y),\cdot)$, and note that these are the same map. 
The winning probability is given by:
\begin{equation}
\begin{split}
\Pr[\ExperimentSSL_{\adv_{\PF},\PF}]=&\sum_{y,z\in\{0,1\}^m}D_{f^*}(y)D_{f^*,y}^{PF}(z)\Tr\left( (\ketbra{1}\otimes \Pi_x^{P_y(z)})(\Psi_{\sf Ver}^{P_y}\otimes\Id_{\sf FA})\circ\Phi_{\adv_{\CC}}(f^*,\rho) \right).
\end{split}
\end{equation}
Notice that the string $x$ is distributed according to $D_{f^*,y}^{CC}$, so we can rewrite this probability as:
\begin{equation}
\begin{split}
 \sum_{y\in\{0,1\}^m,x\in\{0,1\}^n}D_{f^*}(y)D_{f^*,y}^{CC}(x)\Tr\left( (\ketbra{1}\otimes \Pi_x^{P_y(f(x))})(\Psi_{\sf Ver}^{\CC^{f^*}_y}\otimes\Id_{\sf FA})\circ\Phi_{\adv_{\CC}}(f^*,\rho) \right)\\
={}\sum_{y\in\{0,1\}^m,x\in\{0,1\}^n}D_{f^*}(y)D_{f^*,y}^{CC}(x)\Tr\left( (\ketbra{1}\otimes \Pi_x^{\CC_y^f(x)})(\Psi_{\sf Ver}^{\CC^{f^*}_y}\otimes\Id_{\sf FA})\circ\Phi_{\adv_{\CC}}(f^*,\rho) \right).
\end{split}
\end{equation}
Finally, note that $(f^*,\rho)=\SSLCCLease(\sk,(f^*,y))$, so this is
exactly the probability of $\adv_{\CC}$ winning the game
$\ExperimentSSL_{\adv_{\CC},\CC}$
conditioned on $f^*$ being the sampled function, which is
$>p^{\text{triv}}_{D,\{D^{CC}_{f,y}\}_{(f,y)}}+\epsilon$ by assumption.
Thus,
\begin{equation}
\begin{split}
	\Pr[\ExperimentSSL_{\adv_{\PF},\PF}]
	&>
	p^{\text{triv}}_{D,\{D^{CC}_{f,y}\}_{(f,y)}}+\epsilon
	\\&=
    p^{\text{triv}}_{D_{f^*},\{D^{PF}_{f^*,y}\}_{y}}
	+
	\left(
		p^{\text{triv}}_{D,\{D^{CC}_{f,y}\}_{(f,y)}}
		-
		p^{\text{triv}}_{D_{f^*},\{D^{PF}_{f^*,y}\}_{y}}
	\right)
	+
	\epsilon
	\\&=
	p^{\text{triv}}_{D_{f^*},\{D^{PF}_{f^*,y}\}_{y}}
	+
	\epsilon_{f^*}	
\end{split}
\end{equation}
which is a contradiction.
\end{proof}

\bibliographystyle{bibtex/bst/alphaarxiv}
\bibliography{bibtex/bib/full,bibtex/bib/quantum,bibtex/bib/quantum-more}

\end{document} 